\documentclass{article}

\title{Polynomials Modulo Composite Numbers: Ax-Katz type theorems for the structure of their solution sets}

\ifodd 1
\usepackage[usenames,dvipsnames]{color}

\usepackage[bookmarks=yes,bookmarksopen,bookmarksdepth=2]{hyperref}
\hypersetup{
	colorlinks,
	bookmarksnumbered,
	citecolor=PineGreen,
	linkcolor=Bittersweet,
	urlcolor=RubineRed
	}
\makeatletter
\hypersetup{pdftitle=\@title,pdfauthor={RSKWR}}
\makeatother
\fi

\usepackage{float}
\restylefloat{figure}

\usepackage[margin=2.8cm]{geometry}

\usepackage[numbers]{natbib}
\makeatletter
\renewcommand\bibsection%
{
  \section*{\refname
    \@mkboth{\MakeUppercase{\refname}}{\MakeUppercase{\refname}}}
}
\makeatother

\pagestyle{plain}
\usepackage{geometry}
\usepackage{listings}

\usepackage{moreverb}
\usepackage{amsmath}
\usepackage{amssymb}

\usepackage{amsthm}
\usepackage{amstext}
\usepackage{float} 
\usepackage{bbm}
\usepackage{enumerate}
\usepackage{enumitem}
\usepackage{mathtools}
\usepackage{subfig}
\usepackage{array}
\usepackage{tikz}
\usetikzlibrary{trees}
\DeclarePairedDelimiter{\ceil}{\lceil}{\rceil}
\DeclarePairedDelimiter{\floor}{\lfloor}{\rfloor}
\DeclarePairedDelimiter{\p}{ \{ }{ \} }
\DeclarePairedDelimiter{\bb}{ ( }{ ) }
\renewcommand{\l}[2]{\multicolumn{2}{c|}{r:#1  n:#2}}
\newcommand{\m}[2]{\multicolumn{2}{c|}{r:#1  n:#2}}
\renewcommand{\r}[2]{\multicolumn{2}{c}{r:#1  n:#2}}
\newcommand{\mc}[1]{\multicolumn{1}{c}{#1}}

\setlength{\tabcolsep}{5pt} 
\usepackage{braket}
\usepackage{caption}
\captionsetup{belowskip=0pt,aboveskip=4pt} 
\makeatletter
\setlength{\@fptop}{5pt}
\setlength{\@fpbot}{5pt}
\makeatother



\makeatletter
\newtheorem*{rep@theorem}{\rep@title}
\newcommand{\newreptheorem}[2]{%
\newenvironment{rep#1}[1]{%
 \def\rep@title{#2 \ref{##1}}%
 \begin{rep@theorem}}%
 {\end{rep@theorem}}}
\makeatother

\newtheorem{Col}{Corollary}

\newtheorem{theorem}{Theorem}

\newreptheorem{theorem}{Theorem}
\newtheorem{lemma}{Lemma}
\newreptheorem{lemma}{Lemma}
\newreptheorem{Col}{Corollary}

\theoremstyle{plain}

\theoremstyle{plain}

\theoremstyle{definition}

\newcommand{\eg}{e.g.\ }
\newcommand{\ie}{i.e.\ }
\newcommand{\s}{\%}

\newcommand{\ZZ}{\mathbb{Z}}
\newcommand{\FF}{\mathbb{F}}
\newcommand{\PP}{\textbf{P}}
\newcommand{\xx}{\textbf{x}}

\newcommand{\TT}{\mathcal{T}}

\newcommand{\wu}{\mathfrak{w}}

\renewcommand{\mu}{\mathfrak{m}}
\newcommand{\md}{\mathbbm{m}}

\captionsetup{labelsep=period}

\allowdisplaybreaks


\definecolor{javared}{rgb}{0.6,0,0} 
\definecolor{javagreen}{rgb}{0.25,0.5,0.35} 
\definecolor{javapurple}{rgb}{0.5,0,0.35} 
\definecolor{javadocblue}{rgb}{0.25,0.35,0.75} 

\lstset{language=Java,
basicstyle=\ttfamily,
keywordstyle=\color{javapurple}\bfseries,
stringstyle=\color{javared},
commentstyle=\color{javagreen},
morecomment=[s][\color{javadocblue}]{/**}{*/},
tabsize=4,
showspaces=false,
showstringspaces=false}
\usepackage{authblk}

\author{Robert Sur\'owka}
\author{Kenneth W.\ Regan}

\affil{Department of CSE, University at Buffalo,
Amherst, NY 14260 USA
\{robertlu,regan\}@buffalo.edu}

\begin{document}
\sloppy
\maketitle
\begin{abstract}

Marshall and Ramage extended a theorem of Ax from finite fields to finite principal rings, including
the rings $\ZZ_m$ with $m$ composite. We extend their result further by showing additional symmetric structure of the solution spaces. Additionally, for the restricted case of $\ZZ_{2^r}$ and polynomials of degree up
to $2$, we demonstrate even more complex symmetries. Finally, we present experimental results showing solution spaces of polynomials for chosen rings and degrees, to facilitate further hypothesis formulation in this area. 
Polynomials modulo composites are the focus of some computational complexity lower bound frontiers, while those modulo $2^r$ arise
in the simulation of quantum circuits. We give some prospective applications of this line of inquiry.
\end{abstract}

\section{Introduction}
Let $P$ be an $n$-variable polynomial $P: \FF_{p^r}^n \to \FF_{p^r}$ over a finite field $\FF_{p^r}$. The Chevalley-Warning theorem
\cite{Chev35, War36} states that if $n > deg P \geq 1$ then $p$ divides $\#0_P$ (where $\#0_P$ denotes the number of zeros of $P$ in
$\FF_{p^r}$). Ax \cite{Ax64}, using an idea of Dwork \cite{Dwo60}, greatly improved this result, to state that
\[ 
\#0_P
\text{~~is a multiple of~~} p^{r\left(\ceil*{\frac{n}{d}}-1\right)}, 
\] 
where $d$ is the degree of $P$. This result was extended to systems
$\PP$ of $q$-many polynomials $P_i: \FF_{p^r}^n \to \FF_{p^r}$ with respective degrees $d_i$.  Letting $\#0_\PP$ be the number of their
common zeroes, Katz \cite{Kat71} proved that 
\[ 
\#0_\PP \text{~~is a multiple of~~} p^{r\ceil*{\frac{n-\sum_{i=1}^{q} d_i}{max_{i}\p*{d_i} }}}.
\] 
For
a single polynomial this is equivalent to the initial result by Ax. Additionally, the Ax-Katz theorem is known to
be optimal, in regard to the gcd of the cardinalities of the solution sets.

The result we directly build on was obtained by Marshall and Ramage \cite{mar75}. For a polynomial
$P$ over a ring $\ZZ_m$ (or even for any finite, principal ring), where $m =
{p_1^{r_1}p_2^{r_2}\ldots p_{k}^{r_{k}}}$ and all $p_1, p_2, \ldots, p_k$ are different primes, they proved that
\[ 
\#0_P  \text{~~is a multiple of~~}\prod_{i: r_i=1}  p_i^{\ceil*{\frac{n}{d}} -1}\prod_{i:r_i>1} p_i^{\ceil*{\frac{r_{i}n}{2} } -1} .
\]
The above was extended by Daniel Katz \cite{Dkat09} to find the gcd of the numbers of solutions of sets of $q$-many polynomials in $n$ variables. There had been a lot of
additional work done in the area of properties of polynomial solution spaces, especially focused and building from the Ax-Katz theorem, which is by far the most well known.

One of the popular routes was simplifying that theorem's proof. First was an `elementary' proof by Wan \cite{Wan89}, later made especially simple for prime fields \cite{Wan95}. Hou
showed how to obtain the Ax-Katz theorem by direct deduction from Ax's original theorem \cite{Hou05}. A proof requiring probably the least number-theoretic
background was presented by Wilson \cite{Wil06}.
Another results include improvements when the degrees of all variables in monomials are powers of
the characteristic of the field ($p$-weighted degree) \cite{mor95, mor04, cas12}, specializations for so-called general diagonal equations \cite{cao06, cao07}, partial results when
variables with high degrees are ignored or there are isolated variables \cite{cao11, cao12, cas12}, 
divisibility for exponential sums \cite{ado87, mor04}, situations when solutions are specific subspaces of the domain
\cite{hea11}, and many more instances.
Apart from the interest in the divisibility of the numbers of solutions, there is also a research in establishing how large such numbers
need to be, when they are non-zero. First such a result is Warning's Second Theorem \cite{War36}, followed by results of Schanuel
\cite{sch74}, Brink \cite{bri11} and, very recently, Clark, Forrow and Schmitt \cite{cla14}.  This last reference improves the bound and also explores the situation when
variables of polynomials are bounded to subsets of the domain, notably the Boolean cube $\{0,1\}^n$.

\section{Statement of the results}

Our first result applies the proof technique of Marshall and Ramage \cite{mar75} to show an
additional symmetry in the solution space. Taking $\#k_Q$ to be a number of solutions of $Q-k$ we
prove that:

\begin{theorem} \label{generalb}
For any polynomial $Q$ of $n \geq 2 $ variables $\xx$ of degree $d$ over $\ZZ_{m}$, where $m =
{p_1^{r_1}p_2^{r_2}\ldots p_{v}^{r_{v}}}$ and all $p_1, p_2, \ldots, p_v$ are different primes, and any integers $k, w_1, w_2, \ldots, w_v, q_1, q_2, \ldots, q_v$, where $q_i \leq r_i$ it
holds that:
\[
\sum\limits_{i_1=0}^{p_1^{q_1}-1} \sum\limits_{i_2=0}^{p_2^{q_2}-1} \ldots \sum\limits_{i_v=0}^{p_v^{q_v}-1} 
\#\bb*{k{+} \sum\limits_{j=1}^{v}  w_j\frac{m}{p_{j}^{q_j}}i_j}_{Q} \text{~~is a multiple of~~}  
\prod_{i: r_i=1}p_i^{\ceil*{\frac{n}{d}}+q_i-1}\prod_{i:r_i>1} p_i^{\ceil*{\frac{r_{i}n}{2}}+q_i-1}.
\]
\end{theorem}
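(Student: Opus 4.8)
The plan is to peel the modulus apart with the Chinese Remainder Theorem, reduce the inner sum over the $i_j$ to a count of solutions of $Q\equiv k$ modulo a \emph{smaller} power of each prime, and then feed that count into the Marshall--Ramage theorem \cite{mar75} as a black box. Writing $\ZZ_m\cong\prod_{j}\ZZ_{p_j^{r_j}}$, a point $\xx$ corresponds to a tuple $(\xx^{(1)},\dots,\xx^{(v)})$ and $Q(\xx)\equiv k$ holds iff it holds modulo each $p_j^{r_j}$; hence every summand factors as $\#\bb*{\,\cdot\,}_Q=\prod_j N_j(\cdot)$, where $N_j$ counts solutions over $\ZZ_{p_j^{r_j}}$. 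Setting $M_j=m/p_j^{r_j}$, the shift $\sum_\ell w_\ell\frac{m}{p_\ell^{q_\ell}}i_\ell$ is divisible by $p_j^{r_j}$ in every term except $\ell=j$, where it equals $w_j\,p_j^{\,r_j-q_j}M_j i_j$. Consequently the whole multiple sum factors as a product $\prod_j T_j$ over the primes, and the claimed divisor factors the same way, so it suffices to bound the $p$-adic valuation of each single-prime sum
\[
T=\sum_{i=0}^{p^{q}-1} N\bb*{k+w\,p^{\,r-q}M\,i},
\]
where I have dropped the subscript $j$ and $N(y)$ counts solutions of $Q\equiv y$ in $\ZZ_{p^r}^n$.

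First I would simplify $T$ by a geometric-sum (coset-counting) argument. Let $a=\min(v_p(w),q)$, where $v_p$ denotes the $p$-adic valuation. As $i$ runs over $\ZZ_{p^q}$, the arguments $k+w\,p^{\,r-q}M\,i$ sweep out each element of the coset $k+p^{\,r-q+a}\ZZ_{p^r}$ exactly $p^a$ times, so
\[
T=p^{a}\,C(s),\qquad s=r-q+a,\qquad C(s):=\#\p*{\xx\in\ZZ_{p^r}^n: Q(\xx)\equiv k \pmod{p^{s}}}.
\]
Because $Q(\xx)\bmod p^{s}$ depends only on $\xx\bmod p^{s}$ and the reduction $\ZZ_{p^r}^n\to\ZZ_{p^s}^n$ has uniform fibres of size $p^{(r-s)n}$, this count splits off a clean power, $C(s)=p^{(r-s)n}A(s)$, where $A(s)$ is the number of zeros of $Q-k$ over $\ZZ_{p^s}$. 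The Marshall--Ramage theorem then applies to $A(s)$ directly, giving $v_p(A(s))\ge \ceil*{sn/2}-1$ when $s\ge 2$, $v_p(A(1))\ge\ceil*{n/d}-1$, and $A(0)=1$.

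Combining these, $v_p(T)=a+(q-a)n+v_p(A(s))$ with $s=r-q+a$. The remaining task is purely arithmetic: check that this is at least $\ceil*{rn/2}+q-1$ (resp.\ $\ceil*{n/d}+q-1$ when $r=1$). Writing $t=q-a=r-s\ge 0$, the $s\ge 2$ regime reduces to the inequality $t n+\ceil*{(r-t)n/2}\ge\ceil*{rn/2}+t$, which in turn follows from $\floor*{tn/2}\ge t$; the $s=1$ and $s=0$ boundary cases reduce to inequalities of the form $\floor*{rn/2}\ge r+n-2$. Both hold exactly because $n\ge 2$, and they are \emph{tight} (equalities) at $n=2$, which is precisely why the hypothesis $n\ge2$ appears in the statement.

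I expect the main obstacle to be twofold. The delicate bookkeeping is the coset-counting step when $\gcd(w,p)>1$: one must track $a=\min(v_p(w),q)$ correctly, since a larger $a$ both contributes a factor $p^a$ and \emph{raises} the modulus $p^s$ at which solutions are counted, and these two effects must be balanced against one another. The second, and genuinely tight, point is the $s=1$ case, where only the weaker degree-dependent field bound $\ceil*{n/d}-1$ is available rather than $\ceil*{n/2}-1$; verifying that the target inequality survives there (it does, again thanks to $n\ge2$) is where the argument is least forgiving.
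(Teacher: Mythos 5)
Your argument is correct, but it takes a genuinely different route from the paper's. After the same CRT factorization and the same extraction of $p^{a}$ with $a=\min(v_p(w),q)$, the paper does \emph{not} pass to a smaller ring: it encodes the remaining coset sum as a zero-count of the auxiliary polynomial $H(\xx,y)=Q(\xx)-k+y$ with $y$ confined to $p^{r-e}\ZZ_{p^r}$, perturbs around a known solution via $x_i\mapsto x_i'+px_i$, $y\mapsto y'+p^{r-e}y$, and re-runs the Marshall--Ramage induction on $r$ from the inside, checking $(r-1)(n-1)+e\ge\ceil*{rn/2}+e-1$ along the way. You instead notice that $\sum_i N(k+wp^{r-q}i)$ is exactly $p^{a}$ times the number of $\xx\in\ZZ_{p^r}^n$ with $Q(\xx)\equiv k\pmod{p^s}$, $s=r-q+a$, that this count is $p^{(r-s)n}$ times the solution count over $\ZZ_{p^s}$ by uniformity of the reduction fibres, and then invoke Marshall--Ramage (or Ax at $s=1$) as a black box over the smaller ring; the theorem collapses to the inequalities $\floor*{tn/2}\ge t$ and $\floor*{rn/2}\ge r+n-2$, both of which hold for $n\ge2$ and are tight there. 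I verified the bookkeeping, including the boundary cases $s\in\{0,1\}$ (where your arithmetic only needs $v_p(A(s))\ge0$, so the possible drop in $\deg(Q\bmod p)$ is harmless) and the case $v_p(w)\ge q$ (where $s=r$ and the bound is exactly the target). Your version is shorter, makes the role of $n\ge2$ transparent, and transfers verbatim to any ring for which a divisibility bound on $\#0$ is known, since it never opens the Marshall--Ramage proof. What the paper's version buys is continuity with the perturbation technique reused in Theorem 2 and, in its main sub-case, an exact count $p^{(r-1)(n-1)+e}$ rather than only a valuation lower bound.
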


Using another approach, we obtain a result demonstrating even more symmetries, but restricting both the degree of the polynomial and the ring characteristic to $2$. 

\begin{theorem} \label{main}
(Main Theorem) For any polynomial $Q$ of $n\geq 3$ variables ($\xx, z$) over $\ZZ_{2^r}$ of degree up to $2$, any
integers $q, v\leq r$ and $k, w, g, u$ and any linear polynomial $T(\xx)$, it holds that:
\[
\sum\limits_{i=0}^{2^q-1} \sum\limits_{j=0}^{2^v-1}\#(k{+}w2^{r-q}i{+}g2^{r-v}j)_{Q, z=T(\xx)+u2^{r-v}j}   \text{ is a multiple of }
2^{\ceil*{
\frac{r(n-1)+\min(2v,r)}{2} } +q-1}.
\]
\end{theorem}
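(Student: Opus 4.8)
The plan is to attack the double sum with additive character sums over $\ZZ_{2^r}$, generalizing the Marshall--Ramage mechanism behind Theorem~\ref{generalb} so that it absorbs both the extra index $j$ and the substitution $z=T(\xx)+u2^{r-v}j$. Let $\zeta$ be a primitive $2^r$-th root of unity, so that $\#(c)_R=\frac{1}{2^r}\sum_{t=0}^{2^r-1}\zeta^{-tc}\sum_{\xx}\zeta^{tR(\xx)}$ counts solutions of $R(\xx)=c$. Since $\deg Q\le 2$, write $Q(\xx,z)=A(\xx)+zB(\xx)+Cz^2$ with $A$ quadratic, $B$ affine-linear and $C$ constant. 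Substituting $z=T(\xx)+\delta_j$ with $\delta_j=u2^{r-v}j$ gives $R_j(\xx):=Q(\xx,T(\xx)+\delta_j)=R_0(\xx)+\delta_j D(\xx)+C\delta_j^2$, where $R_0$ is quadratic in $\xx$ and $D=B+2CT$ is affine-linear. I would then write the whole quantity as
\[
N=\frac{1}{2^r}\sum_{t=0}^{2^r-1}\zeta^{-tk}\Bigl(\sum_{i=0}^{2^q-1}\zeta^{-tw2^{r-q}i}\Bigr)\sum_{j=0}^{2^v-1}\zeta^{-tg2^{r-v}j}\sum_{\xx}\zeta^{tR_j(\xx)} .
\]

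The $i$-summation should behave exactly as in Theorem~\ref{generalb}: the geometric sum $\sum_{i=0}^{2^q-1}\zeta^{-tw2^{r-q}i}$ equals $2^q$ when $2^q\mid tw$ and vanishes otherwise, contributing a clean factor $2^q$ and restricting $t$ to a subgroup; this is the source of the additive $+q-1$. The new content is the coupled $j$- and $\xx$-sums. Here $\min(2v,r)$ should emerge from a case split on the coefficient $2^{2(r-v)}$ of the $j^2$-term $C\delta_j^2=Cu^2 2^{2(r-v)}j^2$. When $2v\le r$ one has $2(r-v)\ge r$, so $\zeta^{tC\delta_j^2}=1$ and the $j$-sum collapses to $\sum_{j=0}^{2^v-1}\zeta^{t2^{r-v}j(uD(\xx)-g)}=2^v\,\mathbbm{1}[\,2^v\mid t(uD(\xx)-g)\,]$, yielding the factor $2^v=2^{\min(2v,r)/2}$ together with an affine-linear restriction on $\xx$; when $2v>r$ the $j^2$-term survives and the $j$-sum is a genuine one-variable quadratic Gauss sum modulo $2^{2v-r}$, whose $2$-adic valuation supplies the remaining $\min(2v,r)/2=r/2$. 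As a sanity check, at $v=r$ we have $\min(2v,r)=r$ and $N$ reduces to an honest degree-$\le 2$ polynomial in all $n$ variables, recovering Theorem~\ref{generalb}; at $v=0$ the $j$-structure is trivial and one is left with Theorem~\ref{generalb} in the $n-1$ variables $\xx$. A virtue of staying with character sums is that $u,w,g$ enter only through geometric and Gauss sums and indicators, so no invertibility is needed and all parities are handled uniformly.

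The main obstacle is the final $\xx$-summation. After the reductions one must lower-bound, uniformly in the restricted $t$, the $2$-adic valuation of Gauss sums $\sum_{\xx}\zeta^{t(R_0(\xx)+\text{affine shift})}$ of a quadratic form in $n-1$ variables over $\ZZ_{2^r}$, possibly restricted to an affine progression, by essentially $\ceil*{r(n-1)/2}$. This is the delicate characteristic-$2$ step: such Gauss sums are sensitive to the degeneracy of $R_0$ and to even coefficients (the $2x^2$ phenomenon from the introduction, where a square fills only half the residues), so the clean nondegenerate value must be replaced by a valuation estimate reading off the diagonalization of the quadratic part over the $2$-adic integers. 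I would diagonalize $R_0$ as far as $\ZZ_{2^r}$ allows, separate the unimodular block (contributing the generic $2^{r/2}$ per variable) from the $2$-divisible block, and track how $\nu_2(t)$ interacts with each block and with the cross term $\delta_j D(\xx)$, which in the $2v>r$ case couples the $j$- and $\xx$-sums and prevents a fully factored treatment. The hardest bookkeeping is to show that the $\min(2v,r)$ gain from the $(j,z)$-direction and the $\ceil*{r(n-1)/2}$ gain from $\xx$ combine, after the $2^{-r}$ normalization, into the single joint ceiling $\ceil*{\frac{r(n-1)+\min(2v,r)}{2}}+q-1$; because $\ceil{a}+\ceil{b}$ can exceed $\ceil{a+b}$, this forces a joint rather than factored analysis precisely when $r(n-1)$ and $\min(2v,r)$ are both odd, and care at the boundary $n=3$ is needed to avoid losing a power of $2$.
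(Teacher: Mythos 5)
Your plan follows the classical Dwork--Ax exponential-sum route, which is genuinely different from the paper's proof (the paper inducts on $n$, splitting $Q=M+mz^2$, and grounds the base case $n=3$ in a long combinatorial analysis of the image multisets of univariate quadratics over $\ZZ_{2^r}$ and their ``multiplicative'' intersections). But as written the proposal has a real gap, and it sits exactly where you place the ``main obstacle'': nothing in the sketch actually establishes the valuation lower bound for the $\xx$-sum, and that step is not routine. Two concrete problems. First, ``diagonalize $R_0$ as far as $\ZZ_{2^r}$ allows'' is not available in the form you need: over the $2$-adics a quadratic form decomposes into a diagonal part plus indecomposable binary blocks equivalent to $xy$ or $x^2+xy+y^2$, so the per-variable factorization of the Gauss sum you are counting on does not exist in general, and the degenerate (even-coefficient) blocks interact with $\nu_2(t)$ in a way that must be tracked case by case --- this is essentially the same thicket the paper fights through with its order bookkeeping. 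Second, the individual sums $\sum_{\xx}\zeta^{tR_0(\xx)}$ are not rational integers but elements of $\ZZ[\zeta]$ with $\zeta$ a primitive $2^r$-th root of unity, where $2$ is totally ramified; their natural valuations are fractional multiples of $1/2^{r-1}$, and converting ``each factor contributes $2^{r/2}$'' into an integer divisibility statement for the rational integer $N$ requires either a Stickelberger/Gross--Koblitz-type input or a summation over $t$ that regroups conjugates --- none of which is indicated. The same issue affects the claim that the $j$-sum in the case $2v>r$ ``supplies $r/2$'': a one-variable $2$-adic quadratic Gauss sum has valuation $m/2$ only up to a unit like $1+i$, and when $r$ is odd the half-integers must be shown to recombine into the single ceiling $\ceil*{\frac{r(n-1)+\min(2v,r)}{2}}$, which you correctly note cannot be done factor by factor.

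There is also a smaller unaddressed coupling: after the $i$-sum restricts $t$ and the $j$-sum (in the case $2v\le r$) imposes the congruence $2^v\mid t(uD(\xx)-g)$, the remaining $\xx$-sum is a quadratic Gauss sum restricted to the fibers of an affine-linear map, with the fiber depending on $t$; the constant term of $uD(\xx)-g$ can make the congruence unsolvable or solvable only on a coset, and this interacts with the order of the leading coefficient of $D$ exactly as in the paper's analysis of $L'(\xx)=2^vh$. So the proposal is a plausible and genuinely different attack, but the theorem is not proved by it: the statement you would need --- a uniform $\ceil*{r(n-1)/2}$-type valuation bound for degenerate quadratic Gauss sums over $\ZZ_{2^r}$ restricted to affine progressions, compatible with the joint ceiling --- is precisely the content of the theorem in disguise, and it is asserted rather than established.
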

  
The properties below easily follow, as we will show in section \ref{s:disMain}.

\begin{Col} \label{maincol}
For any polynomial $Q$ of $n\geq3$ variables ($\xx, z$) over $\ZZ_{2^r}$ of degree up to $2$, and any
integers $q, v \leq r$ and $k, w, g, l$, it holds that:
\begin{enumerate}[label=(\alph*)]
  \item 
  $\sum\limits_{j=0}^{2^v-1} \#k_{Q, z=l+g2^{r-v}j} \text{~~is a multiple of~~}2^{\ceil*{ \frac{r(n-1)+\min(2v,r)}{2} }  -1}$,  $ n \geq 3$
  \item 
  $\sum\limits_{i=0}^{2^q-1} \sum\limits_{j=0}^{2^v-1} \#(k{+}w2^{r-q}i)_{Q, z=l+g2^{r-v}j} \text{~~is a multiple of~~} 2^{\ceil*{
  \frac{r(n-1)+\min(2v,r)}{2} } +q-1}$,  $ n \geq 3$
  \item 
  $ \sum\limits_{j=0}^{2^v-1} \#(k{+}w2^{r-q}j)_{Q, z=l+g2^{r-v}j}   \text{~~is a multiple of~~}2^{\ceil*{ \frac{r(n-1)+\min(2v,r)}{2} } 
  -1}$,  $ n \geq 3$.
\end{enumerate}
\end{Col}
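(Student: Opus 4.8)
The plan is to obtain all three parts as direct specializations of Theorem~\ref{main}, reading off each one by choosing particular values for its free parameters $q,v,k,w,g,u$ and the linear polynomial $T(\xx)$. Theorem~\ref{main} already carries a double sum over two independent indices $i$ and $j$, where $i$ shifts only the target value (at scale $2^{r-q}$) while $j$ shifts both the target value (at scale $2^{r-v}$) and the constraint constant; so each corollary should appear by collapsing one sum or by zeroing one of these shifts.

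For part~(b) I would keep $q,v,w,k$ as given, take the constant linear polynomial $T(\xx)=l$, and set $u$ equal to the corollary's $g$, so the constraint reads $z=l+g2^{r-v}j$. Putting Theorem~\ref{main}'s own parameter $g$ to $0$ deletes the $j$-dependent target shift, leaving the summand $\#(k+w2^{r-q}i)_{Q,\,z=l+g2^{r-v}j}$ and the exponent $\ceil*{\frac{r(n-1)+\min(2v,r)}{2}}+q-1$ exactly as in~(b). Part~(a) is then the further specialization $q=0$: the $i$-sum collapses to $i=0$, the shift $w2^{r-q}i$ vanishes, the double sum becomes a single sum over $j$, and the exponent drops to $\ceil*{\frac{r(n-1)+\min(2v,r)}{2}}-1$, which is~(a).

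Part~(c) is the one needing real care, since there a single index $j$ must simultaneously produce a target shift $w2^{r-q}j$ and a constraint shift $g2^{r-v}j$ at the two different scales $2^{r-q}$ and $2^{r-v}$. In Theorem~\ref{main} the constraint shift is tied through $j$ to a target shift of the \emph{same} scale $2^{r-v}$, whereas the independent scale $2^{r-q}$ is attached to $i$, which carries no constraint shift. I would reconcile this by setting $q=0$ (collapsing the $i$-sum so only $j$ survives, with exponent $\ceil*{\frac{r(n-1)+\min(2v,r)}{2}}-1$) and then choosing Theorem~\ref{main}'s parameter $g$ so that its target shift $g2^{r-v}j$ matches the required $w2^{r-q}j$; the choice $g=w2^{v-q}$ gives $g2^{r-v}=w2^{r-q}$ and reproduces the summand of~(c), while $T(\xx)=l$ and $u$ equal to the corollary's $g$ restore the constraint $z=l+g2^{r-v}j$.

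I expect this scale match in~(c) to be the main obstacle: the substitution $g=w2^{v-q}$ is a genuine integer only when $q\le v$, so the regime $q>v$ (target shifted on a finer scale than the constraint) is not reached by a single collapse of the \emph{statement} of Theorem~\ref{main} and would have to be handled separately, most cleanly by specializing the indices inside its proof rather than its conclusion. The remaining work is routine bookkeeping: checking that the boundary conventions $q,v\in\{0,\dots,r\}$ and the behavior of $\min(2v,r)$ are consistent across all three specializations.
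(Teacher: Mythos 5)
Your proposal is exactly the paper's own proof: the paper derives (a), (b), (c) from Theorem~\ref{main} by the same substitutions you give ((a) $q=0$, $g=0$, $T(\xx)=l$; (b) $g=0$, $T(\xx)=l$; (c) $q=0$, $T(\xx)=l$, with the theorem's $g$ playing the role of $w2^{v-q}$ and its $u$ the role of the corollary's $g$). The scale-matching caveat you raise for part (c) is genuine but applies equally to the paper, whose one-line substitution tacitly presupposes that $w2^{r-q}$ is a multiple of $2^{r-v}$; the regime $q>v+o(w)$ is left uncovered by both arguments.
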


Part (a) says that when one variable is limited to a coset of an ideal it only
moderately decreases the divisibility---while if the coset is at least half of the ring, the divisibility does not deteriorate at all.
Parts (b) and (c), and Theorem \ref{main} overall, show that properties from Theorem \ref{generalb} and point (c) add the same degree of
divisibility even when both properties are present.  This works even in somewhat more general settings.

We should note here, that the proof technique of Marshall and Ramage \cite{mar75} easily allows to prove point (a) of the above corollary
for polynomial of unbounded degree and for $\ZZ_m$, but only when $v=r-1$.


\section{Proof of Theorem \ref{generalb}} \label{s:generalbproof}
\begin{reptheorem}{generalb}
For any polynomial $Q$ of $n \geq 2 $ variables $\xx$ over $\ZZ_{m}$, where $m = {p_1^{r_1}p_2^{r_2}\ldots p_{v}^{r_{v}}}$ and all $p_1,
p_2, \ldots, p_v$ are different primes, and any integers $k, w_1, w_2, \ldots, w_v, q_1, q_2, \ldots, q_v$, where $q_i \leq r_i$ there
is an integer $\TT$ such that:
\[
\sum\limits_{i_1=0}^{p_1^{q_1}-1} \sum\limits_{i_2=0}^{p_2^{q_2}-1} \ldots \sum\limits_{i_v=0}^{p_v^{q_v}-1} 
\#\bb*{k{+} \sum\limits_{j=1}^{v}  w_j\frac{m}{p_{j}^{q_j}}i_j}_{Q} = \TT 
\prod_{i: r_i=1}p_i^{\ceil*{\frac{n}{d}}+q_i-1}\prod_{i:r_i>1} p_i^{\ceil*{\frac{r_{i}n}{2}}+q_i-1}.
\]
\end{reptheorem}
\begin{proof}
We rely on the proof technique of \cite{mar75}. Let us start by proving the hypothesis for a ring
$\ZZ_{p^r}$, where $p$ is prime. If $r=1$ we wish to prove that 
\[
\sum\limits_{i=0}^{p^{q}-1}  \#\bb*{k{+}  wp^{r-q}i}_{Q} = \TT p^{\ceil*{\frac{n}{d}}+q-1},
\]
and it trivially follows from the Ax's theorem. We take now $r\geq2$, for which we intend to prove
\[
\sum\limits_{i=0}^{p^{q}-1}  \#\bb*{k{+}  wp^{r-q}i}_{Q} = \TT p^{\ceil*{\frac{rn}{2}}+q-1}.
\]
Consider
\[
C = \sum\limits_{i=0}^{p^q-1} \#(k{+}wp^{r-q}i)_{Q} = \sum\limits_{i=0}^{p^q-1} \#(wp^{r-q}i)_U
\]
where $U(\xx) = Q(\xx) -k$, 
\[
= p^{min(\wu, q)}\sum\limits_{i=0}^{p^{max(q-\wu, 0)}-1} \#(p^{r-q+\wu}i)_U
\]
where $\wu$ is the order of $w$ (\ie biggest power of $p$ dividing $w$, but $\wu \leq r$). Let
$e= max(q-\wu, 0)$ and let
\[
C' = \sum\limits_{i=0}^{p^e-1} \#(p^{r-e}i)_{U}.
\]
We need now to prove that $C'$ is a multiple of $p^{\ceil*{\frac{rn}{2} } +e-1}$. Note that if $e=r$ the result is trivial,
therefore we assume $e<r$. Additionally, if $e=0$, the result instantly reduces to the theorem by Marshall and Ramage \cite{mar75}. This
allows us to take $0<e<r$. Let 
\[
H(\xx, y)  = U(\xx) + y.  
\]
Then
\[
C' =  \sum\limits_{i=0}^{p^e-1} \#0_{H, y = p^{r-e}i},
\]
because for any assignment to $\xx$ that makes $U(\xx)$ have order at least
$r-e$, there is exactly one assignment to $y$ that evaluates $H$ to $0$.
Let $(x'_1, x'_2,\ldots, x'_{n}, y')$ be a solution of $H$ over $\ZZ_{p^r}$. Let us consider assignments to $H$ of the pattern
\[
H(x'_1+px_1, x'_2+px_2, \ldots, x'_{n-1}+px_{n-1}, y'+ p^{r-e}y),
\]
which then has a form
\[
pG_1 + p^2G_2+\ldots+p^dG_d + p^{r-e}y
\]
where $G_i $ are homogeneous functions of degree $i$ in variables $\xx$. Thus we wish to
count the number of zeroes of
\[
G = G_1 + pG_2+\ldots+p^{d-1}G_d + p^{r-e-1}y
\]
over $\ZZ_{p^{r-1}}$, where additionally $y \in \ZZ_{p^{e}}$ (\ie $0\leq y < p^e$). First let us consider $r=2$. Then $e=1$, and from direct
use of the Ax's theorem on $G$, we obtain divisibility of solutions number by $p^{n}$ (note that $G$
in this case is linear). Let us take now $r \geq 3$. If any of the variables $x_1,\ldots, x_n$ is multiplied by a unit in $G_1$, let $t$ be one of these variables. If not,
then if $r-e-1=0$ let $t=y$.
If $t$ was picked to be one of the variables, let us notice that for any assignment to all other variables in $G$, there is precisely one
assignment to $t$ that solves $G$ (via the main Lemma of \cite{mar75} if $t\neq y$).  Hence $G$ has $p^{(r-1)(n-1)+e}$ solutions.  Let
us compare this exponent with our hypothesis (we can omit the ceiling function, since the left-hand side is integer):
\[
  (r-1)(n-1)+e  \geq \frac{rn}{2} +e-1
\]
\[
  2rn-2r-2n+2e+2  \geq  rn +2e-2
\]
\[
  rn-2r-2n+4  \geq  0  \Leftrightarrow (r-2)(n-2) \geq 0,
\]
which is always true under the theorem's assumptions. Let us assume now that it was impossible to pick $t$, \ie all coefficients in $G_1$ are
divisible by $p$ and $r-e-1 \geq 1$. We take $G_1 = pG'_1 $ and write that
\[
G' = G'_1 + G_2+\ldots+p^{d-2}G_d + p^{r-e-2}y.
\]
The number of zeroes of $G$ over $\ZZ_{p^{r-1}}$, with
constraint on $y$ as earlier, equals the number of zeroes of $G'$ over $\ZZ_{p^{r-2}}$ multiplied by $p^{n}$, 
with unchanged constraint on $y$.
By induction, or by the Ax's result if $r=3$, we obtain that the number of zeroes of $G'$, under the
aforementioned settings, is divisible by $p^{\ceil*{\frac{(r-2)n}{2} } +e-1}$, which multiplied by $p^{n}$ gives the desired divisibility. 

This analysis extends to all the rings $\ZZ_m$ via the decomposition of the ring $\ZZ_m$ into its local rings, in the same
way as applied by Marshall and Ramage \cite{mar75}. Equivalently, an argument using a simple application of Chinese remaindering can be
employed.
\end{proof}

\section{Discussion of proof of Theorem \ref{main}} \label{s:disMain}

We prove this theorem by induction on the number of variables. We state the base case and the induction step of Theorem \ref{main}
separately. Curiously the base case $n=3$ has by far the longer proof, yet while working on it we additionally prove several lemmas of independent interest.
We present both proofs in section \ref{mainproof}.

\begin{reptheorem}{main}
\emph{\textbf{(Base case)}}

For any polynomial $Q$ of $3$ variables ($x, y, z$) over $\ZZ_{2^r}$ of degree up to $2$, any
integers $q, v\leq r$ and $k, w, g, u$ and any linear polynomial $T(x, y)$, there is an integer $\TT$ such that:
\[ 
\sum\limits_{i=0}^{2^q-1} \sum\limits_{j=0}^{2^v-1}\#(k{+}w2^{r-q}i{+}g2^{r-v}j)_{Q, z=T(x, y)+u2^{r-v}j}   = \TT2^{r+ \ceil*{
\frac{\min(2v,r)}{2}} +q-1}
\]
\end{reptheorem}

\begin{reptheorem}{main}
\emph{\textbf{(General induction step)}}

Let any quadratic polynomial $Q$ of $n\geq 4$ variables ($\xx, z$) over $\ZZ_{2^r}$, and any
integers $q, v\leq r$ and $k, w, g, u$, and any linear polynomial $T(\xx)$ be given. Suppose that for any $Q'$ of $n-1$ variables ($\xx',
z'$), any $q', v'\leq r$ and $k', w', g', u'$, and any linear polynomial $T'(\xx')$ it holds that:
\[ 
\sum\limits_{i=0}^{2^{q'}-1} \sum\limits_{j=0}^{2^{v'}-1}\#(k'{+}w'2^{r-q'}i{+}g'2^{r-v'}j)_{Q', z'=T'(\xx')+u'2^{r-v'}j}   = \TT'2^{\ceil*{
\frac{r(n-2)+\min(2v',r)}{2} } +q'-1}
\]
for certain integer $\TT'$. Then there is an integer $\TT$ such that 
\[ 
\sum\limits_{i=0}^{2^q-1} \sum\limits_{j=0}^{2^v-1}\#(k{+}w2^{r-q}i{+}g2^{r-v}j)_{Q, z=T(\xx)+u2^{r-v}j}   = \TT2^{\ceil*{
\frac{r(n-1)+\min(2v,r)}{2} } +q-1}.
\]
\end{reptheorem}

We apply the induction hypothesis for $n-1$ where $z'$ is a different variable from the $z$ in the goal statement for $n$. In
particular $z'$ becomes the variable in $\xx$ whose coefficient in a certain linear functional inside of $Q$ has the least order. 

In the general
induction step we show that if $Q$ has no term of type $z^2$ then for it the theorem holds basing on induction hypothesis for $n-1$
variables. Then we prove that adding the square term for $z$ does not change the divisibility lower bound. 

The form with two summation signs is very general, but we found that by our approach of inducting on $n$ even obtaining a simple statement
about the divisibility of $\#0_Q$ requires them, else our induction does not close. The following corollary indicates statements one can
obtain by substituting for the more general quantities. 

\begin{repCol}{maincol}
 For any polynomial $Q$ of $n$ variables ($\xx, z$) over $\ZZ_{2^r}$ of degree up to $2$, and any
integers $q, v \leq r$ and $k, w, g, l$, it holds that:
\begin{enumerate}[label=\alph*)]
\item 
$\sum\limits_{j=0}^{2^v-1} \#k_{Q, z=l+g2^{r-v}j} = \TT_c2^{\ceil*{ \frac{r(n-1)+\min(2v,r)}{2} }  -1}$,  $ n \geq 3$
\item 
$\sum\limits_{i=0}^{2^q-1} \sum\limits_{j=0}^{2^v-1} \#(k{+}w2^{r-q}i)_{Q, z=l+g2^{r-v}j} = \TT_d2^{\ceil*{ \frac{r(n-1)+\min(2v,r)}{2} } 
+q-1}$,  $ n \geq 3$
\item 
$ \sum\limits_{j=0}^{2^v-1} \#(k{+}w2^{r-q}j)_{Q, z=l+g2^{r-v}j}   = \TT_e2^{\ceil*{ \frac{r(n-1)+\min(2v,r)}{2} }  -1}$,  $ n \geq 3$
\end{enumerate}
for certain integers $\TT_a, \TT_b, \TT_c$.
\end{repCol}
\begin{proof}
All above are special cases of Theorem \ref{main}:
\begin{enumerate}[label=\alph*)]
\item 
Take $q=0$, $g=0$ and $T(\xx) = l$.
\item 
Take $g=0$ and $T(\xx) = l$.
\item 
Take $q=0$ and $T(\xx) = l$.
\end{enumerate}
\end{proof}

\section{Experiments} \label{ch:exp}

Before we were able to get a feel of the behaviour of the polynomial solutions over rings, in order to formulate our theorem's statements, we
had to see first some examples. Then we were able to extrapolate from them the general properties. In this chapter we present sample computer
programs we wrote and some results we obtained, which allowed us to probe this area. Those programs are extremely simple, yet they were
only a mundane means for the general formulations of the properties. We present their actual Java code instead of a
pseudocode, so that anyone interested can directly copy and run them himself, possibly with different parameters than the ones presented
here.
The most basic version of the programs we run is presented in the form of the code \ref{code1}. Variables \emph{ring} and \emph{vars\_num}
are set to the size of the ring and the number of variables of the polynomial respectively. The program outputs, for each possible number of
solutions, how many polynomials of the given number of variables over the given ring (and with degree up to $2$), have that many solutions.
For example ``$0$:$~~80$'', means there are $80$ polynomials that are unsolvable, whereas ``$3$:$~~702$'' means there are $702$
polynomials having $3$ solutions.
The code iterates over all possible polynomials, by iterating over all possible coefficients of the terms. For each such polynomial, the
code iterates over all possible assignments to the polynomial's variables, and records what the polynomial evaluates to. All those results
are combined to produce the final output. 

 \begin{minipage}{\textwidth}

\vspace{-1cm}
{\small
\begin{lstlisting}[caption={Degree 2, iterates over all polynomials.}, label=code1]

public class Solutions {

	public static boolean next(int[] arr, int ring) {
		for (int index = arr.length - 1; index >= 0; --index) {
			if (arr[index] < ring - 1) {
				++arr[index];
				return true;
			} else {
				arr[index] = 0;
			}
		}
		return false;
	}

	public static void main(String[] args) {
		int ring = 3;
		int vars_num = 3;	
		int coef_num = vars_num + (vars_num * (vars_num + 1)) / 2;		
		int[] coefs = new int[coef_num];
		int[] vars = new int[vars_num];
		long[] out = new long[(int) Math.pow(ring, vars_num) + 1];
		int[] ringout = new int[ring];
		coefs[coefs.length - 1] = -1;

		while (next(coefs, ring)) {
			for (int i = 0; i < ring; ++i) {
				ringout[i] = 0;
			}
			for (int i = 0; i < vars_num; ++i) {
				vars[i] = 0;
			}
			vars[vars.length - 1] = -1;		
			while (next(vars, ring)) {
				int result = 0;
				for (int i = 0; i < vars_num; ++i) {
					result += coefs[i] * vars[i];
				}
				int off = vars_num;
				for (int i = 0; i < vars_num; ++i) {
					for (int j = i; j < vars_num; ++j) {
						result += coefs[off++] * vars[i] * vars[j];
					}
				}
				++ringout[result % ring];
			}
			for (int sol_num : ringout) {
				++out[sol_num];
			}
		}
		for (int i = 0; i < out.length; ++i) {
			System.out.println(i + ":  " + out[i]);
		}
	}
}
\end{lstlisting}
}
\end{minipage}

\vfill

\restoregeometry
\newgeometry{top=2cm} 
\newgeometry{left=1.8cm, top=2cm}
\begin{table}
\thispagestyle{empty}
\begin{center}
                                                                   
                                                                                 
\end{center}                                                                     
\caption{Degree 3, rings 15 to 17.}\label{tab:d3r15-17}                          
\end{table}                                                     
\restoregeometry 

In Tables \ref{tab:d2r1-6} to \ref{tab:d2r13-17} we present some outputs of such a program, whereas in Tables \ref{tab:d3r2-9} to
\ref{tab:d3r15-17} we present outputs of an analogous program, yet with allowing the polynomials to have degree up to $3$. Each section of
those tables has a value $r$ denoting size of the ring, and a value $n$ which denotes number of variables. For example $3$rd line in section
for $r:3~~n:2$ of table \ref{tab:d2r1-6} says that there are $216$ polynomials of degree up to $2$, over the ring $\ZZ_3$ with $2$
variables, that have precisely $2$ solutions. In the tables we omitted those numbers of solutions for which there are no polynomials that have that
many solutions. Additionally, to obtain those results we used a little more advanced programs, that iterated only once over many
isomorphic polynomials, and, most importantly, were multi-threaded. We don't present their code here, as it is long and doesn't add
much to this discussion.
There are many interesting things that we can notice in the tables below. Because the number of solutions of any such polynomial has to
be divisible by a certain constant, that number can occupy only one of the allowed ``slots''. Yet, as we see, for a lot of those slots
there are no polynomials that have that number of solutions. It also often happens, that several of the slots directly following
the $0$-slot are not taken. For example, in the section of Table \ref{tab:d2r1-6} for $r:6~~n:4$, the divisibility is $6$, yet the
smallest possible non-zero number of solutions is $36$, which shows that this initial gap can be significantly larger than the divisibility.
The earlier mentioned research by Clark, Forrow and Schmitt \cite{cla14} is focused on counting the size of this first gap. It is important
to mention that the gaps in the second half of the spectrum tend to be even larger. In the example we just looked at, half of the range is $1296/2 =
648$, and there are only $5$ slots ``taken'' after that half, whereas there are $39$ taken slots up to that half. Additionally, the last
gap, that is the difference between the two largest possible numbers of solutions, seems to be consistently the largest in all examples.
This may be connected to the fact that also $k$-CNF has worse granularity on the number of satisfying assignments when the set of satisfying
assignments is large. For example, a $3$-CNF formula over $n$ variables cannot have more than $7\times2^{n-3}$ satisfying assignments,
unless it is a tautology. We directly present the sizes for the first and last gaps that we obtained in our experiments in Tables
\ref{tab:d2 front gap} to \ref{tab:d3 front gap} that are close to the end of this chapter (columns go by the number of variables, and rows
by the size of the ring).
Another thing to notice is that, for example, each of the sections $r:2~~n:4$, $r:4~~n:4$, $r:6~~n:4$ from Table \ref{tab:d2r1-6} has
certain numbers of solutions for which there are exactly $7168$ and $2240$ polynomials having that many solutions. There are also
noticeable cases when number of polynomials having particular number of solutions is very small, when compared to neighbouring numbers. We
can see this for example in Table \ref{tab:d3r10-14} for $r:10~~n:4$ for $25$ solutions, and for $r:14~~n:4$ in line for $49$ solutions. We
are sure that there are many other properties waiting to be noticed, and we contribute the above tables to facilitate future research and
heuristic formulation in this area.

For phenomena we would like to especially focus our attention on, we provide Tables \ref{tab:d2 con div} to \ref{tab:d2 per slots used}
for polynomials of degree up to $2$, and Tables \ref{tab:d3 con div} to \ref{tab:d3 per slots used} for polynomials of degree up to $3$. In
all those tables the columns refer to number of variables of the polynomial, while the rows
represent the size of the ring. For example, from Table \ref{tab:d2 con div} we can read that a polynomial of degree up to $2$ over $\ZZ_{12}$ with $4$
variables must have its number of solutions be a multiple of $24$; this follows by
the theorem of \cite{mar75}. Table \ref{tab:d2 per min div} says for what percent of all polynomials their number of solutions is divisible by the minimum divisibility, and not by any higher
power of the size of the ring. For example over $\ZZ_2$, half of the polynomials of $2$ variables of degree up to $2$ have the minimum
divisibility of their solutions numbers (those are the polynomials that have $1$ or $3$ solutions in this case). Table \ref{tab:d2 slots
used} tells how many different numbers of solutions a polynomial over a given ring and number of variables may have. For example,
polynomials over $\ZZ_{11}$ with $2$ variables (and still with degree up to $2$), may only have one of $8$ different solution numbers - in
this case $0, 1, 10, 11, 12, 21, 22,$ and $121$. Finally, the Table \ref{tab:d2 per slots used} gives the division, of the number of
possible solutions numbers from Table \ref{tab:d2 slots used} by the number of solutions numbers allowed by the minimum divisibility. For
example over $\ZZ_5$ with $4$ variables there is $12$ possible solution numbers, whereas due to divisibility by $5$, $125+1=126$ slots are
allowed.
Then, $12/126$ gives the $9.5$\s$~$that we find in the table. Tables \ref{tab:d3 con div} to \ref{tab:d3 per slots used} are
respectively analogous, but for polynomials of degree up to $3$.

Having introduced the tables given below, let us discuss what we can learn from them. Tables \ref{tab:d2 con div} and
\ref{tab:d3 con div} illustrate how much quicker the divisibility of solution numbers rises when
we work over rings that include at least a second prime power, compared to when we work over fields.
It is especially visible for the ring of size $16$, in marked contrast to the ring $\ZZ_{15}$ and the field $\ZZ_{17}$. We see that the extension of the Ax's theorem to non-field rings gives
much greater divisibilities than the original Ax theorem does. This divisibility gap grows even
larger as we increase any of the following parameters: ring size, the number of variables, or the degree of the polynomial. It is also interesting to have a look at \eg $\ZZ_{12}$, where we see that the divisibility
of its solution numbers is a multiplication of divisibilities for $\ZZ_3$ and $\ZZ_4$ for the same $n$. 

Tables \ref{tab:d2 per min div} and \ref{tab:d3 per min div} show us that a very large part of all the polynomials have the minimum
divisibility allowed. This means, that if we would pick several polynomials at random, there is a
high chance that at least one of them would have the minimum divisibility of its number of solutions. This is a very important observation;
we based on it another program that we used, which will be shortly presented.

From Tables \ref{tab:d2 slots used} and \ref{tab:d3 slots used}, we learn that the numbers of numbers of solutions the polynomials may have
are surprisingly small. It would not be surprising at all, if those numbers of the ``used slots'' were bounded by a polynomial in the size
of the ring $r$ and the number of variables $n$. For degree $2$ it may even be a polynomial like $rn^k$, where $k$ is the number of prime
divisors of $r$. Let us notice that for degree $2$, for fields other than $\ZZ_2$, the numbers of used slots seem to be exactly the same,
and they grow by $4$ at every second increase in number of variables. Additionally, it seems that if for certain $\ZZ_x$ and $\ZZ_y$ the 
number of slots used don't change between some numbers of variables $a$ and $b$, then also $\ZZ_{xy}$, has the same number of slots used
for $n = a$ and $n = b$. For example for $\ZZ_2$, $\ZZ_5$, we consider $\ZZ_{10}$ with $n=2$ and $3$. This observation also strongly
suggests that the number of slots used for $\ZZ_6$ and $n=5$ is $44$, even though we didn't run an experiment which would confirm that.

The last pair of tables, \ref{tab:d2 per slots used} and \ref{tab:d3 per slots used}, show that the fraction of slots allowed by minimum
divisibility that are actually used, nearly always decreases with increasing number of variables. Even though the number
of allowed slots increases exponentially, this strong decrease in how many of them are used gives hope that the ultimate
number of used slots is actually only polynomially large. This has especially high probability of being true for fields and for rings over prime
powers. 

Other properties in this data seem to be worthy of further investigation.
Many of them may be just mathematical curiosities, yet some may play a key role in understanding the shape of polynomial solutions spaces,
and be fundamental in future results in low circuit complexity (especially $\mathsf{ACC^0}[m]$), and also classical simulation of quantum
circuits via polynomials. 

The type of program that we used in the end to test our hypotheses is presented as the code
\ref{code2}. The user chooses \emph{ring} size, \emph{vars\_num} as the number of variables, \emph{div} as the hypothetical divisibility of
the number of solutions, and \emph{tries} as the number of ``tries'' to check the hypothesis. The program counts numbers of solutions of
\emph{tries}-many randomly chosen polynomials of degree up to $3$ over the given \emph{ring} with the given number of variables. When the number of solutions
of a checked polynomial is a multiple of \emph{div} it passes silently, otherwise a remainder of the division of number of solutions by
\emph{div} is printed. The code is presented with \emph{ring}=$8$, \emph{vars\_num}=$6$, and \emph{div}=$512$. The given divisibility is too
large, as it should be only $256$, therefore upon running this code we should see information of multiple remainders of
$256$, and the run won't pass. We are highly likely to see polynomials that don't have number of their solutions divisible by $512$, even
with the very small number of $50$ \emph{tries}. It is, because the fraction of polynomials that have minimum divisibility is very significant,
as also tables \ref{tab:d2 per min div} and \ref{tab:d3 per min div} show. It may be true that when ring, degree and number of variables
increase those fractions significantly decrease, yet even then most probably they still are not minuscule, and in our real experiments we
were using large numbers of tries.

\clearpage
\newgeometry{top=3cm}
\pagestyle{plain}

{\Large
\begin{table}
\parbox{.40\linewidth}{
\centering
                                                                                                                                                  
\caption{Degree 3, size of the last gap between solution numbers.\vspace{2cm}} \label{tab:d3 back gap}                        
} 
                                                                                                                                              
\end{table}

\clearpage
\restoregeometry
\begin{minipage}{\textwidth}
\thispagestyle{empty}
\vspace{-2cm}
\thispagestyle{empty}
{\footnotesize
\thispagestyle{empty}
\begin{lstlisting}[caption={Degree 3, checks divisibility hypothesis on random polynomials.}, label=code2]

public class SolutionsRandom {
	
    public static boolean next(int[] arr, int ring){
        for (int index = arr.length -1; index >= 0; --index ){
            if(arr[index] < ring-1){
                ++arr[index];
                return true;
            } else {
                arr[index] = 0;
            }
        }
        return false;
    }
   
    public static void main(String[] args) {
        int ring = 8;
        int vars_num = 6;
        int div = 512;
        int tries = 50;
        int coef_num = vars_num + (vars_num*(vars_num+1))/2 
        		+ (vars_num*(vars_num+1)*(vars_num+2))/6;
        int[] coefs = new int[coef_num];
        int[] vars = new int[vars_num];
        int[] ringout = new int[ring];
        for(int counter=0; counter < tries; counter++) {
        	for(int i =0; i < coefs.length; i++) {
        		coefs[i] = (int) (Math.random()*ring);
        	}
            for(int i = 0; i < ring; ++i){
                ringout[i] = 0;
            }
            for(int i = 0; i < vars_num; ++i){
                vars[i] = 0;
            }
            vars[vars.length -1] = -1;
            while(next(vars, ring)){
                int result = 0;
                for(int i = 0; i < vars_num; ++i){
                    result += coefs[i]*vars[i];
                }
                int off = vars_num;
                for(int i = 0; i < vars_num; ++i){
                    for(int j = i; j < vars_num; ++j){
                        result += coefs[off++]*vars[i]*vars[j];
                    }
                }
                for(int i = 0; i < vars_num; ++i){
                    for(int j = i; j < vars_num; ++j){
                    	for(int k = j; k < vars_num; ++k){
                    		result += coefs[off++]*vars[i]*vars[j]*vars[k];
                    	}
                    }
                }
                ++ringout[result % ring];
            }
            for(int sol_num : ringout){
            	int remainder = sol_num % div;
                if(remainder != 0){
                	System.out.println("remainder: " + (remainder));
                }
            }
        }
    }
}
\end{lstlisting}
}
\thispagestyle{empty}
\end{minipage}
\pagestyle{empty}
\restoregeometry

\section{Prospective applications in computer science}
\pagestyle{plain}

Polynomials modulo composite numbers represent the frontier of what is known in
computational complexity theory, and a step beyond the well worked-out theory of
polynomials over fields.  In complexity they correspond to the class
$\mathsf{ACC^0}$ of languages represented by constant-depth, polynomial-sized
circuits of Boolean and mod-$m$ gates.  That nonuniform $\mathsf{ACC}$  was only
recently separated from the nondeterministic exponential time class
$\mathsf{NEXP}$ \cite{wil11} indicates how difficult they are to study. In
mathematics there are strange behaviors even for univariate polynomials, for
instance $x$ ``factors'' as $(4x + 3)(3x + 4)$ over $\ZZ_6$.
Improving our understanding of their behavior may be of great use, when trying
to prove more-strict lower bounds on $\mathsf{ACC^0}$. It should be noted
though, that the results are not directly translatable, as in circuits the
inputs are limited only to $\p*{0, 1}$, even when mod-$m$ gates are being used. 
Moreover the bounds are unknown only when $m$ has two or more prime factors. 
Still, greater knowledge of the solution-space structures for these $m$ may help
investigate the intersection with the image of the Boolean cube.

Cai, Chen, and Lu showed that counting number of solutions for polynomials of
degree up to $2$ in a ring of a fixed size is doable in polynomial time
\cite{cai13}. When the degree becomes $3$ or higher, however, it is known to be
$\#\mathsf{P}{-}complete$ in general \cite{cai13}. The structure of solution
spaces begun here, when further developed, may help map the boundary between
feasible and hard cases in greater detail. This is especially important for the
$Z$-function, where symmetry (or its lack) of solution cardinalities impacts
the balance of the sum around the unit circle.

The application area that directly prompted this inquiry though, is the
algebraic analysis of quantum circuits.  Implicit or explicit in several
well-known papers \cite{ADH97,FR99,daw04,BvDR08} is the conversion of a quantum
circuit $C$ into a polynomial $Q(\vec{y})$ over $\ZZ_m$ (where $m$ is usually of
the form $2^r$)  such that transition amplitude from input $\textbf{a}$ to
output $\textbf{b}$ is given by
\[
\Bra{\textbf{b}}U\Ket{\textbf{a}}= \frac{1}{R}\sum_{k=0}^{m-1}
\#k_Q^*\cdot\omega^k \] where $\omega = e^{2\pi i/m}$ and $R$ is a normalizing
constant depending only on $C$. This form of the equivalent exponential sum
$\sum_{\vec{y}} \omega^{Q(\vec{y})}$ emphasizes the role played by the
solution-set cardinalities for the polynomials $Q(\vec{y}) - k$ over all $k$. 
The one hitch (as above) is that $\#k_Q^*$ restricts the count to those
arguments $\vec{y} = y_1 y_2\cdots y_h$ that belong to the Boolean cube
$\p*{0,1}^h$, taking it outside the immediate purview of the results for $\#k_Q$
which range over all of $\ZZ_m^h$.

However, in some cases there is a correspondence between $\p*{0,1}^h$ and
$\ZZ_m^h$ that enables carrying over the results.  This is the case when $C$ is
a circuit of {\em stabilizer\/} gates, which produce a polynomial $Q$ over
$\ZZ_4$ consisting entirely of terms of the form $y_i^2$ or $2 y_i y_j$
\cite{ReCh12}.  Then only the parities of $y_i$ and $y_j$ matter.  The
above-mentioned theorem of \cite{cai13} then takes effect to show that the solution
counts are polynomial-time computable, which yields yet-another-proof of the
classical polynomial-time simulation of quantum stabilizer circuits \cite{Got98} (see
also \cite{AaGo04}).

The divisibility of the numbers $\#k_Q$ by large powers of the ring size, as proved
by Marshall and Ramage \cite{mar75}, implies
limitations on the range of values that this probability can take.  In
particular, it limits the ability to reduce the failure probability $\epsilon$
of the measurement for a given size circuit---unless the circuit actually gives
$\epsilon = 0$. The size is bounded below by the number $h$ of nondeterministic
gates (which generally are all Hadamard gates), which give rise to the variables
$\vec{y} = y_1,\dots,y_h$.

\section{Proof of Theorem \ref{main}} \label{mainproof}

We present the proofs of the general induction step and the base case of Theorem \ref{main} separately, respectively in subsections
\ref{s:general} and \ref{s:base}.

\subsection{Proof of the general induction step
} \label{s:general}

For any $x$ in a ring $\ZZ_{p^r} = \ZZ/p^r\ZZ$ where $p$ is prime, let $o(x) = max \p*{m : m \leq r \wedge p^m | x}$ be the
\emph{order} of $x$.  The following lemma is a basic observation about the rings $\ZZ_{2^r}$.

\begin{lemma} \label{lem1}
Let $h \in \ZZ_{2^r}$ and $f = o(h)$. Then the following sets are equal as subsets of $\ZZ_{2^r}$:
\[
\hspace{4.5cm} \p*{ 2^fi : i \in Z_{2^r}} =\p*{ hi : i \in Z_{2^r}} = \p*{ hi : i \in Z_{2^{r-f}}}.  \hspace{4.5cm} \Box
\]

\end{lemma}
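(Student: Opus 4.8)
The plan is to prove that all three sets coincide with the principal ideal $2^f\ZZ_{2^r} = \{2^f i : i \in \ZZ_{2^r}\}$, which consists of exactly the $2^{r-f}$ multiples of $2^f$. First I would dispose of the degenerate case $h=0$: then $f = o(0) = r$ under the capping convention in the definition of order, and all three sets collapse to $\{0\}$, so the claim is immediate. Hence I may assume $f < r$ and write $h = 2^f u$ with $u$ odd, so that $u$ is a unit in $\ZZ_{2^r}$ (and, once reduced, a unit in every $\ZZ_{2^k}$ with $k \geq 1$).

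For the first equality I would exploit that multiplication by the unit $u$ permutes $\ZZ_{2^r}$: as $i$ ranges over $\ZZ_{2^r}$ so does $ui$, whence $\{hi : i \in \ZZ_{2^r}\} = \{2^f(ui) : i \in \ZZ_{2^r}\} = \{2^f j : j \in \ZZ_{2^r}\}$, which is exactly the left-hand set. Equivalently, the ideals satisfy $(h) = (2^f u) = (2^f)$ because $u$ is invertible.

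The second equality is where the only genuine care is required, and the issue is purely that the index $i$ now ranges over $\ZZ_{2^{r-f}}$ while the products are formed in $\ZZ_{2^r}$. I would first observe that $hi \bmod 2^r$ depends only on $i \bmod 2^{r-f}$: if $i \equiv i' \pmod{2^{r-f}}$ then $h(i-i') = 2^f u (i-i')$ is a multiple of $2^f \cdot 2^{r-f} = 2^r$ and so vanishes in $\ZZ_{2^r}$. Thus the map $i \mapsto hi$ is well defined on $\ZZ_{2^{r-f}}$, and restricting $i$ to representatives of $\ZZ_{2^{r-f}}$ already realizes every element of $\{hi : i \in \ZZ_{2^r}\}$. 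To pin the common image down as all of $2^f\ZZ_{2^r}$, I would verify injectivity on the $2^{r-f}$ representatives $0,1,\dots,2^{r-f}-1$: if $hi = hi'$ then $u(i-i') \equiv 0 \pmod{2^{r-f}}$, and since $u$ is a unit modulo $2^{r-f}$ this forces $i = i'$. Hence the image has $2^{r-f}$ distinct elements, all lying in $2^f\ZZ_{2^r}$, a set of the same cardinality, so the two must coincide. The main (and quite modest) obstacle is exactly this bookkeeping of the reduction $u \bmod 2^{r-f}$; every remaining step is a single invocation of the fact that a unit permutes the ring.
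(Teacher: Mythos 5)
Your proof is correct; the paper itself offers no proof of this lemma (it is stated with a terminal $\Box$ as a "basic observation"), and your argument — writing $h = 2^f u$ with $u$ a unit, using that multiplication by a unit permutes $\ZZ_{2^r}$, and observing that $hi \bmod 2^r$ depends only on $i \bmod 2^{r-f}$ — is exactly the natural way to fill in the omitted details, including the correct handling of the degenerate case $h=0$ via the capped order $o(0)=r$.
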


Any such set described in the lemma above, contains all elements of the ring that share the same order. As an example, for $\ZZ_{32}$,
each such set contains all elements from a single level of the tree below, or it contains just the $0$ element. 

\begin{tikzpicture}%
  [
  level distance=1.5cm,
  level 1/.style={sibling distance=8cm},
  level 2/.style={sibling distance=4cm},
  level 3/.style={sibling distance=2cm},
  level 4/.style={sibling distance=1cm},
  every node/.style={draw,circle, minimum size=2em}]
  \node[circle,draw]{$16$} 
    child {node[circle,draw]{$8$}
      child {node[circle,draw] {$4$}
      	child {node[circle,draw] {$2$}
      		child {node[circle,draw] {$1$}}
      		child {node[circle,draw] {$3$}}
      	}
      	child {node[circle,draw] {$6$}
      		child {node[circle,draw] {$5$}}
      		child {node[circle,draw] {$7$}}
      	}
      }
      child {node[circle,draw] {$12$}
      	child {node[circle,draw] {$10$}
      		child {node[circle,draw] {$9$}}
      		child {node[circle,draw] {$11$}}
      	}
      	child {node[circle,draw] {$14$}
      		child {node[circle,draw] {$13$}}
      		child {node[circle,draw] {$15$}}
      	}
      }
    }
    child {node[circle,draw]{$24$}
      child {node[circle,draw] {$20$}
      	child {node[circle,draw] {$18$}
      		child {node[circle,draw] {$17$}}
      		child {node[circle,draw] {$19$}}
      	}
      	child {node[circle,draw] {$22$}
      		child {node[circle,draw] {$21$}}
      		child {node[circle,draw] {$23$}}
      	}
      }
      child {node[circle,draw] {$28$}
      	child {node[circle,draw] {$26$}
      		child {node[circle,draw] {$25$}}
      		child {node[circle,draw] {$27$}}
      	}
      	child {node[circle,draw] {$30$}
      		child {node[circle,draw] {$29$}}
      		child {node[circle,draw] {$31$}}
      	}
      }
    };
\end{tikzpicture}

\vspace{0.5cm}

We will often use this lemma to change the order of iteration. For example,
\[ 
\bigcup_{i=0}^{2^q-1} \p*{2^{r-q}i} = \bigcup_{i=0}^{2^q-1} \p*{(2k+1)2^{r-q}i}  = \bigcup_{i=0}^{2^q-1} \p*{(2^{r-q} +t2^{r-q+1})i}
\] 
where $q, t, k$ are any integers.
\vfill

\begin{reptheorem}{main}
\emph{\textbf{(General induction step)}}

Let any quadratic polynomial $Q$ of $n\geq 4$ variables ($\xx, z$) over $\ZZ_{2^r}$, and any
integers $q, v\leq r$ and $k, w, g, u$, and any linear polynomial $T(\xx)$ be given. Suppose that for any $Q'$ of $n-1$ variables ($\xx',
z'$), any $q', v'\leq r$ and $k', w', g', u'$, and any linear polynomial $T'(\xx')$ it holds that:
\[ 
\sum\limits_{i=0}^{2^{q'}-1} \sum\limits_{j=0}^{2^{v'}-1}\#(k'{+}w'2^{r-q'}i{+}g'2^{r-v'}j)_{Q', z'=T'(\xx')+u'2^{r-v'}j}   = \TT'2^{\ceil*{
\frac{r(n-2)+\min(2v',r)}{2} } +q'-1}
\]
for certain integer $\TT'$. Then there is an integer $\TT$ such that 
\[ 
\sum\limits_{i=0}^{2^q-1} \sum\limits_{j=0}^{2^v-1}\#(k{+}w2^{r-q}i{+}g2^{r-v}j)_{Q, z=T(\xx)+u2^{r-v}j}   = \TT2^{\ceil*{
\frac{r(n-1)+\min(2v,r)}{2} } +q-1}.
\]
\end{reptheorem}

\begin{proof}
Taking as the induction hypothesis that the theorem is true for all polynomials over $n-1$ variables, we would like to show that it holds
for any $Q$ such that
{\samepage
\[ 
Q(\textbf{x}, z) = M(\textbf{x}, z) + mz^2,
\]
\[ 
M(\textbf{x}, z) = P(\textbf{x}) + L(\textbf{x})z.
\]
}Here $Q$ and $M$ are over $n$ variables and have degree up to 2, $P$ is over $n-1$ variables and also has degree up to 2,  $L$
is a linear form over $n-1$ variables and $m$ is a constant. We will first prove the divisibility for $M$ using the induction hypothesis
for $n-1$, and then we will prove divisibility of $Q$, depending only on the result for $M$. \\
Let us notice that
\[
\#k_{M, z=l} = \sum_{h=0}^{2^r-1} \#(k{-}hl)_{P, L(\textbf{x})=h}. 
\]
We will frequently use decompositions of this form.

Let us move to proving the $M$ part of the theorem, 
by which we mean the conclusion of Theorem~\ref{main} with $M$ in place of $Q$.  We calculate:
\[
C = \sum\limits_{i=0}^{2^q-1} \sum\limits_{j=0}^{2^v-1} \#(k{+}w2^{r-q}i{+}g2^{r-v}j)_{M, z=T(\xx)+u2^{r-v}j}
=\sum_{i=0}^{2^q-1} \sum\limits_{j=0}^{2^v-1} \#(k{+}w2^{r-q}i{+}g2^{r-v}j)_{U, z =2^{r-v}j}\;,
\] 
where $U(\xx) =  P(\xx)+L(\xx) (T(\xx) + uz)$.  Let us write $H(\xx) = P(\xx)+L(\xx) (T(\xx) + uz) - gz$ so that
\[
C = \sum_{i=0}^{2^q-1} \sum\limits_{j=0}^{2^v-1}\#(k{+}w2^{r-q}i)_{H, z =2^{r-v}j}.
\]
Take $\wu = o(w)$, that is write $w = 2^{\wu}b$ where $b$ is odd.  By appeal to Lemma~\ref{lem1} we may ignore $b$, so we have
\[
C = \sum_{i=0}^{2^q-1} \sum\limits_{j=0}^{2^v-1}\#(k{+}2^{r-q+\wu}i)_{H, z =2^{r-v}j}.
\]
Now we can rewrite $H = P'(\xx) + L'(\xx)z$ with certain $P'$ no worse than quadratic, and importantly, certain linear $L'$. Then we
can further condition on all possible values $h$ of $L'(\xx)$, to obtain 
\[
C = \sum_{i=0}^{2^q-1} \sum\limits_{j=0}^{2^v-1} \sum_{h=0}^{2^r-1} \#(k{+}2^{r-q+\wu}i{-}2^{r-v}jh)_{H, L'(\xx)=h}.
\]
Considering all possible orders $f$ of $h$ separately then gives:
\[
C =  \sum_{i=0}^{2^q-1} \sum\limits_{j=0}^{2^{v}-1}  \sum_{f=0}^{r-1} \sum_{h=0}^{2^{r-f-1}-1}
\#(k{+}2^{r-q+\wu}i{+}2^{r-v}j2^f(2h+1))_{H, L'(\xx)=2^f(2h+1)} 
\]
\[
+  \sum_{i=0}^{2^q-1} \sum\limits_{j=0}^{2^{v}-1} \#(k{+}2^{r-q+\wu}i)_{H, L'(\xx)=0}.
\]

Let us divide the above sum into two parts $C_1$ and $C_2$ and consider them independently.  The first part is for $f\geq v$, and the second
part is for the remaining orders $f \leq v-1$.
Starting with the first part, we have:
\[
C_1 = \sum_{i=0}^{2^q-1} \sum\limits_{j=0}^{2^{v}-1}  \sum_{f=v}^{r-1} \sum_{h=0}^{2^{r-f-1}-1}
\#(k{+}2^{r-q+\wu}i{+}2^{r-v+f}j(2h+1))_{H, L'(\xx)=2^f(2h+1)} 
\]
\[
+  \sum_{i=0}^{2^q-1} \sum\limits_{j=0}^{2^{v}-1} \#(k{+}2^{r-q+\wu}i)_{H, L'(\xx)=0}
\]
Since $f \geq v$, we get $2^{r-v+f}=0$, and therefore
\[
C_1 = 2^{v}\sum_{i=0}^{2^q-1}   \sum_{h=0}^{2^{r-v}-1}
\#(k{+}2^{r-q+\wu}i)_{H, L'(\xx)=2^v h}\;, 
\]
where we also collapsed orders of $h$, by considering all of them when their order is at least $v$ at once.

For certain values of $h$, the condition $L'(\xx) = 2^v h$ may be unsolvable. Let us take a variable $y$ in $\xx$ that is multiplied by some
$\alpha2^\beta$ in $L'$, such that $\alpha$ is odd and no other variable in $\xx$ is being multiplied in $L'$ by a coefficient of a smaller
order. Let $\delta$ be the constant term in $L'$. If the order of $\delta$ is smaller than both $\beta$ and $v$, then $L'(\xx) = 2^v h$
never has a solution and our whole expression becomes $0$, which has any divisibility. When $o(\delta) \geq min(\beta, v)$ then $\delta$
only impacts which coset of set of solutions of $L'(\xx) - \delta = 2^v h $ will be the solutions of $L'(\xx) = 2^v h $. Therefore we can
assume $\delta = 0$ without the loss of generality ($h$ goes over the whole subring, while $\alpha$ can be anything).
Then $L'(\xx) = 2^v h$ is solvable only when $h = \gamma2^{max(\beta-v, 0)}$ for certain $\gamma$.
For any such $h$ we can solve the equation for $y$ obtaining $2^\beta$ solutions of the form
\[ 
y_{i} = L'_{\neg y}(\xx_{\neg y}) +\frac{\gamma2^{max(v-\beta, 0)}}{\alpha} + i2^{r-\beta}, 
\]
for $i$ between $0$ and $2^\beta-1$ and $L'_{\neg y}(\xx_{\neg y})$ being over $n-2$ variables and defined as: $L'_{\neg y}(\xx_{\neg y}) =
(- L'(\xx) +\alpha2^{\beta}y) /\alpha2^{\beta}$.
Coming back to our sum $C_1$ as given earlier, we have:
\[
C_1 = 2^{v}\sum_{i=0}^{2^q-1} \sum_{h=0}^{2^{r-v}-1}
\#(k{+}2^{r-q+\wu}i)_{H, L'(\xx)=2^v h} 
\]
\[
 = 2^{v} \sum_{i=0}^{2^q-1}   \sum_{h=0}^{2^{r-max(v, \beta)}-1}
\#(k{+}2^{r-q+\wu}i)_{H, L'(\xx)=2^{max(v , \beta)}h}\;,
\]
where we omitted $h$-s for which the constraint was always false.  Carrying on,
\[
C_1 = 2^{v} \sum_{i=0}^{2^q-1}    \sum_{h=0}^{2^{r-max(v, \beta)}-1} \sum_{t=0}^{2^{\beta}-1}
\#(k{+}2^{r-q+\wu}i)_{H,  y = L'_{\neg y}(\xx_{\neg y}) + \frac{2^{max(v-\beta, 0)}}{\alpha}h + t2^{r-\beta}} 
\]
\[
 = 2^{v} \sum_{i=0}^{2^q-1}    \sum_{s=0}^{2^{r-max(v-\beta, 0)}-1}
\#(k{+}2^{r-q+\wu}i)_{H,  y = L'_{\neg y}(\xx_{\neg y}) + 2^{max(v-\beta, 0)}s},
\]
since $\frac{2^{max(v-\beta, 0)}}{\alpha}h$ produces all values in $\ZZ_{2^{r-\beta}}$ that are divisible by $2^{max(v-\beta, 0)}$, then
$t2^{r-\beta}$ expands them to all such values in $\ZZ_{2^r}$.  Hence
\[
C_1 = 2^{v+ min(q, \wu)} \sum_{i=0}^{2^{max(q-\wu, 0)}-1} \sum_{s=0}^{2^{r-max(v-\beta, 0)}-1}
\#(k{+}2^{r-q+\wu}i)_{H,  y = L'_{\neg y}(\xx_{\neg y}) + 2^{max(v-\beta, 0)}s} 
\]
\[ 
= 2^{v + min(q,\wu)} \TT2^{\ceil*{ \frac{r(n-2)+\min(2r-2max(v-\beta, 0),r)}{2} } +max(q-\wu,0)-1}
\]
\[ 
= \TT2^{\ceil*{ \frac{r(n-1)+\min(r + 2(v - max(v-\beta, 0)),2v)}{2} } +q-1}
= \TT2^{\ceil*{ \frac{r(n-1)+\min(2v, r +min(\beta, v))}{2} } +q-1}
\]
which has possibly even more than the required divisibility. We used the induction hypothesis taking $g$ to be $0$ and $T(\xx_{\neg y}) =
L_{\neg y}(\xx_{\neg y})$.

Let us look now at the second part, \ie for $f < v$:
\[  
C_2 = \sum_{i=0}^{2^q-1} \sum\limits_{j=0}^{2^{v}-1}  \sum_{f=0}^{v-1} \sum_{h=0}^{2^{r-f-1}-1}
\#(k{+}2^{r-q+\wu}i{+}2^{r-v+f}j(2h+1))_{H, L'(\xx)=2^f(2h+1)} 
\]
\[  
= \sum_{i=0}^{2^q-1} \sum\limits_{j=0}^{2^{v}-1}  \sum_{f=0}^{v-1} \sum_{h=0}^{2^{r-f-1}-1}
\#(k{+}2^{r-q+\wu}i{+}2^{r-v+f}j)_{H, L'(\xx)=2^f + 2^{f+1}h}.
\]
Again by appeal to Lemma \ref{lem1},
\[  
C_2 = \sum_{f=0}^{v-1} 2^{min(q,\wu) + min(v, f)}\sum_{i=0}^{2^{max(q-\wu, 0)}-1} \sum\limits_{j=0}^{2^{max(v-f,0)}-1}  
\sum_{h=0}^{2^{r-f-1}-1} \#(k{+}2^{r-q+\wu}i{+}2^{r-v+f}j)_{H, L'(\xx)=2^f + 2^{f+1}h} 
\]
{\small
\[  
= \sum_{f=0}^{v-1} 2^{min(q,\wu) + min(v, f) + max(min(q-\wu,v-f), 0)} \sum_{i=0}^{2^{max(q-\wu, v-f, 0)}-1}
\sum_{h=0}^{2^{r-f-1}-1} \#(k{+}2^{r-max(q-\wu, v-f)}i)_{H, L'(\xx)=2^f + 2^{f+1}h} 
\]
}

\noindent
owing to the overlap of $2^{r-q+\wu}i$ and $2^{r-v+f}j$. Following steps are analogous to what we did in previous part: we solve
$L'(\xx)=2^f + 2^{f+1}h$ for a specific $y \in \xx$ to obtain that $C_2$ equals the sum over $f$ from $0$ to $v-1$ of
\[
2^{min(q,\wu) + min(v, f) +max(min(q-\wu,v-f), 0)} 
\]
multiplied by
\[
\sum_{i=0}^{2^{max(q-\wu, v-f, 0)}-1}
\sum_{s=0}^{2^{r-max(f+1-\beta, 0)}-1} \#(k{+}2^{r-max(q-\wu, v-f)}i)_{H, y=L_{\neg y}(\xx_{\neg
y})+2^{max(f+1-\beta, 0)}s}.
\]
This has the right form for applying the induction hypothesis, which gives us:
\[
C_2 = \sum_{f=0}^{v-1} 2^{min(q,\wu) + min(v, f) +max(min(q-\wu,v-f), 0)}  
\TT2^{\ceil*{ \frac{r(n-2)+\min(2r-2max(f+1-\beta, 0),r)}{2} } +max(q-\wu, v-f, 0)-1}
\]
\[
= \sum_{f=0}^{v-1} 2^{min(q,\wu) + min(v, f) + max(min(q,\wu), min(v, f))}  
\TT2^{\ceil*{ \frac{r(n-2)+\min(2r-2max(f+1-\beta, 0),r)}{2} } +max(q-\wu, v-f, 0)-1}
\]
\[
= \sum_{f=0}^{v-1}
\TT2^{\ceil*{ \frac{r(n-1)+\min(r-2max(f+1-\beta, 0),0)}{2} } +q+v-1}
= \sum_{f=0}^{v-1}
\TT2^{\ceil*{ \frac{r(n-1)+\min(2v, r-2max(f+1-v-\beta,-v))}{2} } +q-1}.
\]
Thus $C_2$ has the required divisibility, since $max(f+1-v-\beta,-v) \leq 0$ owing to $f < v$.  Hence so does $C = C_1 + C_2$.  This proves
the induction step for $M$.

\bigskip
Having proved the divisibility property for $M$, we may now use it in the proof for $Q$:
\[
D = \sum\limits_{i=0}^{2^q-1} \sum\limits_{j=0}^{2^v-1} \#(k{+}w2^{r-q}i{+}g2^{r-v}j)_{Q, z=T(\xx)+u2^{r-v}j}
\]
\[
 = \sum\limits_{i=0}^{2^q-1} \sum\limits_{j=0}^{2^v-1} \#(k{+}2^{r-q+\wu}i)_{G, z=2^{r-v}j},
\]
where $G(\xx, z) =  M(\xx) + T(\xx)L(\xx) + L(\xx)uz - gz+ mT(\xx)^2+ 2muT(\xx)z + mu^2z^2$ (we used analogous transformation as we did for
$M$ at the beginning of the proof). Let 
\[ 
H(\xx, z) = M(\xx) + T(\xx)L(\xx) + L(\xx)uz - gz+ mT(\xx)^2+ 2muT(\xx)z, 
\]
which means $H$ has no square term for $z$, and its degree is up to $2$. Later we will apply the divisibility we just proved for $M$, as
induction hypothesis, to $H$. The sum we are working on equals
\[
D = \sum\limits_{i=0}^{2^q-1} \sum\limits_{j=0}^{2^v-1} \#(k{+}w2^{r-q}i - mu^2(2^{r-v}j)^2)_{H, z=2^{r-v}j}
\]
\[
= \sum\limits_{i=0}^{2^q-1} \sum\limits_{j=0}^{2^v-1} \#(K(i){-} t(2^{r-v}j)^2)_{H, z=2^{r-v}j},
\]
where $K(i) =  k + w2^{r-q}i$ and $t = mu^2$. 

Now we iterate over $2^{r-v}j$, by going through all possible orders of it, as usual denoted by $f$. We have
\begin{align} 
D =\sum\limits_{f=r-v}^{r-1} \sum\limits_{i=0}^{2^q-1}  \sum\limits_{j=0}^{2^{r-f-1}-1} 
\#(K(i){-}t(2^f(2j+1))^2)_{H,  z=2^f(2j+1)}
+ \sum\limits_{i=0}^{2^q-1} \#K(i)_{H,  z=0}. \label{sqr}
\end{align}

We consider cases when $r$ is even or odd separately. Let us first take $r$ to be even.
From the sum \ref{sqr} above, let us take any component of it having a single $f \leq \frac{r}{2}-1$.  We will show that each such a
component has the required divisibility
\[
\sum\limits_{i=0}^{2^q-1}  \sum\limits_{j=0}^{2^{r-f-1}-1} 
\#(K(i){-}t(2^f(2j+1))^2)_{H,  z=2^f(2j+1)}
\]
\[
=\sum\limits_{i=0}^{2^q-1}  \sum\limits_{j=0}^{2^{\frac{r}{2}-f-1}-1} \sum\limits_{h=0}^{2^{\frac{r}{2}}-1} 
\#(K(i){-}t(2^f(2j{+}1){+}2^{\frac{r}{2}}h)^2)_{H,  z=2^f(2j+1)+2^{\frac{r}{2}}h}\;,
\]
where we changed order of iteration on $z$ to consider it in groups that belong to subrings $\ZZ_{2^{r/2}}$, 
\[
=\sum\limits_{i=0}^{2^q-1}  \sum\limits_{j=0}^{2^{\frac{r}{2}-f-1}-1} \sum\limits_{h=0}^{2^{\frac{r}{2}}-1} 
\#(K(i){-}t((2j+1)(2^f{+}2^{\frac{r}{2}}h))^2)_{H,  z=(2j+1)(2^f+2^{\frac{r}{2}}h)}\;,
\]
where we changed order of iteration for $h$ by multiplying by $2j+1$ which is odd (via Lemma \ref{lem1}),
\[
=\sum\limits_{j=0}^{2^{\frac{r}{2}-f-1}-1} \sum\limits_{i=0}^{2^q-1} \sum\limits_{h=0}^{2^{\frac{r}{2}}-1} 
\#(K(i){-}t(2j+1)^2(2^{2f}{+}2^{\frac{r}{2}+f+1}h))_{H,  z=(2j+1)(2^f+2^{\frac{r}{2}}h)}
\]
\[
=\sum\limits_{j=0}^{2^{\frac{r}{2}-f-1}-1} \sum\limits_{i=0}^{2^q-1} \sum\limits_{h=0}^{2^{\frac{r}{2}}-1} 
\#( k {-}t(2j+1)^22^{2f}  {+} w2^{r-q}i 
{-}t(2j+1)^2 2^{\frac{r}{2}+f+1}h)_{H,  z=(2j+1)(2^f+2^{\frac{r}{2}}h)}
\]
\[
=\sum\limits_{j=0}^{2^{\frac{r}{2}-f-1}-1} \TT_j2^{\ceil*{ \frac{r(n-1)+\min(2\frac{r}{2},r)}{2} }
+q-1} =  \sum\limits_{j=0}^{2^{\frac{r}{2}-f-1}-1} \TT_j2^{\ceil{ \frac{rn}{2} }
+q-1},
\]
which may even have a higher divisibility than required. We used the induction hypothesis with:
\begin{align*}
&k' = k {-}t(2j+1)^22^{2f} \\
&T'(\xx) = (2j+1)2^f \\
&g' = {-}t(2j+1)^2 2^{f+1} \\
&u' = (2j+1). \\
\end{align*}
Let us note again, that we used the induction hypothesis for a polynomial of $n$ variables, yet it is for $H$ that does not have a square
term in $z$, therefore is of the same form as $M$, for which the theorem for $n$ variables is already proved.

Now let us consider those terms involving $f$ (from the earlier mentioned sum \ref{sqr}, that $D$ became) for which $f \geq max(\frac{r}{2},
r-v)$:
\[
\sum\limits_{f=max(\frac{r}{2}, r-v)}^{r-1} \sum\limits_{i=0}^{2^q-1}  \sum\limits_{j=0}^{2^{r-f-1}-1} 
\#(K(i){-}t(2^f(2j+1))^2)_{H,  z=t2^f(2j+1)}
+ \sum\limits_{i=0}^{2^q-1} \#K(i)_{H,  z=0}
\]
\[
=  \sum\limits_{i=0}^{2^q-1} \sum\limits_{j=0}^{2^{r-max(\frac{r}{2}, r-v)}-1}
\#(K(i){-}t(2^{max(\frac{r}{2}, r-v)}j)^2)_{H,  z=2^{max(\frac{r}{2}, r-v)}j}\]
\[=  \sum\limits_{i=0}^{2^q-1} \sum\limits_{j=0}^{2^{r-max(\frac{r}{2}, r-v)}-1}
\#(k {+} w2^{r-q}i)_{H,  z=2^{max(\frac{r}{2}, r-v)}j}
\]
\[
=  \TT2^{\ceil*{ \frac{r(n-1)+\min(2(r-max(\frac{r}{2}, r-v)),r)}{2} }+q-1} =  \TT2^{\ceil*{ \frac{r(n-1)+\min(2v,r)}{2} }+q-1},
\]
with use of the induction hypothesis for $M$ of $n$ variables.

Let us move now to the second case, that is when $r$ is odd. First we take from $D$, being written in the form of the sum \ref{sqr}, any
component for a single value of $f$, such that $f \leq \frac{r-1}{2}-1$, and show that such a component has required divisibility:
\[
\sum\limits_{i=0}^{2^q-1}  \sum\limits_{j=0}^{2^{r-f-1}-1} 
\#(K(i){-}t(2^f(2j+1))^2)_{H,  z=2^f(2j+1)} 
\]
\[
=\sum\limits_{i=0}^{2^q-1}  \sum\limits_{j=0}^{2^{\frac{r-1}{2}-f-1}-1} \sum\limits_{h=0}^{2^{\frac{r+1}{2}}-1} 
\#(K(i){-}t(2^f(2j+1){+}2^{\frac{r-1}{2}}h)^2)_{H,  z=2^f(2j+1){+}2^{\frac{r-1}{2}}h} 
\]
\[
=\sum\limits_{i=0}^{2^q-1}  \sum\limits_{j=0}^{2^{\frac{r-1}{2}-f-1}-1} \sum\limits_{h=0}^{2^{\frac{r+1}{2}}-1} 
\#(K(i){-}t(2j+1)^2(2^f{+}2^{\frac{r-1}{2}}h)^2)_{H,  z=(2j+1)(2^f{+}2^{\frac{r-1}{2}}h)}
\]
Now we add and subtract $1$ inside the expression
\begin{align*}
=\sum\limits_{i=0}^{2^q-1}  \sum\limits_{j=0}^{2^{\frac{r-1}{2}-f-1}-1} \sum\limits_{h=0}^{2^{\frac{r+1}{2}}-1} 
\#\Bigg(K(i){-}t&(2j+1)^2 \\
\Big( 2^{2f}{+}2^{\frac{r+1}{2}+f}h &+
2^{r-1}h^2{+}2^{\frac{r+1}{2}}(2^{\frac{r-3}{2}}{-}2^f)h{-}2^{\frac{r+1}{2}}(2^{\frac{r-3}{2}}{-}2^f)h \Big) \Bigg) _{H,
z=(2j+1)(2^f{+}2^{\frac{r-1}{2}}h)}\\
=\sum\limits_{i=0}^{2^q-1}  \sum\limits_{j=0}^{2^{\frac{r-1}{2}-f-1}-1} \sum\limits_{h=0}^{2^{\frac{r+1}{2}}-1} 
\#\Bigg(K(i){-}t&(2j+1)^2\\
\Big(2^{2f}{+}&2^{\frac{r+1}{2}}h(2^f{+}2^{\frac{r-3}{2}}h{+}2^{\frac{r-3}{2}}{-}2^f){-}2^{\frac{r+1}{2}}(2^{\frac{r-3}{2}}{-}2^f)h\Big)\Bigg)
_{H, z=(2j+1)(2^f{+}2^{\frac{r-1}{2}}h)}\\
=\sum\limits_{i=0}^{2^q-1}  \sum\limits_{j=0}^{2^{\frac{r-1}{2}-f-1}-1} \sum\limits_{h=0}^{2^{\frac{r+1}{2}}-1} 
\#\Bigg(K(i)
{-}t&(2j+1)^2\Big(2^{2f}{+}2^{r-1}h(h{+}1){-}2^{\frac{r+1}{2}}(2^{\frac{r-3}{2}}{-}2^f)h\Big)\Bigg)
_{H, z=(2j+1)(2^f{+}2^{\frac{r-1}{2}}h)}\\
=\sum\limits_{j=0}^{2^{\frac{r-1}{2}-f-1}-1} \sum\limits_{i=0}^{2^q-1}   \sum\limits_{h=0}^{2^{\frac{r+1}{2}}-1} 
\#\Bigg(k{-}t(2j&+1)^22^{2f}{+} w2^{r-q}i 
{+}2t(2j+1)^2(2^{\frac{r-3}{2}}{-}2^f) 2^{\frac{r-1}{2}}h\Bigg)
_{H, z=(2j+1)(2^f{+}2^{\frac{r-1}{2}}h)}
\end{align*}
\[
=\sum\limits_{j=0}^{2^{\frac{r-1}{2}-f-1}-1}  \TT_j2^{\ceil*{ \frac{r(n-1)+\min(2\frac{r+1}{2},r)}{2}}+q-1}
=\sum\limits_{j=0}^{2^{\frac{r-1}{2}-f-1}-1}  \TT_j2^{\ceil*{ \frac{rn}{2}}+q-1},
\]
which may have even higher than required divisibility. The term $2^{r-1}h(h+1)$ always multiplies to $2^r$ so it cancels.  We used the
induction hypothesis of $M$ with:
\begin{align*}
&k' = k{-}t(2j+1)^22^{2f}\\
&T'(\xx) = (2j+1)2^f \\
&g' = 2t(2j+1)^2(2^{\frac{r-3}{2}}{-}2^f) \\
&u' = (2j+1). \\
\end{align*}
Let us consider now those $f$-s from $D$ written as the sum \ref{sqr} for which $f \geq max(\frac{r-1}{2}, r-v)$:
\[
D' = \sum\limits_{f=max(\frac{r-1}{2}, r-v)}^{r-1} \sum\limits_{i=0}^{2^q-1}  \sum\limits_{j=0}^{2^{r-f-1}-1} 
\#(K(i){-}t(2^f(2j+1))^2)_{H,  z=2^f(2j+1)}
+ \sum\limits_{i=0}^{2^q-1} \#K(i)_{H,  z=0}
\]
\[
=\sum\limits_{i=0}^{2^q-1}  \sum\limits_{j=0}^{2^{r-max(\frac{r-1}{2}, r-v)}-1} 
\#(K(i){-}t(2^{max(\frac{r-1}{2}, r-v)}j)^2)_{H,  z=2^{max(\frac{r-1}{2}, r-v)}j}
\]
\[
=\sum\limits_{i=0}^{2^q-1}  \sum\limits_{j=0}^{2^{r-max(\frac{r-1}{2}, r-v)}-1} 
\#(K(i){-}t2^{max(r-1, 2(r-v))}j)_{H,  z=2^{max(\frac{r-1}{2}, r-v)}j}.
\]
The last observation which allows us to use $j$ not $j^2$ is that if $j$ is even the term with $j$ will cancel, while if $j$ is odd then $j$ is multiplying either $0$ or $2^{r-1}$, so the difference between $j$ and $j^2$ is immaterial.  Finishing up:
\[
D'=\sum\limits_{i=0}^{2^q-1}  \sum\limits_{j=0}^{2^{r-max(\frac{r-1}{2}, r-v)}-1} 
\#(k{+}w2^{r-q}i{-}t2^{max(\frac{r-1}{2}, r-v)}2^{max(\frac{r-1}{2}, r-v)}j)_{H,  z=2^{max(\frac{r-1}{2}, r-v)}j}
\]
\[
=\TT2^{\ceil*{ \frac{r(n-1)+\min(2(r-max(\frac{r-1}{2}, r-v)),r)}{2} }+q-1} =\TT2^{\ceil*{ \frac{r(n-1)+\min(2v,r)}{2} }+q-1}.
\]
\end{proof}

\subsection{Proof of the base case
} \label{s:base}

We begin with statements of lemmas and corollaries that we will directly need for the theorem's proof. Then we present the proof itself. We
will end with proofs of the aforementioned lemmas and corollaries, including some additional ones that we build on. 

Let us recall that by $o(m)$ we represent the order of $m$ in a given ring.
\begin{lemma} \label{multi}
For any $m \in Z_{2^r}$,
\begin{align*}
\hspace{3.78cm} \#\p*{x,y \in Z_{2^r}: xy = m} =
\begin{cases}
(o(m)+1)2^{r-1} & \text{\;if $m \neq 0$}.\\
(r+2)2^{r-1} & \text{\;otherwise}. \hspace{3.78cm} \Box
\end{cases} \\
\end{align*}
\end{lemma}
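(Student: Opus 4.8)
The plan is to count the pairs by fixing the first coordinate $x$, determining for each $x$ how many $y$ satisfy $xy = m$, and then summing over $x$ grouped by order. First I would record the distribution of orders in $\ZZ_{2^r}$: every $x \neq 0$ can be written uniquely as $x = 2^a u$ with $0 \leq a \leq r-1$ and $u$ odd modulo $2^{r-a}$, so there are exactly $2^{r-a-1}$ elements of order $a$ for each such $a$, together with the single element $0$ of order $r$ (using the convention $o(0) = r$ from the definition of order). A quick check confirms $\sum_{a=0}^{r-1} 2^{r-a-1} + 1 = 2^r$, so this exhausts the ring.

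Next I would solve $xy = m$ for fixed $x$. If $o(x) = a \leq r-1$, write $x = 2^a u$ with $u$ odd; then $xy = m$ forces $2^a \mid m$, i.e.\ $o(m) \geq a$, and when this holds, writing $m = 2^a m'$ reduces the equation to $u y \equiv m' \pmod{2^{r-a}}$. Since $u$ is a unit modulo $2^{r-a}$, this determines $y$ uniquely modulo $2^{r-a}$, hence gives exactly $2^a$ solutions $y$ modulo $2^r$. So a fixed $x$ of order $a < r$ contributes $2^a$ pairs when $a \leq o(m)$ and none otherwise. The remaining element $x = 0$ contributes $2^r$ pairs when $m = 0$ and none when $m \neq 0$.

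Finally I would sum, splitting on whether $m$ vanishes. For $m \neq 0$, with $c = o(m) \leq r-1$, only $x$ of order $a \leq c$ contribute, giving
\[
\sum_{a=0}^{c} 2^{r-a-1}\cdot 2^a = \sum_{a=0}^{c} 2^{r-1} = (o(m)+1)2^{r-1}.
\]
For $m = 0$ every $a \leq r-1$ contributes and the zero element adds $2^r$, giving
\[
\sum_{a=0}^{r-1} 2^{r-a-1}\cdot 2^a + 2^r = r\,2^{r-1} + 2^r = (r+2)2^{r-1}.
\]
The pleasant collapse is that each order class contributes the same amount $2^{r-a-1}\cdot 2^a = 2^{r-1}$ independent of $a$. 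I do not expect a serious obstacle here; the only points needing care are the bookkeeping of the convention $o(0)=r$ (which is exactly what makes the two cases line up, since for $m=0$ every order $a<r$ satisfies $a \leq o(m)$) and the invertibility of the odd unit $u$ modulo the truncated ring $2^{r-a}$, a standard fact about the rings $\ZZ_{2^r}$.
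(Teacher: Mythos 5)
Your proof is correct, and it takes a genuinely different route from the paper's. You condition on the first coordinate: for $x=2^au$ of order $a<r$ the equation $xy=m$ is solvable iff $2^a\mid m$, in which case it has exactly $2^a$ solutions $y$, and then each order class of $x$ contributes $2^{r-a-1}\cdot 2^a=2^{r-1}$, so the two cases of the lemma fall out of the same one-line sum. The paper instead exploits the additivity $o(x)+o(y)=o(m)$: it counts all pairs $(x,y)$ with $o(x)+o(y)=t$, distributes that total evenly over the $2^{r-t-1}$ elements $m$ of order $t$ by an appeal to symmetry (multiplication by units permutes the targets of a fixed order), and then obtains the $m=0$ case by subtracting the nonzero cases from $2^{2r}$, which requires summing an arithmetic--geometric series. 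Your argument buys directness: it needs no symmetry/averaging step (which in the paper is asserted rather than justified) and handles $m=0$ uniformly with $m\neq 0$ rather than by complementation; the only facts you use are the count $2^{r-a-1}$ of elements of each order and the invertibility of odd residues modulo $2^{r-a}$, both of which you correctly flag. The paper's version, for its part, foregrounds the order-additivity relation, which is more in the spirit of the order bookkeeping used elsewhere in the paper, but as a proof of this particular lemma yours is the cleaner of the two.
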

Below we will work with {\em multisets\/}. We will use a ``multiplicative'' notation to represent them. For example,
\[ 
2 \p*{ 1} = \p*{1, 1}; \hspace{8mm} 3 \p*{1,3} \cup 2\p*{2}= \p*{1, 1, 1, 2, 2, 3, 3, 3}; \hspace{8mm} 2\bigcup_{i=0}^{1}\p*{3i} =
\p*{0,0,3,3}.
\]

\begin{lemma} \label{types}
Let us take a polynomial $P(x) = ax^2 + bx +c$ over $\ZZ_{2^r}$. Let $a = q2^w$ and $b = g2^h$ such that $q$ and $g$ are odd and $w, h$ are
orders of respectively $a$ and $b$.
Let $m = min(w, h)$. The image of $P(x)$ treated as a multiset equals
\begin{enumerate}[label=\alph*)]
  \item If $w > h$ :
  \[     2^m \bigcup_{i=0}^{2^{r-m}-1} \p*{2^m i + c}     \]
  
  \item If $w = h$ :
  
\[ 
\begin{array}{ll}
2^{m+1} \bigcup\limits_{i=0}^{2^{r-m-1}-1} \p*{2^{m+1} i + c}   & \text{if }  m < r\\
2^r\p*{c} & \text{if } m = r
\end{array}
\]

\item If $w < h$ :\\
\[
\left(\bigcup_{f=0}^{\ceil*{\frac{r-m}{2}}-1} 2^{min(f+2, r-f-1)+min(m, max(0, r-2f-3))} \bigcup_{i=0}^{max(0, 2^{r-2f-3-m}-1)}
\p*{2^{2f+3+m}i + q2^{2f+m} + t } \right) 
\]
\[
\cup\; 2^{\floor*{\frac{r+m}{2}}}\p*{t}
\]
where $ t= c-\frac{b^2}{2^{m+2}q}$. \hfill $\Box$

\end{enumerate}
\end{lemma}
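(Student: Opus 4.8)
The plan is to compute the image multiset of $P$ by isolating the common factor $2^m$ and analysing the residual quadratic, splitting according to which coefficient carries the lower order. Throughout I would use that $x$ ranges over $\ZZ_{2^r}$, that the value of a polynomial depends only on its argument modulo a suitable smaller power of $2$, and Lemma~\ref{lem1} to rewrite cosets of principal ideals. After the harmless shift by $c$ (which merely translates the multiset) the three cases are governed by $m=\min(w,h)$.

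First I would dispatch case (a), $w>h$, so $m=h$. Writing $P(x)-c = 2^h\psi(x)$ with $\psi(x)=q2^{w-h}x^2+gx$, I would check that $\psi \bmod 2^{r-h}$ depends only on $x \bmod 2^{r-h}$ and is a bijection of $\ZZ_{2^{r-h}}$: if $\psi(x_1)\equiv\psi(x_2)$ then $(x_1-x_2)\bigl(g+q2^{w-h}(x_1+x_2)\bigr)\equiv 0 \pmod{2^{r-h}}$, and the second factor is odd because $w-h\geq1$, hence a unit. So each residue mod $2^{r-h}$ is attained by $\psi$ exactly $2^h$ times, giving the stated multiset. For case (b), $w=h=m$, I would factor $qx^2+gx = x(qx+g)$ and observe that $x$ and $qx+g$ always have opposite parities, so the product is even; the fixed-point-free involution $x \mapsto -g/q - x$ (well defined since $q$ is a unit) sends an argument to the other root of $qX^2+gX=\text{const}$, and a short parity argument on $(x-x')\bigl(q(x+x')+g\bigr)\equiv0$ shows these are the only two preimages. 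Hence the residual map is exactly $2$-to-$1$ onto the even residues of $\ZZ_{2^{r-m}}$, and combining its factor of $2$ with the $2^{m}$-fold lifting yields multiplicity $2^{m+1}$ on the values $2^{m+1}i+c$, the degenerate subcase $m=r$ giving $2^r\p*{c}$.

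The substantive case is (c), $w<h$, where I would complete the square. Because $h>w=m$, both $\tfrac{b}{2a}=(g/q)2^{h-w-1}$ and $\tfrac{b^2}{4a}=(g^2/q)2^{2h-w-2}$ are honest elements of $\ZZ_{2^r}$ (the exponents are nonnegative and $q$ is a unit), so $s = x + (g/q)2^{h-w-1}$ is a bijection of $\ZZ_{2^r}$ reducing the image to that of $q2^w s^2 + t$ with $t = c - b^2/(2^{m+2}q)$. I would then stratify $s$ by its order $f=o(s)$: an $s=2^f u$ with $u$ odd contributes $q2^{w+2f}u^2+t$, while once $2f+w\geq r$ the square vanishes and only $t$ survives. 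The counting then rests on the structure of squares of odd units modulo $2^k$ with $k=r-w-2f$: for $k\geq3$ the unit squares are exactly the residues $\equiv 1\pmod 8$, each with four square roots, degenerating to a single value for $k\leq2$. Combining the lifting of units mod $2^{r-f}$ down to mod $2^k$ with this four-to-one behaviour produces multiplicity $2^{m+f+2}$ over $2^{r-m-2f-3}$ distinct values $q2^{2f+m}+2^{2f+3+m}i+t$ in the generic regime, and summing the contributions of all $s$ with $2f+w\geq r$ gives the single class $t$ with multiplicity $\sum_{f\geq\lceil(r-m)/2\rceil}2^{r-f-1}+1 = 2^{\lfloor(r+m)/2\rfloor}$.

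I expect the main obstacle to be the bookkeeping in case (c): the nested $\min$ and $\max$ in the exponents encode the transitions between the regimes $k\geq3$, $k\in\{1,2\}$ and $k\leq0$, together with the truncation of $f$ near $(r-m)/2$, so the work is to verify that the generic multiplicity $2^{\min(f+2,r-f-1)+\min(m,\max(0,r-2f-3))}$ and the range $\bigcup_{i=0}^{\max(0,2^{r-2f-3-m}-1)}$ collapse correctly at every boundary. A secondary care point is justifying completing the square over a ring with zero divisors, which is precisely where the hypothesis $w<h$ enters; the alternative of tracking the linear term directly avoids the substitution but makes the multiplicity count considerably messier, so I would keep the square-completion.
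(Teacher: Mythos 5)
Your plan matches the paper's proof: the same three-way case split, with (a) and (b) handled by showing the reduced quadratic induces a bijection (resp.\ an exactly $2$-to-$1$ map onto the even residues) of $\ZZ_{2^{r-m}}$, and (c) by completing the square and invoking the structure of the multiset of squares, which is precisely the content of the paper's Lemma \ref{sqrr}. The only differences are in how the auxiliary facts are justified --- you prove the injectivity/$2$-to-$1$ claims by factoring $\phi(x_1)-\phi(x_2)$ and cite the classical ``odd squares are exactly the residues $\equiv 1 \bmod 8$, each with four roots'' fact, whereas the paper derives both via Hensel lifting --- and either route is fine.
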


Let us introduce now a concept of a \emph{slice}, which is a coset of an ideal of a ring. If we look at the multiset that
the image of $P$ in above corollary is, then in cases a) and b) it is just a single slice (possibly with each distinct element having
multiple occurrences). In the case c) the image of $P$ is built from multiple slices, one for each $f$ between $0$ and
$\ceil*{\frac{r-m'}{2}}-1$, and then a final slice $\p*{t}$.

Let us say we would be interested in an intersection between the images of two functions $P(x)$ and $Q(y)$. More precisely we would want to
evaluate:
\[
\#\p*{(x, y) : P(x) = Q(y)}.
\]
To start with, for sake of intuition, let's suppose we are working over $\ZZ_3$ (\ie, both functions have a $3$-element domain), and that
the image of $P$ is $\p*{0,1,1}$ (with $P(0) = 0$) while the image of $Q$ is $\p*{1,1,1}$. Then 
\[ 
\p*{(x, y) : P(x) = Q(y)} = \p*{(1,0), (1,1), (1, 2), (2,0), (2,1), (2,2)} 
\]
and the size of the intersection of the images is $6$. When we look directly at the multisets $\p*{0,1,1}$ and $\p*{1,1,1}$ that the images
constitute, we would like the intersection of them, as we understand it, to be also of size $6$. This gives rise to the
 ``\emph{multiplicative}'' intersection concept, which in our example is
\[ 
\p*{(1,1), (1,1), (1, 1), (1,1), (1,1), (1,1)}.
 \]
That is, each element from the first image is paired up with each element of the second image that it is equal to. For another example, the
intersection of $\p*{1, 1, 2}$ and $\p*{1, 2, 3}$ is deemed to be
\[ 
\p*{(1, 1), (1, 1), (2, 2)}. 
\]
In general, if the first multiset has distinct elements $p_i$ with respective numbers of occurrences $a_i$, and the second multiset has
distinct elements $q_j$ with numbers of occurrences $b_j$, then the size of their intersection is deemed to be
\[ 
\sum_{i, j: p_i=q_j} a_i b_j .
\]
As synonyms of ``intersection'' we will also say ``common elements'' or ``overlap.''

\begin{Col} \label{slices}
Let $P(x) = a_x x^2 + b_x x +c_x$ and $Q(y, h) = a_y y^2 + b_y y + c_y +  2^{r-d}h$ where $d \leq r$. Then for any $ v, q \in [d, r]$, 
when we work over $\ZZ_{2^r}$ it holds that:
\[
2^{min(v,q)+d} \;|\; \#\left\{(x, y, h) : x \in \bigcup\limits_{j=0}^{2^v-1} \{l_x+
2^{r-v}j \}, y \in \bigcup\limits_{j=0}^{2^q-1} \p*{l_y+ 2^{r-q}j} , h \in \bigcup\limits_{j=0}^{2^d-1} \p*{j}, P(x)  = Q(y, h) \right\}
\]
for any $l_x, l_y$, where $|$ stands for divides. \hfill  $\Box$
\end{Col}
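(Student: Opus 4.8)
The plan is to peel off the three summation variables in stages, reducing to a clean two-polynomial overlap count to which Lemma~\ref{types} applies. Throughout, $x$ ranges over the $x$-slice $\{l_x + 2^{r-v}j : 0 \le j < 2^v\}$ (a coset of the ideal $2^{r-v}\ZZ_{2^r}$) and $y$ over the analogous $y$-slice of size $2^q$. First I would eliminate $h$. Writing $Q_0(y) = a_y y^2 + b_y y + c_y$, so that $Q(y,h) = Q_0(y) + 2^{r-d}h$, note that as $h$ runs over $\{0,\dots,2^d-1\}$ the term $2^{r-d}h$ runs exactly once through every multiple of $2^{r-d}$ in $\ZZ_{2^r}$. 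Hence for each fixed $(x,y)$ there is precisely one admissible $h$ when $P(x) \equiv Q_0(y) \pmod{2^{r-d}}$ and none otherwise, so the quantity to be bounded equals $\#\{(x,y)\ \text{in the two slices} : P(x) \equiv Q_0(y) \pmod{2^{r-d}}\}$.

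Next I would reduce modulo $2^{r-d}$. Since $x \equiv x' \pmod{2^{r-d}}$ forces $P(x)\equiv P(x')$ and likewise for $Q_0$, the congruence depends only on the residues $\bar x,\bar y$ in $\ZZ_{2^{r-d}}$. Because $v,q\geq d$, the $x$-slice maps onto a coset of $2^{r-v}\ZZ_{2^{r-d}}$ of size $2^{v-d}$ with every fibre of size $2^d$, and symmetrically for $y$. This contributes a factor $2^{2d}$ and leaves the residual count $N = \#\{(\bar x,\bar y) : \bar P(\bar x) = \bar Q_0(\bar y)\}$ over $\ZZ_{2^{r-d}}$, with $\bar x,\bar y$ ranging over slices of sizes $2^{v-d}$ and $2^{q-d}$. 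As $2^{2d}\cdot 2^{\min(v-d,q-d)} = 2^{\min(v,q)+d}$, it suffices to prove the core claim $2^{\min(v-d,q-d)} \mid N$, which is exactly the $d=0$ instance of the corollary over the smaller ring $\ZZ_{2^{r-d}}$.

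For the core claim I would bring Lemma~\ref{types} to bear. Restricting a quadratic to a slice is again a quadratic: substituting $\bar x = \bar l_x + 2^{r-v}j$ and expanding, the image of $\bar P$ on the $x$-slice equals $\bar P(\bar l_x) + 2^{r-v}\{\hat R(j) : j \in \ZZ_{2^{v-d}}\}$ for an explicit quadratic $\hat R$ over $\ZZ_{2^{v-d}}$ (using Lemma~\ref{lem1} to reindex cosets where convenient). Lemma~\ref{types} then describes this image as a union of slices, each carrying a uniform multiplicity that is a power of two, and similarly for $\bar Q_0$ on the $y$-slice. In the multiplicative-intersection language introduced just before Corollary~\ref{slices}, $N$ is exactly $\sum_{S,S'} 2^{\mu_S+\mu_{S'}}\,|S\cap S'|$, summed over the constituent slices $S$ of $\bar P$ and $S'$ of $\bar Q_0$, where $\mu_S,\mu_{S'}$ are the recorded multiplicities and $|S\cap S'|$ is the set-level overlap.

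The main work, and the step I expect to be the chief obstacle, is bounding the $2$-adic valuation of this intersection. Two cosets of ideals $2^{a}\ZZ$ and $2^{b}\ZZ$ in $\ZZ_{2^{r-d}}$ are either disjoint or meet in a coset of size $2^{(r-d)-\max(a,b)}$ (the set-level specialization of the counting in Lemma~\ref{multi}), so each summand has valuation $\mu_S + \mu_{S'} + (r-d) - \max(a_S,b_{S'})$. One must run through the three cases (a), (b), (c) of Lemma~\ref{types} for each of $\bar P$ and $\bar Q_0$, substitute the explicit ideal levels and multiplicities recorded there, and verify that every nonvanishing summand has valuation at least $\min(v-d,q-d)$; an unsolvable constraint simply contributes a zero summand, which is harmlessly divisible. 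The bookkeeping is delicate because the case~(c) image is itself a union of $\Theta(r)$ slices of geometrically varying multiplicities, so the argument becomes a finite check over the interaction of the indices $f,f'$ rather than one clean identity. Once each summand clears the $2^{\min(v-d,q-d)}$ bar, summing and reinstating the $2^{2d}$ factor yields the stated divisibility by $2^{\min(v,q)+d}$.
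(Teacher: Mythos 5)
Your reductions at the front are sound and in fact cleaner than the paper's: eliminating $h$ (exactly one admissible $h$ per pair with $P(x)\equiv Q_0(y) \bmod 2^{r-d}$), passing to $\ZZ_{2^{r-d}}$ to extract the factor $2^{2d}$, and reducing to the $d=0$ case with slice sizes $2^{v-d},2^{q-d}$ are all valid given $v,q\ge d$. The paper instead keeps $d$ throughout, folds the $2^{r-d}h$ term into the image of $Q$, proves the equal-slice-size case as a separate statement (Lemma \ref{lin}), and obtains the corollary by splitting the larger of the two domains into $2^{|v-q|}$ congruent sub-slices of the smaller size and summing. Your description of the remaining task --- describe the two restricted images via Lemma \ref{types} (really Corollary \ref{csqr}) as unions of slices with power-of-two multiplicities and bound the $2$-adic valuation of the multiplicative intersection --- is the same strategy the paper uses.

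The genuine gap is in the one sentence carrying all the weight: ``verify that every nonvanishing summand has valuation at least $\min(v-d,q-d)$.'' That statement is false, and the verification you propose would fail. In the paper's proof of Lemma \ref{lin} there are several explicit configurations in which a single pair of slices $S,S'$ contributes $2^{\mu_S+\mu_{S'}}|S\cap S'|$ with valuation exactly one less than required --- for instance (case 3.b.2.2 there) when both images are of type (c), the relevant reduced orders $m^*_x=m^*_y$ and $d$ all vanish and $v$ is odd, the two extremal slices overlap in exactly $2^{v-1}$ elements; the required $2^{v}$ only appears after adding the overlap of a second, specifically identified pair of slices (there, the two slices through $0$, which contribute another $2^{v-1}$). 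Several other parity-sensitive cases behave the same way (3.b.1.1, 3.c.1, and the analogous spots in Lemma \ref{os}). So the argument cannot be run summand-by-summand: one must partition the slice-pair overlaps into groups whose \emph{sums} have the required valuation, and simultaneously certify that no overlap is assigned to two groups. That pairing-and-bookkeeping is the actual content of the paper's proof and is the idea missing from your proposal; without it the ``finite check over the interaction of the indices $f,f'$'' terminates in counterexamples rather than in the theorem.
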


When proving the base induction step of our main theorem we will come against a specific multiset that a polynomial we will have may
potentially intersect with. The following corollary gives us the divisibility of the size of such an intersection.

\begin{Col} \label{osc}
Let us work over a ring $\ZZ_{2^r}$. Let $P(x) = a_x x^2 + b_x x +c$ with $x$ being constrained to domain $ x \in
\bigcup\limits_{j=0}^{2^q-1} \p*{l_x+  2^{r-q}j}$ for certain $l_x$ and $v \leq r$. Let $S$ be the following multiset
\[
\left(\bigcup_{i=0}^{2^e-1} \bigcup_{f_s=0}^{v-1} f_s \bigcup_{s=0}^{2^{v-f_s-1}-1}  \p*{2^{r-e}i + 2^{r-v+f_s}(2s+1) } \right)
\cup \left(\bigcup_{i=0}^{2^e-1} (v+1) \p*{2^{r-e}i } \right)
\]
where $e \leq min(q, v)$. The number of elements of the intersection (understood as the ``multiplicative'' intersection) of the multiset $S$
and the image of $P$ is divisible by
\[
\hspace{6.91cm} 2^{e + min(q, v, \ceil*{\frac{r}{2}})} . \hspace{6.91cm} \Box
\]
\end{Col}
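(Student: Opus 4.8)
Setting $B=\mathrm{Im}(P)$ for the image multiset of $P$ over its constrained domain, the quantity to control is the multiplicative overlap $\#\!\big(S\cap B\big)=\sum_{z}\mathrm{mult}_S(z)\,\mathrm{mult}_B(z)$. The plan is first to put $B$ into slice form and then to exploit the block structure of $S$ to peel off the factor $2^e$. To describe $B$, I would substitute $x=l_x+2^{r-q}j$ so that $P$ restricted to its coset becomes an honest quadratic $\tilde P(j)=\tilde a j^2+\tilde b j+\tilde c$ in $j$, whence Lemma~\ref{types} (in the mild adaptation where $j$ ranges over a subring) presents $B$ as a disjoint union of slices, each a coset of an ideal carrying a single power-of-two multiplicity. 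The essential feature is that in the generic case c) these slices occupy only every second order $m,m+2,m+4,\dots$, and their multiplicities are balanced around order $\approx r/2$; this balance is exactly where the cap $\lceil r/2\rceil$ of the statement is born, via the square term of $P$.

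The key reduction is to translate the blocks of $S$. Writing $S=\bigcup_{i=0}^{2^e-1}\big(S_0+2^{r-e}i\big)$ with $S_0$ the $i=0$ multiset, and substituting $z=w+2^{r-e}i$, one gets
\[
\#\!\big(S\cap B\big)=\sum_{w}\mathrm{mult}_{S_0}(w)\sum_{i=0}^{2^e-1}\mathrm{mult}_B\big(w+2^{r-e}i\big)=\sum_{\bar w\in\ZZ_{2^{r-e}}}\tilde m(\bar w)\,N(\bar w),
\]
where $\bar w=w\bmod 2^{r-e}$, the fiber count $N(\bar w)=\#\{x\in\mathrm{dom}:P(x)\equiv\bar w\bmod 2^{r-e}\}$ is the multiplicity of $\bar w$ in the image of $\bar P=P\bmod 2^{r-e}$, and $\tilde m(\bar w)=\sum_{w\equiv\bar w}\mathrm{mult}_{S_0}(w)$. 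Thus the whole problem descends to a multiplicative overlap over the smaller ring $\ZZ_{2^{r-e}}$, to which Lemma~\ref{types} (now for $\bar P$) again applies.

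Next I would extract the $2^e$. Since $S_0$ is supported on elements of order $\ge r-v$ with multiplicity $o(w)-(r-v)$ (and $v+1$ at $0$), and since any $\bar w\neq 0$ has $o(\bar w)\le r-e-1$, all $2^e$ lifts of such $\bar w$ share its order, giving $\tilde m(\bar w)=2^e\big(o(\bar w)-(r-v)\big)$ when $o(\bar w)\ge r-v$ and $0$ otherwise. Hence the factor $2^e$ is automatic off the origin, and it remains to show that $\sum_{\bar w\neq 0}\big(o(\bar w)-(r-v)\big)N(\bar w)$ has $2$-adic valuation at least $\min(q,v,\lceil r/2\rceil)$, while the single exceptional term $\tilde m(0)N(0)$ must be handled separately: here $\tilde m(0)$ need not be even, so the required $2^{e+\min(q,v,\lceil r/2\rceil)}$ must come entirely from $N(0)$, the preimage count of the top slice, which is large by the high multiplicity $2^{\lfloor(r+m)/2\rfloor}$ of the final slice in Lemma~\ref{types}. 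Both pieces I would finish by reading slice valuations from Lemma~\ref{types}, with Lemma~\ref{multi} accounting for the linearly-growing $(o+1)$-type weights that produce the oscillating multiplicities of $S_0$ in the first place.

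The hard part will be the valuation bookkeeping in the off-origin sum. Because $B$ places mass only at alternate orders with power-of-two multiplicities while the weights $o(\bar w)-(r-v)$ grow linearly across all orders, the sum does not factor, and I expect a case analysis according to which of $q$, $v$, $\lceil r/2\rceil$ is smallest, refined by the parity of $r$ and the position of $m=o(\tilde a)$ relative to $\lceil r/2\rceil$. The $\lceil r/2\rceil$ cap in particular forces the argument to concentrate on the balanced slice near order $r/2$, where Lemma~\ref{types}(c) piles up the largest multiplicity; the delicate point is to check that the slices away from this balanced order can only raise, never lower, the valuation, so that the bound $\min(q,v,\lceil r/2\rceil)$ is attained uniformly and survives the summation against the $N(\bar w)$.
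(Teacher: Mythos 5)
Your setup is sound and matches the paper's machinery: the image of $P$ on its coset domain is put into slice form exactly as in the paper's Corollary \ref{csqr} (the restricted-domain version of Lemma \ref{types}), and your extraction of $2^e$ from the block structure of $S$ agrees with the paper's own rewriting of $S$, where the origin block collapses to $2^e(v-e+1)\bigcup_i\{2^{r-e}i\}$ by the arithmetic-geometric sum. (Note, incidentally, that this makes $\tilde m(0)=2^e(v-e+1)$, so your remark that "$\tilde m(0)$ need not be even" is only an issue for the residual odd factor $v-e+1$, not for the $2^e$.) The paper's actual route is: prove the $q=v$ case as a separate lemma by an exhaustive case analysis over all pairs of slices of $S$ and of $\mathrm{Im}(P)$, then deduce the corollary for general $q$ by cutting the domain into sub-cosets, exactly as for its Corollary \ref{slices}.

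The genuine gap is in your final paragraph, and it is the heart of the matter. Your plan rests on showing that "the slices away from this balanced order can only raise, never lower, the valuation," i.e.\ that each slice-pair overlap individually meets the bound. That is false. In the paper's analysis there are boundary configurations (typically $e=0$, $r$ odd, $v$ even, and the relevant order offsets vanishing) in which a single overlap has size $(v+1)2^{\lfloor r/2\rfloor}$ with $v+1$ odd -- valuation exactly $\lceil r/2\rceil-1$, one short of the target -- and the required divisibility is only recovered by locating a specific \emph{complementary} overlap (e.g.\ one of size $(v+3)2^{\lfloor r/2\rfloor}$) so that the \emph{sum} $(2v+4)2^{\lfloor r/2\rfloor}=(v+2)2^{\lceil r/2\rceil}$ divides correctly, while taking care never to use the same overlap twice. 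For the same reason your claim that the origin term's divisibility "must come entirely from $N(0)$" fails: in those same cases the origin term alone is deficient and must be cancelled against off-origin terms. Without this pairing-and-bookkeeping argument -- which is essentially all of the paper's Lemma \ref{os} -- the proof does not close; the mod-$2^{r-e}$ reduction and the parity/case split you anticipate do not by themselves produce the cross-term cancellations. A secondary, smaller gap: you never say how the parameter $q$ (as opposed to $v$) enters the bound; the paper handles $q\neq v$ by first proving the $q=v$ statement and then tiling the domain, and some such step is needed in your write-up as well.
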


\begin{reptheorem}{main}
\emph{\textbf{(Base case)}}

For any polynomial $Q$ of $3$ variables ($x, y, z$) over $\ZZ_{2^r}$ of degree up to $2$, any
integers $q, v\leq r$ and $k, w, g, u$ and any linear polynomial $T(x, y)$, it holds that:
\[ 
\sum\limits_{i=0}^{2^q-1} \sum\limits_{j=0}^{2^v-1}\#(k{+}w2^{r-q}i{+}g2^{r-v}j)_{Q, z=T(x, y)+u2^{r-v}j}   = \TT2^{r+ \ceil*{
\frac{\min(2v,r)}{2}} +q-1}
\]
for certain integer $\TT$.
\end{reptheorem}

\begin{proof}
Note that when proving $M$ part in the general induction step we use as induction hypothesis
polynomials over $n-1$ variables, whereas when proving $Q$ part we use polynomials on $n$ variables, just without a square term in $z$.
Therefore for the base step we just need to prove that
\[ 
\sum\limits_{i=0}^{2^q-1} \sum\limits_{j=0}^{2^v-1}\#(k{+}w2^{r-q}i{+}g2^{r-v}j)_{M, z=T(x, y)+u2^{r-v}j}   =
\TT2^{r+\min(v,\ceil*{\frac{r}{2}}) +q-1},
\]
where 
\[ 
M(x, y, z) = P(x, y) + L_z(x, y)z,
\]
$L$ is a linear form, and $P$ has degree up to $2$. The transition from $M$ to $Q$ for $n=3$ is already taken care of by the
general induction step. Here we begin:
\[ 
\sum\limits_{i=0}^{2^q-1} \sum\limits_{j=0}^{2^v-1}\#(k{+}w2^{r-q}i{+}g2^{r-v}j)_{M, z=T(x, y)+u2^{r-v}j}  
 = \sum\limits_{i=0}^{2^q-1} \sum\limits_{j=0}^{2^v-1}\#(k{+}w2^{r-q}i{+}g2^{r-v}j)_{U, z=u2^{r-v}j}
\]
where $U(x, y, z) =   P(x, y) + T(x, y)L_z(x, y) + L_z(x, y)z$
\[
= \sum\limits_{i=0}^{2^q-1} \sum\limits_{j=0}^{2^v-1}\#(k{+}2^{r-q+\wu}i)_{R, z=2^{r-v}j}
\]
where $R(x, y, z) = P(x, y) + T(x, y)L_z(x, y) + L_z'(x, y)z$, $L_z'(x,y) = uL_z(x, y)-g$ and $\wu$ is the order of $w$,
\[
= 2^{min(q, \wu)}\sum\limits_{i=0}^{2^{max(q-\wu, 0)}-1} \sum\limits_{j=0}^{2^v-1}\#(k{+}2^{r-q+\wu}i)_{R, z=2^{r-v}j}
\]
\[
= 2^{min(q, \wu)}\sum\limits_{i=0}^{2^e-1} \sum\limits_{j=0}^{2^v-1}\#(k{+}2^{r-e}i)_{R, z=2^{r-v}j}
\]
where $e = max(q-\wu, 0)$. Let us ignore the $2^{min(q, \wu)}$ coefficient, and focus on the sum:
\[
W = \sum\limits_{i=0}^{2^e-1}\sum\limits_{j=0}^{2^v-1}\#(k{+}2^{r-e}i)_{R, z=2^{r-v}j}
\]
for which we need to show that it is divisible by $2^{r+\min(v,\ceil*{\frac{r}{2}}) + e -1}$. We have
\[
W = \sum\limits_{i=0}^{2^e-1}\sum\limits_{j=0}^{2^v-1} \sum\limits_{l=0}^{2^r-1} \#(k{+}2^{r-e}i-l2^{r-v}j)_{G,
L_z'(x, y)=l}
\]
where $G(x,y) =  P(x, y) + T(x, y)L_z(x, y)$. Let $L'_z(x, y) = \alpha_z2^{\beta_z} y + \zeta_z2^{\eta_z} x + \delta_z$, where $\alpha_z$
and $\zeta_z$ are odd and we assume without the loss of generality that $\beta_z \leq \eta_z$. Then,
\[
W = \sum\limits_{i=0}^{2^e-1}\sum\limits_{j=0}^{2^v-1} \sum\limits_{l=0}^{2^{r-\beta_z}-1} \#(k{+}2^{r-e}i-(2^{\beta_z}l +
\delta_z)2^{r-v}j)_{G, L_z'(x, y)=2^{\beta_z}l + \delta_z}
\]
as always $2^{\beta_z} | L'_z(x, y) - \delta_z$,
\[
= \sum\limits_{i=0}^{2^e-1}\sum\limits_{j=0}^{2^v-1} \sum\limits_{l=0}^{2^{r-\beta_z}-1}
\sum\limits_{s=0}^{2^{\beta_z}-1}
\#(k{+}2^{r-e}i-(2^{\beta_z}l + \delta_z)2^{r-v}j)_{G,  y=\frac{l}{\alpha_z}+ \frac{\zeta_z}{\alpha_z}2^{\eta_z-\beta_z} x + 2^{r-\beta_z}s
}
\]
where we solved $L'_z(x, y) = 2^{\beta_z}l + \delta_z$ for $y$.
Taking $G(x,y) = E(x) + L_y(x)y + my^2$, for certain up to quadratic $E$ and linear $L_y$ we can write:
\[
W = \sum\limits_{i=0}^{2^e-1}\sum\limits_{j=0}^{2^v-1} \sum\limits_{l=0}^{2^{r-\beta_z}-1}
\sum\limits_{s=0}^{2^{\beta_z}-1}
\#(k{+}2^{r-e}i-(\alpha_z2^{\beta_z}l + \delta_z)2^{r-v}j)_{U,  y=l + 2^{r-\beta_z}s
}
\]
where $U(x, y) = E(x) + L_y(x)(\frac{\zeta_z}{\alpha_z}2^{\eta_z-\beta_z}x +y) + m(\frac{\zeta_z}{\alpha_z}2^{\eta_z-\beta_z} x)^2 +
2m\frac{\zeta_z}{\alpha_z}2^{\eta_z-\beta_z}xy+ my^2 = E'(x) + L'_y(x)y + my^2$ for certain up to quadratic $E'$ and linear $L'_y$.
\[
W = \sum\limits_{i=0}^{2^e-1}\sum\limits_{j=0}^{2^v-1}
\sum\limits_{s=0}^{2^r-1}
\#(k{+}2^{r-e}i-O(s)2^{r-v}j)_{U,  y=s}
\]
where $O(s) = \alpha_z2^{\beta_z}(s \bmod 2^{r-\beta_z}) + \delta_z  = \alpha_z2^{\beta_z}s + \delta_z$,
\[
= \sum\limits_{i=0}^{2^e-1}\sum\limits_{j=0}^{2^v-1}
\sum\limits_{s=0}^{2^r-1} \sum\limits_{h=0}^{2^r-1}
\#(k{+}2^{r-e}i-(\alpha_z2^{\beta_z}s + \delta_z)2^{r-v}j-hs-ms^2)_{E',  L'_y(x)=h}
\]
\[
= \sum\limits_{i=0}^{2^e-1}\sum\limits_{j=0}^{2^v-1}
\sum\limits_{s=0}^{2^r-1} \sum\limits_{h=0}^{2^r-1}
\#(k{+}2^{r-e}i-\delta_z2^{r-v}j  -(\alpha_z2^{r-v+\beta_z}j+h)s-ms^2)_{E',  L'_y(x)=h}
\]
\[
= \sum\limits_{i=0}^{2^e-1}\sum\limits_{j=0}^{2^v-1}
\sum\limits_{s=0}^{2^r-1} \sum\limits_{h=0}^{2^r-1}
\#(k{+}2^{r-e}i-2^{r-v+\gamma_z}j  -(\alpha'_z2^{r-v+\beta_z}j+h)s-ms^2)_{E',  L'_y(x)=h}
\]
where $\gamma_z$ is the order of $\delta_z$ and $\alpha'_z$ is $\alpha_z$ divided by the odd factor in $\delta_z$,
{\small
\[
W = \sum\limits_{i=0}^{2^e-1}\sum\limits_{j=0}^{2^v-1}
\sum\limits_{s=0}^{2^r-1} \sum_{f_h = 0}^{r-v+\beta_z-1} \sum_{h=0}^{2^{r-f_h-1}-1}
\#(k{+}2^{r-e}i-2^{r-v+\gamma_z}j  -(\alpha'_z2^{r-v+\beta_z}j+2^{f_h}(1+2h))s-ms^2)_{E',  L'_y(x)=2^{f_h}(1+2h)}
\]
\begin{equation} \label{eq:sums}
+\sum\limits_{i=0}^{2^e-1}\sum\limits_{j=0}^{2^v-1}
\sum\limits_{s=0}^{2^r-1} \sum\limits_{h=0}^{2^{max(v-\beta_z, 0)}-1}
\#(k{+}2^{r-e}i-2^{r-v+\gamma_z}j  -(\alpha'_z2^{r-v+\beta_z}j+2^{r-v+\beta_z}h)s-ms^2)_{E',  L'_y(x)=2^{r-v+\beta_z}h}
\end{equation}
}
Let us focus now on the first of the two of the above sums,
{\small
\begin{equation} \label{eq:sum}
S = \sum\limits_{i=0}^{2^e-1}\sum\limits_{j=0}^{2^v-1}
\sum\limits_{s=0}^{2^r-1} \sum_{f_h = 0}^{r-v+\beta_z-1} \sum_{h=0}^{2^{r-f_h-1}-1}
\#(k{+}2^{r-e}i-2^{r-v+\gamma_z}j  -(\alpha'_z2^{r-v+\beta_z}j+2^{f_h}(1+2h))s-ms^2)_{E',  L'_y(x)=2^{f_h}(1+2h)}
\end{equation} 
}
\[
= \sum\limits_{i=0}^{2^e-1}\sum\limits_{j=0}^{2^v-1}
\sum\limits_{s=0}^{2^r-1} \sum_{f_h = 0}^{r-v+\beta_z-1} \sum_{h=0}^{2^{r-f_h-1}-1}
\#(k{+}2^{r-e}i-2^{r-v+\gamma_z}j  -2^{f_h}s-ms^2)_{E',  L'_y(x)=2^{f_h}(1+2h)}
\]
because $\alpha'_z2^{r-v+\beta_z}j$ plays no role due to $f_h < r-v+\beta_z$. Now we use the fact that $2^{r-e}i$ and $2^{r-v+\gamma_z}j$
``overlap'' each other
\[ 
S = 2^{min(\gamma_z, v) + min(e, max(v-\gamma_z, 0))} \sum\limits_{i=0}^{2^d-1}
\sum\limits_{s=0}^{2^r-1} \sum_{f_h = 0}^{r-v+\beta_z-1} \sum_{h=0}^{2^{r-f_h-1}-1}
\#(k{+}2^{r-d}i  -2^{f_h}s-ms^2)_{E',  L'_y(x)=2^{f_h}(1+2h)}
\]
where $d = max(e, v-\gamma_z)$. Let $\mu$ be the order of $m$. First let us consider the part of $S$ where $f_h \leq min(r-d,
\mu)$ 
\[ 
S_1 = 2^{min(\gamma_z, v) + min(e, max(v-\gamma_z, 0))} \sum\limits_{i=0}^{2^d-1}
\sum\limits_{s=0}^{2^r-1} \sum_{f_h = 0}^{l} \sum_{h=0}^{2^{r-f_h-1}-1}
\#(k{+}2^{r-d}i  -2^{f_h}s-ms^2)_{E',  L'_y(x)=2^{f_h}(1+2h)}
\]
where $l = min(r-v+\beta_z-1, r-d, \mu)$,
\[ = 2^{min(\gamma_z, v) + min(e, max(v-\gamma_z, 0))} \sum\limits_{i=0}^{2^d-1}
\sum\limits_{s=0}^{2^r-1} \sum_{f_h = 0}^{l} \sum_{h=0}^{2^{r-f_h-1}-1}
\#(k{+}2^{r-d}i  -2^{f_h+g_f}s)_{E',  L'_y(x)=2^{f_h}(1+2h)}
\]
where $g_f = 0$ or $g_f=1$ depending on $f_h$ ($g_f=1$ only when $r-d \geq \mu = f_h$). We could write the above owing to Lemma \ref{types},
since the polynomial $-2^{f_h}s -ms^2$ on $s$ is of either type a) or b).
\[ S_1 = 2^{min(\gamma_z, v) + min(e, max(v-\gamma_z, 0))} \sum_{f_h = 0}^{l}  2^{f_h+g_f}  
\sum\limits_{s=0}^{2^{r-f_h-g_f}-1} \sum\limits_{i=0}^{2^d-1} \sum_{h=0}^{2^{r-f_h-1}-1} 
\#(k{+}2^{r-d}i  -2^{f_h+g_f}s)_{E',  L'_y(x)=2^{f_h}(1+2h)}
\]
{\small
\[ = 2^{min(\gamma_z, v) + min(e, max(v-\gamma_z, 0))} \sum_{f_h = 0}^{l}  2^{min(r, d+f_h+g_f)}  
\sum\limits_{s=0}^{2^{max(r-f_h-g, d)}-1} \sum_{h=0}^{2^{r-f_h-1}-1} 
\#(k{+}2^{min(f_h+g_f, r-d)}s)_{E',  L'_y(x)=2^{f_h}(1+2h)}
\]
}
\[ =2^{e + v} \sum_{f_h = 0}^{l}  2^{min(r-d, f_h+g_f)}  
\sum\limits_{s=0}^{2^{max(r-f_h-g_f, d)}-1} \sum_{h=0}^{2^{r-f_h-1}-1} 
\#(k{+}2^{min(f_h+g_f, r-d)}s)_{E',  L'_y(x)=2^{f_h}(1+2h)}
\]
as $min(\gamma_z, v) + min(e, max(v-\gamma_z, 0))+d = e+v$, recall that $d = max(e, v-\gamma_z)$.

Now we can use Corollary \ref{slices}. Solving $L'_y(x)=2^{f_h}(1+2h)$ for $x$ will give us a certain slice in which $x$ has to be, and the
size of that slice will be at least $2^{r-f_h-1}$. Therefore the expression 
\[
\sum_{s=0}^{2^{max(r-f_h-g_f, d)}-1}  \sum_{h=0}^{2^{r-f_h-1}-1}
\#(k-2^{min(f_h+g_f, r-d)}s)_{E',  L'_y(x)=2^{f_h}(1+2h)} 
\]
equals the size of the common part of image of $E'(x)$ with $x$ in the just mentioned slice,
and image of function $k{+}2^{min(f_h+g_f, r-d)}s$ (with $s$ in a slice of size $2^{max(r-f_h-g_f, d)}$). Due to Corollary
\ref{slices}, this expression is divisible by $2^{r-f_h-1}$ (size of smaller domain, $d=0$ in the corollary use). Therefore, and because
$r-d \geq l$, we have 
\[
S_1  = 2^{e + v} \sum_{f_h = 0}^{l} 2^{min(r-d, f_h+g_f)} \TT_{f_h} 2^{r-f_h-1}
=  2^{r + e + v -1} \sum_{f_h = 0}^{l} 2^{u_f} \TT_{f_h},
\]
where $u_f =0$ or $1$.  This gives possibly even more than the desired divisibility.

When moving to the part of $S$, written as the sum (\ref{eq:sum}), where $ r-v+\beta_z > f_h > min(r-d, \mu)$, let us consider two cases,
either $r-d \leq \mu$ or $r-d > \mu$.
Let us start with the first of them, and let us consider it now together with the second sum from $W$  written in the form
\eqref{eq:sums}
\[
K = 2^{min(\gamma_z, v) + min(e, max(v-\gamma_z, 0))} \sum\limits_{i=0}^{2^{d} -1}
\sum\limits_{s=0}^{2^r-1} \sum_{f_h = r-d+1}^{r-v+\beta_z-1} \sum_{h=0}^{2^{r-f_h-1}-1}
\#(k{+}2^{r-d}i  -2^{f_h}s -ms^2)_{E',  L'_y(x)=2^{f_h}(1+2h)}
\]
\[
+ \sum\limits_{i=0}^{2^e-1}\sum\limits_{j=0}^{2^v-1}
\sum\limits_{s=0}^{2^r-1} \sum\limits_{h=0}^{2^{max(v-\beta_z, 0)}-1}
\#(k{+}2^{r-e}i-2^{r-v+\gamma_z}j  -(\alpha'_z2^{r-v+\beta_z}j+2^{r-v+\beta_z}h)s-ms^2)_{E',  L'_y(x)=2^{r-v+\beta_z}h}
\]
\[
= 2^{r + min(\gamma_z, v) + min(e, max(v-\gamma_z, 0))} \sum\limits_{i=0}^{2^{d} -1}
\sum_{f_h = r-d+1}^{r-v+\beta_z-1} \sum_{h=0}^{2^{r-f_h-1}-1}
\#(k{+}2^{r-d}i)_{E',  L'_y(x)=2^{f_h}(1+2h)}
\]
\[
+ 2^{r + min(\gamma_z, v) + min(e, max(v-\gamma_z, 0))} \sum\limits_{i=0}^{2^{d} -1}
\sum\limits_{h=0}^{2^{max(v-\beta_z, 0)}-1}
\#(k{+}2^{r-d}i)_{E',  L'_y(x)=2^{r-v+\beta_z}h},
\]
noting that both of $2^{f_h}s$ and $-(\alpha'_z2^{r-v+\beta_z}j+2^{r-v+\beta_z}h)s-ms^2$ have smaller ``granularity'' than $2^{r-d}i$ (the
latter, because we assumed for this case that $\mu \geq r-d$ and indirectly that $\beta_z > \gamma_z$).
Continuing,
\[
K = 2^{r + min(\gamma_z, v) + min(e, max(v-\gamma_z, 0))} \sum\limits_{i=0}^{2^{d} -1}
\sum\limits_{h=0}^{2^{d-1}-1}
\#(k{+}2^{r-d}i)_{E',  L'_y(x)=2^{r-d+1}h}
\]
(in this case we have an assumption that $d \geq 1$, due to $f_h > r-d$)
\[
= 2^{r + min(\gamma_z, v) + min(e, max(v-\gamma_z, 0))} \TT 2^{d-1} = \TT 2^{r + e + v -1},
\]
which gives the desired divisibility. We once again used Corollary \ref{slices}, in the same way as for the previous case.

Let us now move to the case when $r-d > \mu$ :
\[
K = 2^{min(\gamma_z, v) + min(e, max(v-\gamma_z, 0))} \sum\limits_{i=0}^{2^{d} -1}
\sum\limits_{s=0}^{2^r-1} \sum_{f_h = \mu+1}^{r-v+\beta_z-1} \sum_{h=0}^{2^{r-f_h-1}-1}
\#(k{+}2^{r-d}i  -2^{f_h}s -ms^2)_{E',  L'_y(x)=2^{f_h}(1+2h)}
\]
\[
+ \sum\limits_{i=0}^{2^e-1}\sum\limits_{j=0}^{2^v-1}
\sum\limits_{s=0}^{2^r-1} \sum\limits_{h=0}^{2^{max(v-\beta_z, 0)}-1}
\#(k{+}2^{r-e}i-2^{r-v+\gamma_z}j  -(\alpha'_z2^{r-v+\beta_z}j+2^{r-v+\beta_z}h)s-ms^2)_{E',  L'_y(x)=2^{r-v+\beta_z}h}
\]
\[
= 2^{min(\gamma_z, v) + min(e, max(v-\gamma_z, 0))} \sum\limits_{i=0}^{2^{d} -1}
\sum\limits_{s=0}^{2^r-1} \sum_{f_h = \mu+1}^{r-v+\beta_z-1} \sum_{h=0}^{2^{r-f_h-1}-1}
\#(k{+}2^{r-d}i -ms^2)_{E',  L'_y(x)=2^{f_h}(1+2h)}
\]
\[
+  2^{min(\gamma_z, v) + min(e, max(v-\gamma_z, 0))} \sum\limits_{i=0}^{2^d-1}
\sum\limits_{s=0}^{2^r-1} \sum\limits_{h=0}^{2^{max(v-\beta_z, 0)}-1}
\#(k{+}2^{r-d}i -ms^2)_{E',  L'_y(x)=2^{r-v+\beta_z}h}
\]
(since both polynomials on $s$ are of type $c)$, per Lemma \ref{types})
\[
=  2^{min(\gamma_z, v) + min(e, max(v-\gamma_z, 0))}  \sum\limits_{i=0}^{2^d-1}
\sum\limits_{s=0}^{2^r-1} \sum\limits_{h=0}^{2^{r-\mu-1}-1}
\#(k{+}2^{r-d}i -2^\mu s^2)_{E',  L'_y(x)=2^{\mu+1}h}
\]
\[
=  2^{min(\gamma_z, v) + min(e, max(v-\gamma_z, 0)) + \mu}  \sum\limits_{i=0}^{2^d-1}
\sum\limits_{s=0}^{2^{r-\mu}-1} \sum\limits_{h=0}^{2^{r-\mu-1}-1}
\#(k{+}2^{r-d}i -2^\mu s^2)_{E',  L'_y(x)=2^{\mu+1}h}
\]

\[
=  2^{min(\gamma_z, v) + min(e, max(v-\gamma_z, 0))+\mu} \TT 2^{r-\mu-1+d},
\]
where we used Corollary \ref{slices} remembering that $r-d > \mu \Leftrightarrow r-\mu-1 \geq d$.  So finally
\[
K =  2^{r + e + v -1} \TT
\]
which in this case gives possibly even more than the required divisibility.

Now we are left with the situation when $min(r-d, \mu) \geq r-v+\beta_z$, and we already know that $S$, \ie the first sum from $W$ written
in the form \eqref{eq:sums}, has the desired divisibility. Let us look now at the second of the two sums of $W$:
\[
C = \sum\limits_{i=0}^{2^e-1}\sum\limits_{j=0}^{2^v-1}
\sum\limits_{s=0}^{2^r-1} \sum\limits_{h=0}^{2^{max(v-\beta_z, 0)}-1}
\#(k{+}2^{r-e}i-2^{r-v+\gamma_z}j  -(\alpha'_z2^{r-v+\beta_z}j+2^{r-v+\beta_z}h)s-ms^2)_{E',  L'_y(x)=2^{r-v+\beta_z}h}
\]

Because $r-v+\gamma_z \geq r-v+\beta_z$, we can let $\sigma_z = \gamma_z - \beta_z$, where $\sigma_z \geq 0$. Then we have:
\[
C =\sum\limits_{i=0}^{2^e-1}\sum\limits_{j=0}^{2^v-1}
\sum\limits_{s=0}^{2^r-1} \sum\limits_{h=0}^{2^{max(v-\beta_z, 0)}-1}
\#(k{+}2^{r-e}i- 2^{r-v+\beta_z} (\alpha''_z2^{\sigma_z}j  + (j+h)s)-ms^2)_{E',  L'_y(x)=2^{r-v+\beta_z}h},
\]
where $\alpha''_z = \alpha'^{-1}_z$.
Because
\[ \alpha''_z2^{\sigma_z}j  + (j+h)s = (j+h)(\alpha''_z2^{\sigma_z} + s) - \alpha''_z2^{\sigma_z}h, \]
we obtain
\[
C =\sum\limits_{i=0}^{2^e-1}\sum\limits_{j=0}^{2^v-1}
\sum\limits_{s=0}^{2^r-1} \sum\limits_{h=0}^{2^{max(v-\beta_z, 0)}-1}
\#(k{+}2^{r-e}i- 2^{r-v+\beta_z} (j+h)(\alpha''_z2^{\sigma_z} + s)-ms^2 - \alpha''_z2^{r-v+\gamma_z}h)_{E',  L'_y(x)=2^{r-v+\beta_z}h}.
\]
We are operating now under an assumption that $r-v+\beta_z \leq \mu$, therefore $C$ equals
{\small
\[ 2^{r-v} \sum\limits_{i=0}^{2^e-1}\sum\limits_{j=0}^{2^v-1}
\sum\limits_{s=0}^{2^v-1} \sum\limits_{h=0}^{2^{max(v-\beta_z, 0)}-1}
\#(k{+}2^{r-e}i- 2^{r-v+\beta_z} ((j+h)(\alpha''_z2^{\sigma_z} + s)- \md2^{\mu'} s^2) - \alpha''_z2^{r-v+\gamma_z}h)_{E',
L'_y(x)=2^{r-v+\beta_z}h},
\]
}
where $\mu' = \mu - (r-v+\beta_z)$, and $\md$ is the odd factor in $m$. Let us notice now that:
\[
\bigcup_{j, s \in Z_{2^v}} \p*{ (j+h)(\alpha''_z2^{\sigma_z} + s)- \md2^{\mu'} s^2 }
= \bigcup_{j, s \in Z_{2^v}} \p*{ \alpha''_z2^{\sigma_z}(j+h) + s( j+h - \md2^{\mu'}s) }  
\]
\[
= \bigcup_{s, j \in Z_{2^v}} \p*{ \alpha''_z2^{\sigma_z}(j+h + \md2^{\mu'}s) + s( j+h) }
\]
where we changed the starting point of the iteration on $j$ by $\md2^{\mu'}s$,
\[
= \bigcup_{s, j \in Z_{2^v}} \p*{ \alpha''_z2^{\sigma_z}(j+h) + s( j+h + \md2^{\mu'}\alpha''_z2^{\sigma_z}) }
= \bigcup_{s, j \in Z_{2^v}} \p*{ \alpha''_z2^{\sigma_z}(j+h - \md2^{\mu'}\alpha''_z2^{\sigma_z} ) + s( j+h) }
\]
where we changed the starting point of the iteration on $j$ by an additional $\md2^{\mu'}\alpha''_z2^{\sigma_z}$,
\[
= \bigcup_{s, j \in Z_{2^v}} \left\{ \alpha''_z2^{\sigma_z}(h - \md\alpha''_z2^{\mu' +\sigma_z}) + \alpha''_z2^{\sigma_z}j+ s( j+h)
\right\}
\]
\[
= \bigcup_{s, j \in Z_{2^v}} \p*{ \alpha''_z2^{\sigma_z}(h - \md\alpha''_z2^{\mu' +\sigma_z}) + \alpha''_z2^{\sigma_z}(j-h)+ sj }
\]
where we changed the starting point of the iteration on $j$ by an additional $h$,
\[
= \bigcup_{s, j \in Z_{2^v}} \left\{ -\md\alpha''^2_z2^{\mu'+2\sigma_z} + j(s + \alpha''_z2^{\sigma_z})
\right\}
= \bigcup_{s, j \in Z_{2^v}} \p*{ -\md\alpha''^2_z2^{\mu'+2\sigma_z} + js}.
\]
The last follows because $\bigcup_{s \in Z_{2^v}} \p*{s + \alpha''_z2^{\sigma_z}} = \bigcup_{s \in Z_{2^v}} \p*{s}$.
Going back to our original formula we can write that:
\[
C = 2^{r-v} \sum\limits_{i=0}^{2^e-1}\sum\limits_{j=0}^{2^v-1}
\sum\limits_{s=0}^{2^v-1} \sum\limits_{h=0}^{2^{max(v-\beta_z, 0)}-1}
\#(k{+}2^{r-e}i- 2^{r-v+\beta_z} (js -\md\alpha''^2_z2^{\mu'+2\sigma_z}) - \alpha''_z2^{r-v+\gamma_z}h)_{E',
L'_y(x)=2^{r-v+\beta_z}h}
\]
\[
= 2^{r-v} \sum\limits_{i=0}^{2^e-1}\sum\limits_{j=0}^{2^v-1}
\sum\limits_{s=0}^{2^v-1} \sum\limits_{h=0}^{2^{max(v-\beta_z, 0)}-1}
\#(k'{+}2^{r-e}i- 2^{r-v+\beta_z} (js - \alpha''_z2^{\sigma_z}h))_{E',
L'_y(x)=2^{r-v+\beta_z}h}
\]
(where $k' = k -2^{r-v+\beta_z}\md\alpha''^2_z2^{\mu'+2\sigma_z}$)
\[
=2^{r-v+2min(v, \beta_z)} \sum\limits_{i=0}^{2^e-1}\sum\limits_{s,j=0}^{2^{max(v', 0)}-1}
\sum\limits_{h=0}^{2^{max(v', 0)}-1}
\#(k'{+}2^{r-e}i- 2^{r-v'} (js- \alpha''_z2^{\sigma_z}h))_{E',
L'_y(x)=2^{r-v'}h},
\]
where $v' = v-\beta_z$. If $v' \leq 0 \Leftrightarrow v \leq \beta_z$, then $e=0$, since $min(r-d, \mu) \geq r-v+\beta_z$ in this case, and
$d = max(e, v-\gamma_z)$. In such a case $C$ becomes
\[
2^{r+v} \#k'_{E', L'_y(x)=0},
\]
which trivially divides by $2^{r+e+min\left(v, \ceil*{\frac{r}{2}}\right)-1}$. Let us continue our proof now with the assumption that $v' > 0$, for which our formula is
\[
C =2^{r-v+2 \beta_z} \sum\limits_{i=0}^{2^e-1}\sum\limits_{s,j=0}^{2^{v'}-1}
\sum\limits_{h=0}^{2^{v'}-1}
\#(k'{+}2^{r-e}i- 2^{r-v'} (js- \alpha''_z2^{\sigma_z}h))_{E',
L'_y(x)=2^{r-v'}h}.
\]

We use Lemma \ref{multi} on $js$, and $j$ and $s$ are now confined to the ring $\ZZ_{2^{v'}}$:
\[
C =2^{r-v+2\beta_z} \sum\limits_{i=0}^{2^e-1}\sum\limits_{s=0}^{2^{v'}-1}
\sum\limits_{h=0}^{2^{v'}-1} (O(s)+1)2^{v'-1}
\#(k'{+}2^{r-e}i- 2^{r-v'} (s - \alpha''_z2^{\sigma_z}h))_{E',
L'_y(x)=2^{r-v'}h},
\]
(where $O(s)$ is the order of $s$, unless $s=0$, when it is the order (\ie $v'$) increased by $1$)
\[
=2^{r+\beta_z-1} \sum\limits_{i=0}^{2^e-1}\sum\limits_{s=0}^{2^{v'}-1}
\sum\limits_{h=0}^{2^{v'}-1} (O(s)+1)
\#(k'{+}2^{r-e}i- 2^{r-v'} (s - \alpha''_z2^{\sigma_z}h))_{E',
L'_y(x)=2^{r-v'}h}
\]
\[
=2^{r+\beta_z-1} \sum\limits_{i=0}^{2^e-1}\sum\limits_{s=0}^{2^{v'}-1}
\sum\limits_{h=0}^{2^{v'}-1} (O(s)+1)
\#(k'{+}2^{r-e}i- 2^{r-v'} s )_{E',
L'_y(x)=2^{r-v'}h}.
\]
The last follows because for any $h$, 
\[
\bigcup_{s=0}^{2^{v'}-1} \p*{2^{r-v'}s} = \bigcup_{s=0}^{2^{v'}-1} \p*{2^{r-v'}(s - \alpha''_z2^{\sigma_z}h)}.
\]
To increase clarity, let us omit the $2^{r+\beta_z-1} $ coefficient.  We need to show  that
\[
2^{e + min(v', \ceil*{\frac{r}{2}})}
 \]
divides the remaining sum
\[ 
C_1 = \sum\limits_{i=0}^{2^e-1}\sum\limits_{s=0}^{2^{v'}-1}
\sum\limits_{h=0}^{2^{v'}-1} (O(s)+1)
\#(k'{+}2^{r-e}i- 2^{r-v'}s )_{E',
L'_y(x)=2^{r-v'}h},
\]
or even just
\[ 
C_2 = \sum\limits_{i=0}^{2^e-1}\sum\limits_{s=0}^{2^{v'}-1}
\sum\limits_{h=0}^{2^{v'}-1} O(s)
\#(k'{+}2^{r-e}i- 2^{r-v'} s )_{E',
L'_y(x)=2^{r-v'}h},
\]
since it is easy to show using our earlier techniques that
\[ 
\sum\limits_{i=0}^{2^e-1}\sum\limits_{s=0}^{2^{v'}-1}
\sum\limits_{h=0}^{2^{v'}-1}
\#(k'{+}2^{r-e}i- 2^{r-v'} s )_{E',
L'_y(x)=2^{r-v'}h}
\]
is divisible by $2^{e + v'}$ (via Corollary \ref{slices}).

Focusing on $C_2$, let us split it into the possible orders $f_s$ of $s$:
\[ 
C_2 = \sum\limits_{i=0}^{2^e-1}\sum\limits_{s=0}^{2^{v'}-1}
\sum\limits_{h=0}^{2^{v'}-1} O(s) \#(k'{+}2^{r-e}i- 2^{r-v'} s )_{E', L'_y(x)=2^{r-v'}h}
\]
\[
= \left(\sum\limits_{h=0}^{2^{v'}-1}\sum_{i=0}^{2^e-1} \sum_{f_s=0}^{v'-1} f_s \sum_{s=0}^{2^{v'-f_s-1}-1}  \#(k'{+}2^{r-e}i- 2^{r-v'}
(2s+1) )_{E', L'_y(x)=2^{r-v'}h} \right) 
\]
\[
+ \left(\sum\limits_{h=0}^{2^{v'}-1} \sum_{i=0}^{2^e-1} (v'+1)\#(k'{+}2^{r-e}i)_{E', L'_y(x)=2^{r-v'}h} \right).
\]
Solving $L'_y(x)=2^{r-v'}h$ for $x$ with $h \in \ZZ_{2^{v'}}$ confines $x$ to a certain slice of size at least
$2^{v'}$, which then becomes the domain of $E'$.
Therefore the expression
\[ 
\sum_{i=0}^{2^e-1} \sum_{f_s=0}^{v'-1} f_s \sum_{s=0}^{2^{v'-f_s-1}-1}  \#(k'{+}2^{r-e}i- 2^{r-v'}(2s+1) )_{E', L'_y(x)=2^{r-v'}h} 
\]
equals the size of the common part of the image of $E'(x)$ with $x$ in the just mentioned slice,
and the multiset
\[ 
\bigcup_{i=0}^{2^e-1} \bigcup_{f_s=0}^{v'-1} f_s \bigcup_{s=0}^{2^{v'-f_s-1}-1}  \p*{k'{+}2^{r-e}i- 2^{r-v'} (2s+1)} .
\]
This works analogously to the second component of the sum above, and allows us to directly use Corollary \ref{osc}, recalling that $e \leq v'$ as $r-e
\geq r-v+\beta_z$ in this case. The polynomial $E'$ corresponds to $P$, and the multiset we obtain here from both components of the sum
corresponds to the multiset $S$ (we just need to shift both of them by $k'$).
By use of the corollary we obtain that our formula is divisible by
\[ 
2^{e + min(v', \ceil*{\frac{r}{2}})}, 
\]
which at last concludes the proof of the base case and the entire theorem from the stated lemmas.
\end{proof}

\bigskip
We remark that the most difficult juncture of the above proof seems to be the treatment of multiplicities in the intersections.  Even in better-behaved cases of algebraic varieties defined by polynomials over fields, {\em intersection theory\/} is known as a relatively difficult subject.  It is possible that carrying over some of this theory to $\ZZ_{2^r}$ may improve the conceptual highness of abstraction in the proof, but we have not seen how to do this.  

The rest of this chapter---amounting to most of it---gives proofs of the lemmas and corollaries stated earlier in this section, as well as
some new ones that are needed for their proofs. This requires more situational analysis of intersections and multiplicities.
 
\begin{replemma}{multi}
\emph{\textbf{(restated)}}
For any $m \in \ZZ_{2^r}$, 
\begin{equation*}
\#\p*{x,y \in \ZZ_{2^r}: xy = m} =
\begin{cases}
(o(m)+1)2^{r-1} & \text{\;if $m \neq 0$}.\\
(r+2)2^{r-1} & \text{\;otherwise}.
\end{cases}
\end{equation*}

\end{replemma}
\begin{proof}
Let us start with the case $m \neq  0$. Let us notice that
\[
\forall_{t < r} \#\p*{m \in \ZZ_{2^r} : o(m) = t} = 2^{r-t-1}  
\text{\;\;\;\;\;\;\;\;and\;\;\;\;\;\;\;\;\;} xy = m \Rightarrow o(x) + o(y) =o(m).
\]
To find the number of pairs $x,y$ such that $o(x) + o(y) = o(m)$ , we can start by first taking pairs where $o(x) = 0$ and $o(y) = o(m)$, and go all they way until: $o(x) =
o(m)$ and $o(y) = 0$. It gives us:
\[
\forall_{t < r} \sum_{m \in \ZZ_{2^r}, o(m) = t} \#\p*{x,y \in \ZZ_{2^r}: xy = m}  = \sum_{i=0}^t 2^{r-1-i} 2^{r-t-1+i}
\]
Because of symmetry,
\[
\#\p*{x,y \in \ZZ_{2^r}: xy = m} = \frac{\sum_{m' \in \ZZ_{2^r}, o(m') = o(m)} \#\p*{x,y \in \ZZ_{2^r}: xy = m'}}{\#\p*{m' \in Z_{2^r} : o(m') = o(m)}}.
\]
Therefore, with $t=o(m)$:
\[
\#\p*{x,y \in Z_{2^r}: xy = m} = \frac{\sum_{i=0}^t 2^{2r-2-t}}{2^{r-t-1}} = (t+1)2^{r-1}.
\]

\bigskip
For $m = 0$, let us just subtract from all pairs, those pairs for cases when $m > 0$:
\[
\#\p*{x,y \in Z_{2^r}: xy = 0} = 2^{2r} - \sum_{t=0}^{r-1} \sum_{i=0}^t 2^{2r-2-t}, 
\]
which after some transformations (including use of the formula for the sum of arithmetic-geometric series) gives the desired result.
\end{proof}

\begin{lemma} \label{sqrr}
For any $t \in \ZZ_{2^r}$ there is a $k \in \ZZ_{2^r}$ such that
\[
t^2 = 2^{2o(t)} + 2^{2o(t)+3}k.
\]
Additionally following holds:
\[
\forall_{a, k} \exists_{t}^{=n}: t^2 = 2^{2a} + 2^{2a+3}k,
\]
where:
\[ 
n = \left\{
\begin{array}{ll}
2^{\min(a+2, r-a-1)}   & \text{if }  a < \frac{r}{2}\\
2^{\lfloor \frac{r}{2} \rfloor} & \text{otherwise}
\end{array}
\right.
\]
Moreover, when $a < \frac{r}{2}$ then the order of all such $t$-s equals $a$.
\end{lemma}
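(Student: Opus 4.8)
The plan is to derive all three assertions---the representation $t^2 = 2^{2o(t)} + 2^{2o(t)+3}k$, the exact count $n$, and the order claim---from the single elementary fact that an odd integer squares to $1$ modulo $8$. Writing any nonzero $t$ as $t = 2^{o(t)}s$ with $s$ odd, and noting $s = 2j+1$ gives $s^2 = 1 + 4j(j+1)$ with $j(j+1)$ even, we get $s^2 \equiv 1 \pmod 8$, that is $s^2 = 1 + 8k$ for some integer $k$. Hence $t^2 = 2^{2o(t)}(1+8k) = 2^{2o(t)} + 2^{2o(t)+3}k$, which proves the representation after reducing $k$ modulo $2^r$ (the case $t=0$, where $o(t)=r$, is immediate since both sides vanish in $\ZZ_{2^r}$).

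For the order claim, which I would establish next because the counting argument relies on it, fix $a < \frac{r}{2}$ and $k$, and suppose $t^2 = 2^{2a}(1+8k)$ in $\ZZ_{2^r}$. Since $1+8k$ is odd and $2a < r$, the right-hand side has order exactly $2a$, so $o(t^2) = 2a$. Using $o(t^2) = \min(2o(t), r)$ together with $2a < r$ forces $2o(t) = 2a$, hence $o(t) = a$, which is exactly the ``Moreover'' statement.

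For the count I would split on the size of $a$. If $a \geq \frac{r}{2}$, then $2^{2a} \equiv 0$ and $2^{2a+3}k \equiv 0$ in $\ZZ_{2^r}$, so the equation reduces to $t^2 \equiv 0$, whose solutions are the $t$ with $o(t) \geq \ceil*{\frac{r}{2}}$; there are $2^{r - \ceil*{\frac{r}{2}}} = 2^{\floor*{\frac{r}{2}}}$ of them, matching the ``otherwise'' value of $n$ independently of $k$. If instead $a < \frac{r}{2}$, the order claim lets me write every solution as $t = 2^a s$ with $s$ odd and $s$ ranging over $\ZZ_{2^{r-a}}$; dividing the equation by $2^{2a}$ reduces it to $s^2 \equiv 1+8k \pmod{2^{r-2a}}$. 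Since this condition depends only on $s \bmod 2^{r-2a}$ and the reduction $\ZZ_{2^{r-a}} \to \ZZ_{2^{r-2a}}$ is $2^a$-to-one while preserving oddness, the total count is $2^a$ times the number of odd square roots of $1+8k$ in $\ZZ_{2^{r-2a}}$. Setting $m = r-2a \geq 1$, the standard structure of $(\ZZ/2^m)^*$ gives that an odd residue congruent to $1$ modulo $8$ has exactly $2^{\min(m-1,2)}$ odd square roots, namely $1$ root for $m=1$, $2$ for $m=2$, and $4$ for $m\geq 3$. Thus the count is $2^{a + \min(m-1,2)} = 2^{\min(a+2,\,r-a-1)}$, the claimed $n$, with the two regimes of the $\min$ corresponding precisely to $r-2a \geq 3$ versus $r-2a \leq 2$.

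The main obstacle is the square-root count in the sub-case $m = r - 2a \geq 3$: establishing that a residue $\equiv 1 \pmod 8$ has exactly four square roots modulo $2^m$. This rests on $(\ZZ/2^m)^* \cong \ZZ/2 \times \ZZ/2^{m-2}$ for $m \geq 3$, which I would either quote or reprove by a short Hensel-type lifting induction on $m$. The accompanying delicacy is the boundary behavior at small $m$: for $m \in \{1,2\}$ the quadratic congruence becomes vacuous, since every odd residue already squares to $1$ modulo $2$ and modulo $4$, so the count drops to $2^a$ and $2^{a+1}$ respectively, and one must verify that these are exactly the cases in which $\min(a+2,\, r-a-1)$ selects $r-a-1$ rather than $a+2$. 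Keeping the bookkeeping between the moduli $2^{r-a}$ and $2^{r-2a}$ consistent is where care is most needed.
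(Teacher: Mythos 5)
Your proposal is correct, and it reaches the count by a genuinely different route than the paper. The paper first proves, by Hensel lifting applied to $m^2+m-2k$, that every value $2^{2a}+2^{2a+3}k$ is attained by some $t$ of order $a$; it then exhibits $2^{a+2}$ explicit elements with the same square (the shifts $i\cdot 2^{r-a-1}\pm t$), and closes the count by a pigeonhole argument: $2^{r-2a-3}$ attainable square values times $2^{a+2}$ preimages each exhausts all $2^{r-a-1}$ elements of order $a$. The boundary cases $a=\frac{r-1}{2}$ and $a=\frac{r}{2}-1$ are then handled by separate enumerations. You instead factor out $2^a$ cleanly: the order claim gives a bijection between solutions $t$ and odd $s\in\ZZ_{2^{r-a}}$ with $s^2\equiv 1+8k \pmod{2^{r-2a}}$, the reduction $\ZZ_{2^{r-a}}\to\ZZ_{2^{r-2a}}$ contributes a uniform factor $2^a$, and the remaining count is the classical fact that in $(\ZZ/2^m)^*\cong \ZZ/2\times\ZZ/2^{m-2}$ every residue $\equiv 1\pmod 8$ has exactly $2^{\min(m-1,2)}$ square roots. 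This absorbs the paper's three sub-cases into the single expression $2^{a+\min(r-2a-1,2)}=2^{\min(a+2,\,r-a-1)}$ and makes the existence of at least one solution automatic (since $1+8k\equiv 1\pmod 8$ is always a square), at the cost of quoting (or reproving by the same Hensel-type induction the paper uses) the structure of the unit group. Both arguments rest on the same $2$-adic facts; yours is the more structural packaging, the paper's the more hands-on enumeration.
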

\begin{proof}

Let $q$ be the order of $t$. Then $t = 2^q(1+2m)$ for some $m$, and $t^2 = 2^{2q} + 2^{2q+2}m(m+1) = 2^{2q} + 2^{2q+3}k$ for certain $k$.  This proves the first part of the lemma.

Now let us show that
\[
\forall_{a< \frac{r}{2}, k} \exists_{t\in \ZZ_{2^r}}: t^2 \equiv 2^{2a} + 2^{2a+3}k\;(\text{mod } 2^r).
\]
We choose $t = 2^a(1+2m)$, and obtain:
\[ 
(2^a(1+2m))^2 \equiv 2^{2a}+2^{2a+3}k \;(\text{mod } 2^r),
\]
\[ 
1+4m+4m^2 \equiv 1+8k \;(\text{mod } 2^{r-2a}).
\]
If $r-2a < 3$ the above is true for any $m$ and $k$. Otherwise:
\[ 
m^2+m \equiv 2k \;(\text{mod } 2^{r-2a-2}).
\]
We show that the last statement from above is true (\ie for any $k$ there is an $m$ making it true)
through induction on $r$ and the use of Hensel lifting.
Let us take $P(m) = m^2+m-2k$, and let us note that $\forall_m P'(m) \not\equiv 0$ modulo
any non-zero power of $2$ ($P'$ being the derivative of $P$). This allows us to use Hensel's
lemma, which in this case says that if
\[ 
m^2+m-2k  \equiv 0 \;(\text{mod } 2^{r-1}) 
\]
has a solution, then also
\[ 
m^2+m-2k  \equiv 0 \;(\text{mod } 2^{r}) 
\]
does. Checking the base case of $r=1$ is trivial.
Therefore we have that $\forall_{a < \frac{r}{2}, k} \exists_{t}: t^2 = 2^{2a} + 2^{2a+3}k$, and that $o(t) = a$.
Let us take $a < \frac{r}{2} -1$, then:
\[ 
t^2 \equiv ( 2^{r-a-1} + t)^2 \equiv  (2\cdot2^{r-a-1} + t)^2 \equiv \ldots \equiv ((2^{a+1}-1)\cdot2^{r-a-1} + t)^2
\]
\[ 
\equiv  (2^{r-a-1} -t )^2  \equiv  (2\cdot2^{r-a-1} -t)^2 \equiv \ldots \equiv (2^{a+1}\cdot2^{r-a-1} -t)^2
\;(\text{mod } 2^{r}).
\]
The numbers in those squares are all different (to wit, $t$ and $2^{r-a-1} -t$ are different, all
because $a < \frac{r}{2} - 1$). There are exactly $2^{a+2}$ of those numbers. Those are also all
such numbers whose squares equal $2^{2a} + 2^{2a+3}k$. It is because for a given $a$ there are $2^{r-a-1}$ numbers with order $a$, and there
are also $2^{r-2a-3}$ possible values to which squares of them can evaluate. If we find for any such square
value $2^{a+2}$-many $t$'s that evaluate to it, then we obtain $2^{r-2a-3 + a+2} = 2^{r-a-1}$, which means
we have found all such $t$'s.

The remaining options for $a < \frac{r}{2}$ are when $a = \frac{r-1}{2}$ (\ie $r$ is odd) or $a = \frac{r}{2} -1$
(\ie $r$ is even).
For $a = \frac{r-1}{2}$ we have
\[ 
t^2 \equiv ( 2^{r-a} + t)^2 \equiv  (2\cdot2^{r-a} + t)^2 \equiv \ldots \equiv ((2^{a}-1)\cdot2^{r-a} + t)^2,
\]
and there are exactly $2^{a} = 2^{r-a-1}$ of those numbers.
Meanwhile for $a = \frac{r}{2}-1$ we obtain
\[ 
t^2 \equiv ( 2^{r-a-1} + t)^2 \equiv  (2\cdot2^{r-a-1} + t)^2 \equiv \ldots \equiv ((2^{a+1}-1)\cdot2^{r-a-1} + t)^2
\;(\text{mod } 2^{r}),
\]
and there are exactly $2^{a+1} = 2^{r-a-1}$ of those numbers. In the last two cases, it is straightforward to see,
that we have found all applicable numbers $t$, after all there are exactly $2^{r-a-1}$ numbers of
order $a$.

Combining the above results, we can write that when $a < \frac{r}{2}$, we get $2^{\min(a+2, r-a-1)}$ of
the numbers $t$ that we look for.

Now let us look at the case when $a \geq \frac{r}{2}$. Then any square of a number of order $a$ has to evaluate
to $0$. And there are $2^{r-\ceil{\frac{r}{2}}}  = 2^{\floor*{\frac{r}{2}}}$ numbers that have such orders. No number of any lower
order can have its square evaluate to $0$.
\end{proof}

\begin{replemma} {types}
\emph{\textbf{(restated)}}

Let us take a polynomial $P(x) = ax^2 + bx +c$ over $\ZZ_{2^r}$. Let $a = q2^w$ and $b = g2^h$ such that $q$ and $g$ are odd and $w, h$ are
orders of respectively $a$ and $b$.
Let $m = min(w, h)$. The image of $P(x)$ treated as a multiset equals:
\begin{enumerate}[label=\alph*)]
  \item If $w > h$ :
  \[ 2^m \bigcup_{i=0}^{2^{r-m}-1} \p*{ 2^m i + c} \]
  
  \item If $w = h$ :
  
\[ 
\begin{array}{ll}
2^{m+1} \bigcup\limits_{i=0}^{2^{r-m-1}-1} \p*{ 2^{m+1} i + c}   & \text{if }  m < r\\
2^r\p*{c} & \text{if } m = r
\end{array}
\]

\item If $w < h$ :\\
\[
\left(\bigcup_{f=0}^{\ceil*{\frac{r-m}{2}}-1} 2^{min(f+2, r-f-1)+min(m, max(0, r-2f-3))} \bigcup_{i=0}^{max(0, 2^{r-2f-3-m}-1)}
\p*{2^{2f+3+m}i + q2^{2f+m} + t } \right) 
\]
\[
\cup\; 2^{\floor*{\frac{r+m}{2}}}\p*{t}
\]
where $ t= c-\frac{b^2}{2^{m+2}q}$.

\end{enumerate}
\end{replemma}

\begin{proof}
We prove each of the cases on its own:

\begin{enumerate}[label=\alph*)]
  
  \item $w > h$:\\
  First we show that:
  \[ \forall_{i\in \ZZ_{2^r}} \exists_{x\in \ZZ_{2^r}} : q2^{w-m}x^2 + gx \equiv i\;(\text{mod } 2^r) \]
  We do this by induction on $r$ with the use of Hensel's lemma. The base step for $r=1$ is easy to check. For the general step, let us take $Q(x) =
  q2^{w-m}x^2 + gx -i$ and notice that regardless of $x$, $Q'(x)$ is always odd.   This means, via Hensel's lemma, that if $t$ is a
  solution of $Q(x)$  over $2^r$, then there is also a unique solution $S(t)$ that solves $Q(x)$ over $2^{r+1}$, and $S$ is a
  bijection.
  Due to the above we obtain that
  \[ \bigcup_{x=0}^{2^{r}-1} \p*{ q2^{w-m}x^2 + gx} = \bigcup_{i=0}^{2^{r}-1} \p*{  i}, \]
  which after multiplying both sides by $2^m$ and adding $c$ gives us the expected result.
  
  \item $w = h$:\\
  Analogous to the above, by the use of Hensel's lemma we obtain that:
  \[ \forall_{i\in \ZZ_{2^r}} \exists_{x\in \ZZ_{2^r}}^{=2} : qx^2 + gx \equiv 2i\;(\text{mod } 2^r), \]
  which allows us to write:
  \[ \bigcup_{x=0}^{2^{r}-1} \p*{ qx^2 + gx} = 2\bigcup_{i=0}^{2^{r-1}-1} \p*{  2i}, \]
  which after multiplying both sides by $2^m$ and adding $c$ gives us the desired result.
  
  \item $w < h$:\\
  Let us start by removing $c$ from $P$ and dividing it by $2^m$ (we will introduce these factors back later):
  \[
  qx^2 + g2^{h-m}x = q\left(x^2 + \frac{g}{q}2^{h-m}x\right) = q\left(\left(x + \frac{g}{q}2^{h-m-1}\right)^2 -
  \left(\frac{g}{q}2^{h-m-1} \right)^2 \right),
  \]
  \[
  \bigcup_{x=0}^{2^{r}-1} \p*{ \left(x + \frac{g}{q}2^{h-m-1}\right)^2 }
  = \bigcup_{x=0}^{2^{r}-1} \p*{ x^2},
  \]
    and by use of Lemma \ref{sqrr},
  \[
  =\left(\bigcup_{f=0}^{\ceil*{\frac{r}{2}}-1} 2^{min(f+2, r-f-1)} \bigcup_{i=0}^{max(0, 2^{r-2f-3}-1)} 
  \p*{ 2^{2f+3} i + 2^{2f} } \right) \cup 2^{\floor*{\frac{r} {2}}}  \p*{ 0 }.
  \]
  After shifting the elements by $\left(\frac{g}{q}2^{h-m-1} \right)^2 = \left(\frac{b}{2^{m+1}q}\right)^2$ and multiplying them
  by $q$, this gives:
  \[
  \left(\bigcup_{f=0}^{\ceil*{\frac{r}{2}}-1} 2^{min(f+2, r-f-1)} \bigcup_{i=0}^{max(0, 2^{r-2f-3}-1)} 
  \p*{ q\left(2^{2f+3} i + 2^{2f} -\left(\frac{b}{2^{m+1}q}\right)^2 \right) } \right) 
  \]
  \[ 
  \cup 2^{\floor*{\frac{r} {2}}} \p*{ -q\left(\frac{b}{2^{m+1}q} \right)^2 },
  \]
  which after multiplying the elements by $2^m$ and then shifting by $c$ is:
  \[ 
\left(\bigcup_{f=0}^{\ceil*{\frac{r}{2}}-1} 2^{min(f+2, r-f-1)} \bigcup_{i=0}^{max(0, 2^{r-2f-3}-1)} \left\{2^{2f+3+m}i +
q2^{2f+m} +t \right \} \right) \cup\; 2^{\floor*{\frac{r} {2}}} \left \{
t \right\}
\]
\[
= \left(\bigcup_{f=0}^{\ceil*{\frac{r}{2}}-1} 2^{min(f+2, r-f-1)+min(m, max(0, r-2f-3))} \bigcup_{i=0}^{max(0, 2^{r-2f-3-m}-1)}
\p*{2^{2f+3+m}i + q2^{2f+m} + t } \right) \]\[\cup\; 2^{\floor*{\frac{r}{2}}}\p*{t}
\]
\[
= \left(\bigcup_{f=0}^{\ceil*{\frac{r-m}{2}}-1} 2^{min(f+2, r-f-1)+min(m, max(0, r-2f-3))} \bigcup_{i=0}^{max(0, 2^{r-2f-3-m}-1)}
\p*{2^{2f+3+m}i + q2^{2f+m} + t } \right)\]\[ \cup\; 2^{\floor*{\frac{r+m}{2}}}\p*{t}.
\]
We could remove $q$ that was multiplied by $2^{2f+3+m}i$ owing to Lemma \ref{lem1}.
\end{enumerate}
\end{proof}

\begin{Col} \label{csqr}
Let us take a polynomial $P(x) = ax^2 + bx +c$, over $\ZZ_{2^r}$ with the domain restricted to $\bigcup\limits_{j=0}^{2^v-1} \p*{l+ 
2^{r-v}j}$ where $v \leq r$, $l\in \ZZ_{2^r}$.
Let $k = r-v$, $a' = a2^{2k}$, $b' = (2al + b)2^k$, $c' = al^2 + bl + c$. Let $a' = q'2^{w'}$ and $b' = g'2^{h'}$ such that $q'$ and
$g'$ are odd and $w', h'$ are orders of respectively $a'$ and $b'$.
Let $m' = min(w', h', r)$. The co-domain of $P(x)$ treated as a multiset equals:

\begin{enumerate}[label=\alph*)]
  \item If $w' > h'$ :
  \[ 2^{m'-k} \bigcup_{i=0}^{2^{r-m'}-1} \p*{ 2^{m'} i + c'} \]
  
  \item If $w' = h'$ :
  
\[ 
\begin{array}{ll}
2^{m'+1-k} \bigcup\limits_{i=0}^{2^{r-m'-1}-1} \p*{ 2^{m'+1} i + c'}   & \text{if }  m' < r\\
2^{r-k}\p*{c'} & \text{if } m' = r
\end{array}
\]

\item If $w' < h'$ :
\[
\left(\bigcup_{f=0}^{\ceil*{\frac{r-m'}{2}}-1} 2^{min(f+2, r-f-1)+min(m', max(0, r-2f-3))-k} \bigcup_{i=0}^{\max(0, 2^{r-2f-3-m'}-1)}
\p*{2^{2f+3+m'}i + q'2^{2f+m'} + t' } \right)\]\[ \cup\; 2^{\floor*{\frac{r+m'} {2}}-k} \left \{
t' \right\}
\]
where $ t'= c'-\frac{b'^2}{2^{m'+2}q'}$
\end{enumerate}
\end{Col}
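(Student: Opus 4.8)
The plan is to reduce the statement to Lemma~\ref{types} via the affine change of variable $x = l + 2^{k}x'$, where $k = r-v$. This is a bijection from $x' \in \{0,\dots,2^v-1\}$ onto the coset $\bigcup_{j=0}^{2^v-1}\{l+2^{r-v}j\}$ (take $x'=j$). Expanding and collecting terms gives $P(l+2^k x') = a'x'^2 + b'x' + c'$ exactly, with $a' = a2^{2k}$, $b' = (2al+b)2^k$, $c' = al^2+bl+c$ as in the statement. Hence the image multiset of $P$ on the restricted domain is identical to the image multiset of $\tilde P(x') := a'x'^2 + b'x' + c'$ as $x'$ ranges over $\ZZ_{2^v}$. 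My first step is simply to record these coefficient identities.

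The key bridge to Lemma~\ref{types}, which concerns a domain equal to the entire ring, is the observation that $\tilde P$ is periodic with period $2^v$ over $\ZZ_{2^r}$, that is $\tilde P(x'+2^v) = \tilde P(x')$ for every $x'$. Indeed $\tilde P(x'+2^v) - \tilde P(x') = a'(2^{v+1}x' + 2^{2v}) + b'2^v$, and each summand vanishes modulo $2^r$: we have $b'2^v = (2al+b)2^{k+v} = (2al+b)2^r \equiv 0$; $a'2^{2v} = a2^{2k+2v} = a2^{2r}\equiv 0$; and $a'2^{v+1}x' = a2^{2r-v+1}x' \equiv 0$ since $2r-v+1 \ge r$ whenever $v \le r$. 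Consequently, as $x'$ runs over the full ring $\ZZ_{2^r}$ each value of $\tilde P$ is attained exactly $2^k$ times as often as when $x'$ runs over $\ZZ_{2^v}$, so the image multiset of $\tilde P$ over $\ZZ_{2^r}$ is precisely $2^k$ copies of its image multiset over $\ZZ_{2^v}$.

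The final step is to apply Lemma~\ref{types} to $\tilde P$ over the whole ring $\ZZ_{2^r}$, with order parameters exactly $w' = o(a')$, $h' = o(b')$, and $m' = \min(w',h',r)$ as defined in the statement. The three cases $w'>h'$, $w'=h'$, $w'<h'$ yield the three displayed multiset descriptions verbatim, with $m',c',t'$ in place of $m,c,t$. Dividing every multiplicity by $2^k$ --- that is, replacing each leading factor $2^{(\cdots)}$ by $2^{(\cdots)-k}$ --- converts the image over $\ZZ_{2^r}$ into the image over $\ZZ_{2^v}$, which is exactly the corollary's claim. All resulting exponents stay non-negative because $m' \ge k$: indeed $w' = o(a)+2k \ge k$, $h' = o(2al+b)+k \ge k$, and $r \ge k$, so $\min(w',h',r)\ge k$.

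The hard part will be the periodicity reduction of the middle step, since it is what licenses importing Lemma~\ref{types} (stated only for full-ring domains) across to a coset; the cross-term cancellations must be checked against the exact orders of $a'$ and $b'$, and the degenerate branches where $a'$ or $b'$ is $0$ (forcing $m'=r$ and the line $m=r$ of case b)) must be seen to match the corresponding degenerate case of Lemma~\ref{types} --- as a sanity check, $v=0$ collapses everything to the single value $c' = P(l)$, consistent with the $m'=r$ branch. Everything else is routine bookkeeping: confirming the coefficient identities for $a',b',c'$ and verifying that the $2^k$-fold multiplicity division lands on the stated exponents in each of the three cases.
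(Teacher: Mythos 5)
Your proposal is correct and takes essentially the same route as the paper: the paper likewise substitutes $x = l + 2^{k}j$, reads off $a'$, $b'$, $c'$ from the expansion, applies Lemma~\ref{types} to the resulting polynomial over the full ring, and divides every multiplicity by $2^{k}$ on the grounds that only $2^{v}$ values of $j$ occur. Your explicit periodicity verification and the check that $m' \geq k$ (so the divided exponents stay non-negative) simply make rigorous two steps the paper leaves implicit.
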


\begin{proof}
Any $x$ in the domain can be written as $l+  2^k j$ for certain $j$, and therefore:
\[ 
ax^2 + bx +c = a(l+  2^{r-v}j)^2 + b(l+  2^{r-v}j) + c = a2^{2k}j^2 + (a2^{k+1}l + b2^k)j + al^2 + bl + c,
\]
from which the mapping to variables with primes automatically follows. Finally, we need to divide the number of occurrences of each element of the
multiset by $2^k$, as we have just $2^v = 2^{r-k}$-many $j$'s. 
\end{proof}

In the following lemma we take a slice for a single $f$ from category c), and count how many elements it has.

\begin{lemma} \label{1sum}
Let us define a multiset $S$ over $\ZZ_{2^r}$ by
\[ 
S = 2^{min(f+2, r-f-1)+min(m', max(0, r-2f-3))-k} \bigcup_{i=0}^{\max(0, 2^{r-2f-3-m'}-1)}
\p*{2^{2f+3+m'}i + q'2^{2f+m'} + t' },
\]
where all constants are integers between $0$ and $r$ inclusive, $f \leq \ceil*{\frac{r-m'}{2}}-1 $, and $ r > m' \geq k$. The number of
elements of the multiset $S$ equals
\[ 
2^{r-f-k-1}.
\]
\end{lemma}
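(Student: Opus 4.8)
The plan is to compute the cardinality of $S$ directly as the product of its leading multiplicity and the number of index terms in the union, thereby reducing the whole statement to a single identity among exponents of $2$. Since the cardinality of a multiset is additive, the total number of elements equals (leading multiplicity) $\times$ (number of values of $i$); in particular I will not need to analyze whether distinct values of $i$ produce equal base elements $2^{2f+3+m'}i + q'2^{2f+m'} + t'$, so the awkward-looking base terms never have to be examined.

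First I would count the index terms. The union runs over $0 \le i \le \max(0,\, 2^{r-2f-3-m'}-1)$, so it has $\max(0,\, 2^{r-2f-3-m'}-1)+1$ terms, which I would rewrite as $2^{\max(0,\, r-2f-3-m')}$: when $r-2f-3-m' \ge 0$ this is $2^{r-2f-3-m'}$, while when $r-2f-3-m' < 0$ the inner $\max(0,\cdot)$ vanishes and only $i=0$ survives, giving the single term $1 = 2^0$.

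Multiplying by the leading multiplicity $2^{\min(f+2,\, r-f-1)+\min(m',\, \max(0,\, r-2f-3))-k}$ and cancelling the common $-k$, the lemma reduces to the arithmetic identity
\[
\min(f+2,\, r-f-1) + \min(m',\, \max(0,\, r-2f-3)) + \max(0,\, r-2f-3-m') = r-f-1 .
\]
I would verify this by setting $A = r-2f-3$ and splitting into cases. The two middle summands collapse to $\max(0,A)$ in all cases: if $A \ge 0$ then $\min(m', A) + \max(0, A-m') = A$ in both sub-cases $A \ge m'$ and $A < m'$, while if $A < 0$ both summands are $0$ (using $m' \ge 0$). Then, since $f+2 \le r-f-1 \Leftrightarrow A \ge 0$, the first summand equals $f+2$ exactly when $\max(0,A)=A$ and equals $r-f-1$ exactly when $\max(0,A)=0$; in either branch the surviving terms sum to $r-f-1$, which closes the identity (the boundary $A=0$, i.e.\ $2f+3=r$, agrees in both branches since then $f+2=r-f-1$).

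The hypotheses $m' \ge k$ and $f \le \ceil*{\frac{r-m'}{2}}-1$ enter only to guarantee that the displayed multiplicity exponent and the final answer $r-f-k-1$ are non-negative, so that $S$ is a genuine multiset of positive size; the exponent identity itself holds for all non-negative integers $f, r, m'$. Consequently the only real obstacle is bookkeeping: keeping the nested $\min/\max$ expressions straight and confirming consistency at the boundary case. There is no further algebraic or number-theoretic content, since the structural work of identifying this slice and computing its multiplicity was already carried out in Lemma~\ref{types} and Corollary~\ref{csqr}.
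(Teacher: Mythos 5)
Your proposal is correct and follows essentially the same route as the paper: both compute $\#S$ as the product of the leading multiplicity and the $2^{\max(0,\,r-2f-3-m')}$ index terms, then verify the resulting exponent identity by a case split on the sign of $r-2f-3$ and $r-2f-3-m'$ (the paper does three explicit cases; you merely group the two middle summands first). Your observation that multiset cardinality is additive, so possible coincidences among the base elements are irrelevant, matches the paper's implicit treatment.
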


\begin{proof}
The number of elements of $S$ is
\[ 
\#S = 2^{max(0, r-2f-3-m') + min(f+2, r-f-1)+min(m', max(0, r-2f-3))-k}.
\]
Let us go through following cases:
\begin{itemize}
  \item [$\bullet$] $0 > r-2f-3$:\\
  \[\#S = 2^{ 0 + r-f-1 + 0 -k} = 2^{r-f-k-1}  \]
  \item [$\bullet$] $r-2f-3 \geq 0 > r-2f-3-m'$:\\
  \[\#S = 2^{0 +  f+2+min(m', r-2f-3)-k}  = 2^{ f+2+r-2f-3-k} = 2^{r-f-k-1}     \]
  \item [$\bullet$] $r-2f-3-m' \geq 0$:\\
  \[\#S = 2^{r-2f-3-m' + f+2 +m'-k}  = 2^{r-f-k-1}.     \]
\end{itemize}
\end{proof}

In the next lemma we count the number of elements of slices for a single $f$, all $f'>f$, and also the slice $\p*{t'}$.

\begin{lemma} \label{multisum}
Let us define a multiset $S$ over $\ZZ_{2^r}$ by
\[
S = \left(\bigcup_{f=f_s}^{\ceil*{\frac{r-m'}{2}}-1} 2^{min(f+2, r-f-1)+min(m', max(0, r-2f-3))-k} \bigcup_{i=0}^{\max(0,
2^{r-2f-3-m'}-1)} \p*{2^{2f+3+m'}i + q'2^{2f+m'} + t' } \right)\]\[ \cup\; 2^{\floor*{\frac{r+m'} {2}}-k} \left \{
t' \right\}
\]
where all constants are integers between $0$ and $r$ inclusive, $f_s \leq \ceil*{\frac{r-m'}{2}} $ and $m' \geq k$. The number
of elements of the multiset $S$ equals
\[
2^{r-f_s-k}.
\]
\end{lemma}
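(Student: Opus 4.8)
The plan is to reduce the count directly to Lemma~\ref{1sum} and then sum a geometric series. First I would observe that $S$ is a disjoint union of the single-$f$ blocks, one for each $f$ ranging from $f_s$ to $\ceil*{\frac{r-m'}{2}}-1$, together with the extra block $2^{\floor*{\frac{r+m'}{2}}-k}\p*{t'}$. Since the number of elements of a multiset union is the sum of the numbers of elements of its parts, I would write
\[
\#S = \sum_{f=f_s}^{\ceil*{\frac{r-m'}{2}}-1} (\#\text{block}_f) + 2^{\floor*{\frac{r+m'}{2}}-k}.
\]
Each $\#\text{block}_f$ is exactly the quantity computed in Lemma~\ref{1sum}, namely $2^{r-f-k-1}$. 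The hypotheses here ($f \leq \ceil*{\frac{r-m'}{2}}-1$ and $m' \geq k$, with $r > m'$ available from the range of $f$) match those of that lemma, so I may invoke it verbatim for each block.

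Next I would evaluate the geometric sum $\sum_{f=f_s}^{u} 2^{r-f-k-1}$ with $u = \ceil*{\frac{r-m'}{2}}-1$. A standard partial-sum computation gives
\[
\sum_{f=f_s}^{u} 2^{r-f-k-1} = 2^{r-f_s-k} - 2^{r-u-1-k}.
\]
The crux of the argument is then to identify the leftover term $2^{r-u-1-k}$ with the size of the extra block. Using $u+1 = \ceil*{\frac{r-m'}{2}}$ together with the elementary identity $r - \ceil*{\frac{r-m'}{2}} = \floor*{\frac{r+m'}{2}}$ (which follows from $\ceil*{x} = -\floor*{-x}$), I would obtain $r-u-1-k = \floor*{\frac{r+m'}{2}}-k$, so that $2^{r-u-1-k}$ equals precisely the size of the trailing $\p*{t'}$ block. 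Substituting back, the subtracted term cancels against the extra block and leaves $\#S = 2^{r-f_s-k}$, as claimed.

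The only genuinely delicate point is this last floor/ceiling identity, i.e.\ verifying that the extra block $2^{\floor*{\frac{r+m'}{2}}-k}\p*{t'}$ carries exactly the cardinality needed to ``complete'' the geometric series to a clean power of two; everything else is bookkeeping already handled by Lemma~\ref{1sum}. I would confirm that identity by splitting on the parity of $r-m'$ (equivalently of $r+m'$) and checking that it holds in both the even and the odd case.
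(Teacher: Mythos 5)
Your proof is correct and follows essentially the same route as the paper's: sum the per-$f$ block sizes $2^{r-f-k-1}$ from Lemma \ref{1sum} as a geometric series and observe that the trailing $\p*{t'}$ block, of size $2^{\floor*{\frac{r+m'}{2}}-k}$, exactly cancels the leftover term via $r-\ceil*{\frac{r-m'}{2}}=\floor*{\frac{r+m'}{2}}$. Your version is in fact a bit cleaner, since by invoking Lemma \ref{1sum} uniformly you avoid the paper's case split on $m'$ and the parity of $r$ and its separate expansion of the topmost-$f$ exponent.
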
 

\begin{proof}
When $f_s \geq \ceil*{\frac{r-m'}{2}}-1$ this result, with the help of Lemma \ref{1sum}, is trivial to check. Let us assume now that $f_s
\leq \ceil*{\frac{r-m'}{2}}-2$. We consider two cases. The first case is when $m' = 0$ and $r$ is odd, or $m' \leq 1$ and $r$ is even.
Then the number of elements of $S$ is:
\[
\#S = 2^{\floor*{\frac{r} {2}}-k} + \sum_{f=f_s}^{\ceil*{\frac{r}{2}}-1} 2^{min(f+2, r-f-1) +min(m', max(0, r-2f-3))-k +
max(0,r-2f-3-m')}
\]
\[
=2^{\floor*{\frac{r} {2}}-k} + 2^{ r-\left(\ceil*{\frac{r}{2}}-1\right)-1-k }
+ \sum_{f=f_s}^{\ceil*{\frac{r}{2}}-2} 2^{f+2
+m'-k +r-2f-3-m'}
\]
\[
=2^{\floor*{\frac{r} {2}}-k} + 2^{\floor*{\frac{r} {2}}-k}
+ \sum_{f=f_s}^{\ceil*{\frac{r}{2}}-2} 2^{r-f-1-k}.
\]
We use now the formula for the sum of a geometric series:
\[
\#S= 2^{\floor*{\frac{r} {2}}-k+1} + 2^{r-f_s-k-1} \frac{1-2^{(-1)(\ceil*{\frac{r}{2}}-1-f_s)} } {1 - 2^{-1}}
= 2^{\floor*{\frac{r} {2}}+1} + 2^{r-f_s-k} \left(1-2^{f_s+1- \ceil*{\frac{r}{2}}}\right)
\]
\[
=2^{\floor*{\frac{r} {2}}-k+1} + 2^{r-f_s-k} - 2^{\floor*{\frac{r} {2}}-k+1}
= 2^{r-f_s-k}.
\]
Let us now consider the case where $m' > 0$ if $r$ odd, or $m' > 1$ when $r$ even.
The same sum becomes:
\[
\#S = 2^{\floor*{\frac{r+m'} {2}}-k} + \sum_{f=f_s}^{\ceil*{\frac{r-m'}{2}}-1} 2^{min(f+2, r-f-1) +min(m', max(0, r-2f-3))-k +
max(0,r-2f-3-m')}
\]
\[
= 2^{\floor*{\frac{r+m'} {2}}-k} + \sum_{f=f_s}^{\ceil*{\frac{r-m'}{2}}-1} 2^{ f+2 +min(m', r-2f-3)-k + max(0,r-2f-3-m')}
\]
\[
= 2^{\floor*{\frac{r+m'} {2}}-k} + 2^{ \ceil*{\frac{r-m'}{2}}+1 + r - 2 \ceil*{\frac{r-m'}{2}} - 1 -k}
+ \sum_{f=f_s}^{\ceil*{\frac{r-m'}{2}}-2} 2^{ f+2 +m'-k + r-2f-3-m'}
\]
\[
= 2^{\floor*{\frac{r+m'} {2}}-k+1}  + \sum_{f=f_s}^{\ceil*{\frac{r-m'}{2}}-2} 2^{r-f-k-1}.
\]
We use once again the formula for the sum of a geometric series, which produces
\[
\#S = 2^{\floor*{\frac{r+m'} {2}}-k+1}  + 2^{r-f_s-k} - 2^{\floor*{\frac{r+m'} {2}}-k+1}
=2^{r-f_s-k}.
\]
\end{proof}

In the proof of the next lemma we will use variables analogous to those we presented in Corollary \ref{csqr}. Before proceeding further, let
us introduce one new notation that we will employ frequently:
\[
m^* = m' -k,  
\]
where we should note that $m^* \geq 0$ since $m' \geq k$.

Before starting to prove Lemma \ref{lin}, let us note first that it is not superseded by the earlier mentioned result of
Marshall and Ramage \cite{mar75}. Furthermore, we do not see a way to employ their proof technique to obtain this
lemma---even when $v=r$ and $d=0$. It is because, at the very beginning, they constraint solutions to equivalence classes that are over ring
$\ZZ_{2^{r-1}}$.
For $n=2$ this right away limits the divisibility they may obtain to $2^{r-1}$, which is less than Lemma \ref{lin} produces.
\begin{lemma} \label{lin}
Let $P(x) = a_x x^2 + b_x x +c$ and $Q(y, h) = a_y y^2 + b_y y +  2^{r-d}h$. Then for any $ d \leq v \leq r$  when
we work over $\ZZ_{2^r}$, it holds that:
\[
2^{v+d} \;|\; \#\left\{(x, y, h) : x \in \bigcup\limits_{j=0}^{2^v-1} \{l_x+
2^{r-v}j \}, y \in \bigcup\limits_{j=0}^{2^v-1} \p*{l_y+ 2^{r-v}j} , h \in \bigcup\limits_{j=0}^{2^d-1} \p*{j}, P(x)  = Q(y, h) \right\}
\]
for any $l_x, l_y$.
\end{lemma}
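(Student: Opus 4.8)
The plan is to first eliminate the variable $h$, then recognize the remaining count as a \emph{multiplicative} intersection of two image multisets, and finally evaluate that intersection slice-by-slice using Corollary~\ref{csqr} together with Lemmas~\ref{1sum} and~\ref{multisum}. First I would observe that for a fixed pair $(x,y)$ the equation $P(x)=Q(y,h)$ reads $2^{r-d}h \equiv P(x)-a_yy^2-b_yy \pmod{2^r}$, and that as $h$ runs over $\{0,\dots,2^d-1\}$ the quantity $2^{r-d}h$ runs exactly once through all multiples of $2^{r-d}$ in $\ZZ_{2^r}$. Hence there is a unique admissible $h$ precisely when $P(x)\equiv a_yy^2+b_yy \pmod{2^{r-d}}$, and none otherwise. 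Writing $\pi\colon \ZZ_{2^r}\to\ZZ_{2^{r-d}}$ for the reduction map, the count in question therefore equals the multiplicative intersection, computed in $\ZZ_{2^{r-d}}$, of $\pi$ applied to the image of $P$ on the slice $\{l_x+2^{r-v}j\}$ and of $\pi$ applied to the image of $y\mapsto a_yy^2+b_yy$ on the slice $\{l_y+2^{r-v}j\}$. So it suffices to show this intersection is divisible by $2^{v+d}$.

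Next I would invoke Corollary~\ref{csqr} to write each of the two images, over its slice of size $2^v$, as a union of scaled cosets of ideals (``slices''), recording that the Corollary's parameter satisfies $m'\ge k:=r-v$, since $a'=a2^{2k}$ and $b'=(2al+b)2^k$ are both divisible by $2^k$. Projecting a scaled coset $\lambda\,(c+2^s\ZZ_{2^r})$ under $\pi$ multiplies its multiplicity by $2^d$ when $s<r-d$ (and collapses it to a point of total multiplicity $2^v$ otherwise); this is exactly where the target factor $2^d$ enters. The multiplicative intersection of two projected scaled cosets $\lambda_1(c_1+2^{s_1}\ZZ_{2^{r-d}})$ and $\lambda_2(c_2+2^{s_2}\ZZ_{2^{r-d}})$ is either empty, or equals $\lambda_1\lambda_2\,2^{(r-d)-\max(s_1,s_2)}$ when the centers agree modulo $2^{\min(s_1,s_2)}$ (odd factors in the centers being dropped via Lemma~\ref{lem1}).

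For the easy combinations, where both images come from cases~a) or~b) of Corollary~\ref{csqr} (a single scaled coset apiece, with $s=m'$ or $m'+1$ and projected multiplicity $2^{m'-k+d}$), substitution gives intersection exponent $r-2k+d+\min(m'_1,m'_2)\ge r-k+d=v+d$, using $m'_i\ge k$; the collapsed-coset subcase is handled similarly since its partner's multiplicity is already divisible by $2^d$. Thus each such term carries the required divisibility on its own.

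The main obstacle is the case where one or both images fall under case~c) of Corollary~\ref{csqr}, i.e.\ the quadratic coefficient has strictly smaller order than the linear one. There the image is spread over the family of slices indexed by $f$ (of granularity $s_f=2f+3+m'$), plus the terminal slice $\{t'\}$, and the per-$f$ intersection contributions need \emph{not} be individually divisible by $2^{v+d}$. Here I would, after translating centers so that the square structure of Lemma~\ref{sqrr} is exposed, decide for each pair of indices whether the center-congruence holds, reparametrize the surviving overlap as a product condition counted by Lemma~\ref{multi} (exactly the role the $\#\{xy=m\}$ count plays in the base-case argument), and then collapse the geometric sum over $f$ using Lemmas~\ref{1sum} and~\ref{multisum} to recover the divisibility of the \emph{total}. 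The delicate bookkeeping of these multiplicities and congruence conditions in the c)$\times$c) subcase is where essentially all the difficulty lies; the a)/b) interactions and the terminal-slice contributions are comparatively immediate.
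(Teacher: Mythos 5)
Your skeleton matches the paper's: the paper also reduces to a multiplicative intersection of two image multisets, decomposes each image into slices via Corollary~\ref{csqr} (folding the $2^{r-d}h$ term into the image of $Q$ as extra multiplicity $2^d$ on fine slices and a coarsening to period $2^{r-d}$ on coarse ones, which is equivalent to your projection $\pi:\ZZ_{2^r}\to\ZZ_{2^{r-d}}$), reduces category b) to a) by incrementing $m'$, and disposes of the a)$\times$a) interactions by the direct multiplicity count you give. Up to that point the proposal is sound.

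The gap is in the cases involving category c), and it is twofold. First, Lemma~\ref{multi} plays no role here: the overlaps to be counted are between arithmetic-progression slices $\{2^{2f+3+m'}i+q'2^{2f+m'}+t'\}$ and there is no reparametrization of them as a product condition $xy=m$; that lemma enters only in the base case of the main theorem, so invoking it here does not supply an argument. Second, and more seriously, ``collapse the geometric sum over $f$ using Lemmas~\ref{1sum} and~\ref{multisum}'' is not enough. The paper's grouping device is: when a single slice-pair overlap falls short, show the finer slice is forced to meet \emph{all} slices of the other image with index above a threshold $f_s$, and apply Lemma~\ref{multisum} to that union to get $2^{r-f_s-k}$ elements. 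But in several boundary sub-cases (e.g.\ $m^*_x=m^*_y=0$ with $r-d-m'_y$ odd, or $d=0$ with $v$ odd) even this grouped family contributes only $2^{v+d-1}$, and the proof must then locate a \emph{second, structurally different} family of forced overlaps --- the terminal slice $\{2^{r-d}h+t'\}$ against all sufficiently high $f_x$-slices, or the $\{0\}$ slice against the terminal slice --- contributing another $2^{v+d-1}$, prove that these extra overlaps are indeed forced by the hypotheses of the deficient case, and audit that no pair of slices is counted twice across the whole case analysis. Your plan contains no mechanism for producing these compensating families or for the disjointness bookkeeping, and without them the claimed divisibility genuinely fails for those parameter choices, so the c)-cases cannot be closed as described.
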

\begin{proof}

Depending on its constants, $P$ may fall into category a), b) or c) as per Corollary \ref{csqr}. We will also consider $Q$ to be in one
of those categories depending to which of them the $a_y y^2 + b_y y$ part of $Q$ belongs. We will go through all possible pairings of those
categories for $P$ and $Q$ and proof the result for each of them. Let us notice though that if $P$ is in category b), we could treat it
just as being in category a) but with taking its $m'$ (or $m'_x$ as we will call it) to be bigger by $1$, but not bigger than $r$.
The same goes for $Q$. Therefore category b) can be easily ``reduced'' to category a) and in the rest of the proof it is sufficient if
we only consider $P$ and $Q$ to be in categories either a) or c).

\begin{enumerate}
  \item \label{itm:aa} Both $P$ and $Q$ are in category a).\\
   The image of $P$ is
  \[ 2^{m'_x-k} \bigcup_{i=0}^{2^{r-m'_x}-1} \p*{ 2^{m'_x} i + c' }  \] 
  whereas the image of $Q$ is
  \[ 2^{m'_y-k} \bigcup_{i=0}^{2^{r-m'_y}-1} \bigcup_{h=0}^{2^{d}-1} \p*{ 2^{m'_y} i + 2^{r-d}h }
   =  2^{m'_y+min(r-m'_y, d)-k} \bigcup_{i=0}^{2^{r - min(m'_y, r-d)}-1}  \p*{ 2^{min(m'_y, r-d)}i },   \]
 where variables are defined in analogy to Corollary \ref{csqr}. If those two images have no common element we automatically obtain
  the required divisibility. Otherwise, their distinct common elements are all elements of the more sparse image, that is:
  
  \[  \bigcup\limits_{i=0}^{2^{r-max(m'_x, min(m'_y, r-d))}-1} \p*{ 2^{max(m'_x, min(m'_y, r-d))} i + g'},\]
  where $g'= 0$ or $c'$ depending on value of $max(m'_x, min(m'_y, r-d))$. There is $2^{r-max(m'_x, min(m'_y, r-d))}$ of those elements, and
  each of them has  
  \[2^{m'_x-k}2^{m'_y+min(r-m'_y, d)-k}   \]
  occurrences. This gives the size of the whole overlap to be
  \[ 2^{r-max(m'_x, min(m'_y, r-d))} 2^{m'_x-k}2^{m'_y+min(r-m'_y, d)-k}  =
  2^{min(r-m'_x, r- min(m'_y, r-d)) + min(r-m'_y, d) + m^*_x + m^*_y}   \]
  \[
  =2^{min(r-m'_x, max(r-m'_y, d) ) + min(r-m'_y, d) + m^*_x + m^*_y} 
  = 2^{min(v-m^*_x+min(v-m^*_y, d)  , v-m^*_y+ d) + m^*_x + m^*_y} 
  \]
  \[
  = 2^{min(v+min(v, d+m^*_y), v+ d + m^*_x)} 
  = 2^{min(2v, v+ d + m^*_x)},
  \]
  which produces the desired divisibility.

  \item \label{itm:ca} $P$ is in category c) and $Q$ is in category a).\\
  The image of $P$ is
    \[
  \left(\bigcup_{f=0}^{\ceil*{\frac{r-m'_x}{2}}-1} 2^{min(f+2, r-f-1)+min(m'_x, max(0, r-2f-3))-k} \bigcup_{i=0}^{max(0,
  2^{r-2f-3-m'_x}-1)} \p*{2^{2f+3+m'_x}i + q'2^{2f+m'_x} } \right)\]\[ \cup\; 2^{\floor*{\frac{r+m'_x}{2}}-k}\p*{t'},
  \]
  and the image of $Q$ is, once again,
  \[ 2^{min(r, m'_y+d)-k} \bigcup_{i=0}^{2^{r - min(m'_y, r-d)}-1}  \p*{ 2^{min(m'_y, r-d)}i }.
  \]
  The image of $P$ consists of $\ceil*{\frac{r-m'_x}{2}}$ linear slices (one per value of $f$) and then the slice for
  $\p*{t'}$, which we can take to have a period of $2^r$. If the images of $P$ and $Q$ have no common element, we automatically get the
  result. When the contrary is true, let us first assume that there is a common element between the two images at a slice
    \[\p*{2^{2f_0+3+m'_x}i + q'2^{2f_0+m'_x} + t' },\]
    for certain $f_0$. Let us consider the following cases:
    \begin{enumerate}[label*=\arabic*., ref=\arabic*]
      \item $min(m'_y, r-d) \geq 2f_0+1+m'_x$ \\
      
      Under this condition only elements in that particular slice for $f_0$ may
      be common for the two images. Let $min(m'_y, r-d) = 2f_0+1+m'_x+h$ for certain $h\geq0$, which also automatically means that 
       $min(r, m'_y +d)-k = 2f_0+1+m'_x +h +d -k$. When $h \leq 2$ the whole slice is common.
      Then, starting at $h=2$, whenever $h$ increases by $1$ the part of the slice for $f_0$ that is common is halved. By Lemma
      \ref{1sum}, the number of elements in the whole $f_0$ slice equals $2^{r-f_0-k-1}$, and the size of whole overlap is the number of common
      elements in the slice for $f_0$ multiplied by the number of occurrences of each distinct element in the image of $Q$. Therefore the
      size of the intersection is
      \[
      2^{r-f_0-k-1}2^{min(-(h-2), 0)}2^{min(m'_y, m'_y+d)-k} = 
      2^{r-f_0-k-1 + min(2-h, 0)+ 2f_0+1+m'_x +h +d -k}\]
      \[ = 2^{r-k+ f_0+m^*_x+ d + min(2, h)} =2^{v+ d + f_0+m^*_x+ min(2, h)},
      \]
      which has the desired divisibility.

      \item $min(m'_y, r-d) \leq 2f_0+m'_x$ \\
      
        If $min(m'_y, r-d) \geq m'_x$ then we can write that $min(m'_y, r-d) = 2f_s+m'_x+h$ for certain $f_s
  		\leq f_0$ and $0 \leq h \leq 1$. Otherwise, we have that $min(m'_y, r-d) = k + h$ for certain $h < m'_x - k$, and we set $f_s=0$.
  		Now the common elements for both images that are in the image of $P$ are:
  		  {\small
		  \[
		  \left(\bigcup_{f=f_s}^{\ceil*{\frac{r-m'_x}{2}}-1} 2^{min(f+2, r-f-1)+min(m'_x, max(0, r-2f-3))-k} \bigcup_{i=0}^{max(0, 2^{r-2f-3-m'_x}-1)}
		  \p*{2^{2f+3+m'_x}i + q'2^{2f+m'_x} + t' } \right)\]\[ \cup\; 2^{\floor*{\frac{r+m'_x} {2}}-k} \left \{
		  t' \right\},
		  \]
		  }that is, all slices for $f = f_s$ and higher orders. Each element in this part of the $P$ image will be multiplied $2^{min(r, m'_y
		  +d)-k}$ times, as there are that many elements in $Q$ image equal to it. We know from Lemma \ref{multisum} that there is $2^{r-f_s-k}$ elements in the image of $P$
		  that are common. Let us do the multiplication considering the cases on $min(m'_y, r-d)$ we described earlier. First when
		  $min(m'_y, r-d) = 2f_s+m'_x+h$, $0 \leq h \leq 1$
		  \[
		  2^{r-f_s-k}2^{min(r, m'_y+d)-k} = 2^{r-f_s-k +2f_s+m'_x+h+d-k} = 2^{v+d +f_s + h + m^*_x },
		  \]
		  and now when $min(m'_y, r-d) = k + h$, $0 \leq h < m'_y -k$, $f_s = 0$
		  \[
		  2^{r-f_s-k}2^{min(r, m'_y+d)-k} = 2^{r-k +k + h +d - k } = 2^{v+d+h}.
		  \]
		  This gives us the desired divisibility in both cases.\\
       \end{enumerate}
  
  	Finally we need to consider the scenario when the common element between both images is $\p*{t'}$, and there are no other
  	different common elements. This means that 
  	\[min(m'_y, r-d) > 2\left(\ceil*{\frac{r-m'_x}{2}}-1\right) + m'_x\]
  	therefore 
  	\[ min(m'_y+d, r) =  r-1+d+h\]
  	for certain $h=0$ or $1$ depending on parity of $r-m'_x$ . In this case the number of common elements is:
  	\[
  	2^{\floor*{\frac{r+m'_x}{2}}-k} 2^{min(r, m'_y+d)-k} 
  	=  2^{\floor*{\frac{r+m'_x}{2}}-k +  r-1+d+h-k}
  	=  2^{ v + d+ h-1 + \floor*{\frac{v+m^*_x}{2}}}.
  	\]
  	If $h=1$ (\ie $r-m'_x$ is odd) this has the required divisibility. Otherwise, we know that $v+m^*_x$ has to be even, since  
  	$r-m'_x$ is. If $v \geq 1$ this gives the desired divisibility, and when $v=0$ the whole lemma becomes trivial.
  	
\item \label{itm:ac} $P$ is in category a) and $Q$ is in category c).\\ 
The image of $P$ is
\[ 
2^{m'_x-k} \bigcup_{i=0}^{2^{r-m'_x}-1} \p*{ 2^{m'_x} i + c'}.
\] 
When it comes to the image of $Q$, for $d=0$ it would be just an image of $a_y y^2 + b_y y$, which consists of slices as we know
them from Corollary \ref{csqr}:
\[
\left(\bigcup_{f=0}^{\ceil*{\frac{r-m'_y}{2}}-1} 2^{min(f+2, r-f-1)+min(m'_y, max(0, r-2f-3))-k} \bigcup_{i=0}^{max(0,
2^{r-2f-3-m'_y}-1)} \p*{2^{2f+3+m'_y}i + q'2^{2f+m'_y} + t' } \right)
\]
\[
\cup\; 2^{\floor*{\frac{r+m'_y}{2}}-k}\p*{t'}.
\]
When $d > 0$, each of those slices is ``affected'' by the slice $2^{r-d}h$, which has granularity $2^{r-d}$. If an affected slice already has
at least that granularity, then each of its elements has just $2^d$ more occurrences. Otherwise, we obtain a slice with $2^{r-d}$
period, where each element has number of occurrences equal to number of all elements in the affected slice. In effect the image of $Q$
is:
{\small
\[
\left(\bigcup_{f=0}^{\ceil*{\frac{r-d-3-m'_y}{2}}-1} 2^{min(f+2, r-f-1)+min(m'_y, max(0, r-2f-3))-k+d} \bigcup_{i=0}^{max(0,
2^{r-2f-3-m'_y}-1)} \p*{2^{2f+3+m'_y}i + q'2^{2f+m'_y} + t' } \right)
\]
\[
\cup\; \left(\bigcup_{f=\ceil*{\frac{r-d-3-m'_y}{2}}}^{\ceil*{\frac{r-m'_y}{2}}-1} 2^{r-f-k-1}
\bigcup_{h=0}^{2^d-1} \p*{2^{r-d}h + q'2^{2f+m'_y} + t' } \right)
\cup\; \bigcup_{h=0}^{2^d-1} 2^{\floor*{\frac{r+m'_y}{2}}-k}\p*{2^{r-d}h + t'}
\]
\[
= \left(\bigcup_{f=0}^{\ceil*{\frac{r-d-3-m'_y}{2}}-1} 2^{min(f+2, r-f-1)+min(m'_y, max(0, r-2f-3))-k+d} \bigcup_{i=0}^{max(0,
2^{r-2f-3-m'_y}-1)} \p*{2^{2f+3+m'_y}i + q'2^{2f+m'_y} + t' } \right)
\]
\[
\cup\; \left(\bigcup_{f=\ceil*{\frac{r-d-3-m'_y}{2}}}^{\ceil*{\frac{r-d-m'_y}{2}}-1} 2^{r-f-k-1}
\bigcup_{h=0}^{2^d-1} \p*{2^{r-d}h + q'2^{2f+m'_y} + t' } \right)
\cup\; \bigcup_{h=0}^{2^d-1} 2^{\floor*{\frac{r+min(d+m'_y, r)}{2}}-k}\p*{2^{r-d}h + t'}.
\]
}Basing on Lemmas \ref{1sum} and \ref{multisum}, we can say that a slice for any given $f$ in $Q$'s image has $2^{r+d-f-k-1}$ elements, and
the number of elements in union of slices for a set $f=f_s$, all higher $f$'s, and the $\p*{2^{r-d}h + t'}$ slice, equals $2^{r+d-f-k}$.

  If the images of $P$ and $Q$  have no common elements, we automatically get the result.
  Now let us assume that they have a common element at some slice
  \[\p*{2^{2f_0+3+m'_y}i + q'2^{2f_0+m'_y} + t' }\]
  $\bb*{\text{\ie} f_0 \leq \ceil*{\frac{r-d-3-m'_y}{2}}-1}$.
  If $m'_x \geq 2f_0+1+m'_y$, then only elements in that particular slice may be common for the images. Let $m'_x = 2f_0+1+m'_y +h$ for
  certain non-negative $h$. When $h \leq 2$ then the whole slice is common. Then, starting at $h=2$, whenever $h$ increases by $1$, the
  part of the slice that is common is halved. Due to Lemma \ref{1sum}, the number of elements in whole slice equals $2^{r-f_0-k-1}$, and
  the number of common elements is:
  \[
  2^{m'_x-k}2^{r+d-f_0-k-1}2^{min(-(h-2), 0)} = 2^{2f_0+1+m'_y +h-k + r+d-f_0-k-1 + min(2-h, 0)}\]\[ = 2^{f_0+m^*_y+ r+d-k + min(2, h)} =
  2^{f_0+m^*_y+ min(2, h) + v +d},
  \]
  which gives the desired divisibility.\\
  
  Now let us assume that the images have a common element at some slice
  \[\p*{2^{r-d}h_0 + q'2^{2f_0+m'_y} + t' }\]
  $\bb*{\text{\ie} \ceil*{\frac{r-d-m'_y}{2}}-1 \geq f_0 \geq \ceil*{\frac{r-d-3-m'_y}{2}}}$.
  If $m'_x \geq 2f_0+1+m'_y$, then only elements in that particular slice may be common for the images. Let $m'_x = 2f_0+1+m'_y +h$ for
  certain non-negative $h$. When $m'_x \leq r-d \Leftarrow 2f_0+1+m'_y +h \leq r-d \Leftrightarrow h \leq r-d-1-2f_0-m'_y$ then the whole
  slice is common. Then, starting at $h=r-d-1-2f_0-m'_y$, whenever $h$ increases by $1$ then the part of the slice that is common is halved.
  We know that the number of elements in whole slice equals $2^{r+d-f_0-k-1}$, and the number of common elements is:
  \[
  2^{m'_x-k}2^{r+d-f_0-k-1}2^{min(-(h-(r-d-1-2f_0-m'_y)),0)} = 2^{2f_0+1+m'_y +h -k+ r+d-f_0-k-1 + min(r-d-1-2f_0-m'_y -h, 0)}\]
  \[ = 2^{f_0 + m^*_y + r + d -k + min(r-d-1-2f_0-m'_y, h) } = 2^{f_0+m^*_y + min(r-d-1-2f_0-m'_y, h)  + v +d},
  \]
  which again gives the desired divisibility.

  Let us consider now a case when $m'_x \leq 2f_0+m'_y$, \ie where more than one slice form $Q$ is common with $P$. If $m'_x \geq m'_y$ then we
  can write that $m'_x = 2f_s+m'_y+h$ for certain $f_s \leq f_0$ and $h \leq 1$. Otherwise we have that $m'_x = k + h$ for certain $h <
  m'_y -k$, and we set $f_s=0$.
  
  Now the common elements for both images that are in the image of $Q$ are:
  {\small
  \[
  \left(\bigcup_{f=f_s}^{\ceil*{\frac{r-d-3-m'_y}{2}}-1} 2^{min(f+2, r-f-1)+min(m'_y, max(0, r-2f-3))-k+d} \bigcup_{i=0}^{max(0,
  2^{r-2f-3-m'_y}-1)} \p*{2^{2f+3+m'_y}i + q'2^{2f+m'_y} + t' } \right)
  \]
  \[
  \cup\; \left(\bigcup_{f=max(\ceil*{\frac{r-d-3-m'_y}{2}}, f_s)}^{\ceil*{\frac{r-d-m'_y}{2}}-1} 2^{r-f-k-1}
  \bigcup_{h=0}^{2^d-1} \p*{2^{r-d}h + q'2^{2f+m'_y} + t' } \right)
  \cup\; \bigcup_{h=0}^{2^d-1} 2^{\floor*{\frac{r+min(d+m'_y, r)}{2}}-k}\p*{2^{r-d}h + t'},
  \]
  }that is, all slices for $f = f_s$ and higher orders. Each element in this part of the $Q$ image will be multiplied $2^{m'_x-k}$ times,
  as there are that many elements in $P$ image equal to it. We know basing on Lemma \ref{multisum} that there are $2^{r+d-f_s-k}$
  elements in $Q$ image that are common. Let us do the multiplication considering the cases on $m'_x$ we described earlier. First when $m'_x =
  2f_s+m'_y+h$, $h \leq 1$
  \[
  2^{r+d-f_s-k}2^{m'_x-k} = 2^{r+d+f_s-k + h + m^*_y} = 2^{f_s + h + m^*_y + d + v}
  \]
  and now when $m'_x = k + h$, $0 \leq h < m'_y -k$, $f_s = 0$
  \[
  2^{r+d-f_s-k}2^{m'_x-k} = 2^{r-k +k + h -k +d } = 2^{h+d+v}.
  \]
  This gives the desired divisibility in both cases.\\

  Finally we need to consider the case in which the only common slice between the images is $\p*{2^{r-d}h+t'}$.
  This means that 
  \[ m'_x > 2 \left( \ceil*{\frac{r-d-m'_y}{2}}-1 \right) + m'_y \] 
  therefore
  \[ m'_x = r-d-1+h \]
  for certain $h\geq0$ or $h\geq1$ depending on the parity of $r-d-m'_y$. In that case the number of common elements is:
  \[
   C = 2^{min(d, r-m'_x)} 2^{m'_x-k}2^{\floor*{\frac{r+min(d+m'_y, r)}{2}}-k} 
  \]
  Let us consider two sub-cases, first when $v>d$ giving:
  \[
  C = 2^{r-d-1+h-k+\floor*{\frac{r+min(d+m'_y, r)}{2}}-k}
  = 2^{min(d, d+1-h)+ v-d-1+h+\floor*{\frac{v+min(d+m^*_y, v)}{2}}}
  \]
  \[
  = 2^{ v+d + min(h, 1)-1 +\floor*{\frac{v-d+min(m^*_y, v-d)}{2}}}
  \]
  If $h \geq 1$ this gives the required divisibility. When $h=0$ it means that $r-d-m'_y$ is even and therefore also that $v-d+min(m^*_y,
  v-d)$ is even. Because $v>d$ it means that $v-d+min(m^*_y, v-d) \geq 2$, which also results in required divisibility.
  Let us consider now the case for $v=d$, starting over from the initial formula for the size of the intersection:
  \[ 
    C = 2^{min(v, r-m'_x)} 2^{m'_x-k}2^{\floor*{\frac{r+min(v+m'_y, r)}{2}}-k}
    = 2^{v + \floor*{\frac{v+min(v+m^*_y, v)}{2}}}
    = 2^{v + \floor*{\frac{2v}{2}}} = 2^{2v},
  \]
  which also has the desired divisibility.
  
  \item Both $P$ and $Q$ are in category c).\\ 
  The image of $P$ is
  \[
  \left(\bigcup_{f=0}^{\ceil*{\frac{r-m'_x}{2}}-1} 2^{min(f+2, r-f-1)+min(m'_x, max(0, r-2f-3))-k} \bigcup_{i=0}^{max(0,
  2^{r-2f-3-m'_x}-1)} \p*{2^{2f+3+m'_x}i + 2^{2f+m'_x} } \right)\]\[ \cup\; 2^{\floor*{\frac{r+m'_x}{2}}-k}\p*{0}
  \]
  and the image of $Q$, once again, is:
  {\small
  \[
  \left(\bigcup_{f=0}^{\ceil*{\frac{r-d-3-m'_y}{2}}-1} 2^{min(f+2, r-f-1)+min(m'_y, max(0, r-2f-3))-k+d} \bigcup_{i=0}^{max(0,
  2^{r-2f-3-m'_y}-1)} \p*{2^{2f+3+m'_y}i + q'_y2^{2f+m'_y} + t'_y } \right)
  \]
  \[
  \cup\; \left(\bigcup_{f=\ceil*{\frac{r-d-3-m'_y}{2}}}^{\ceil*{\frac{r-d-m'_y}{2}}-1} 2^{r-f-k-1}
  \bigcup_{h=0}^{2^d-1} \p*{2^{r-d}h + q'_y2^{2f+m'_y} + t'_y } \right)
  \cup\; \bigcup_{h=0}^{2^d-1} 2^{\floor*{\frac{r+min(d+m'_y, r)}{2}}-k}\p*{2^{r-d}h + t'_y}
  \]
  }We could take $q'_x=1$ and $t'_x=0$ (which allowed us to omit them in the formula), since when faced with $P(x) = Q(y, h)$ we can first
  divide both sides by $q'_x$ and then subtract $\frac{t'_x}{q'_x}$, which appropriately also adjusts $q'_y$ and $t'_y$ (both $q'_x$ and
  $q'_y$ are guaranteed to be odd).
  If there are no common elements between the two images then we are automatically done. Let us now assume that there is an overlap, \ie that
  there is at least one common element between the images.
  Our approach is to go through all possible classes of overlaps between pairs of slices, and then for any such overlap we will show either that it has the required divisibility, or that there have to be
  some more overlaps that together with this one have that divisibility. We will also never count the same overlaps more than once.

\begin{enumerate}[label*=\alph*., ref=\arabic{enumi}.\alph{enumii}]
  \item \label{itm:3a} First let us assume there is a common element between certain slice for $f_x$  and  a slice for $f_y \leq
  \ceil*{\frac{r-d-3-m'_y}{2}}-1$.
  Let us go through cases:
  \begin{enumerate}[label*=\arabic*., ref=\arabic{enumi}.\alph{enumii}.\arabic*]
    \item $2f_x + m'_x \geq 2f_y + m'_y$:\\
    Then the number of common elements between those two slices is the number of elements in the slice $f_x$ multiplied by the number of
    occurrences of each distinct element in the slice $f_y$, that is:
    \[ 2^{r - f_x - k -1 }2^{min(f_y+2, r-f_y-1)+min(m'_y, max(0, r-2f_y-3))-k+d} \]\[= 2^{r - f_x - k -1 + f_y+2 +m'_y  -k+d}  = 2^{v + d
    + m^*_y -f_x+f_y+1} \]
    since $f_y \leq \ceil*{\frac{r-d-3-m'_y}{2}}-1$.
    If $f_x \leq m^*_y + f_y +1$ then the common part of those two slices already has the desired divisibility. Therefore let us assume
    now, that there is a pair of slices that overlap, for which $f_x \geq  m^*_y +f_y + 2$. The slice for $f_y$ has a ``period'' of
    $2^{2f_y+3 +m'_y}$. If that slice overlaps with the slice
    
    \[ \p*{2^{2f_x + m'_x}} = \p*{2^{2(m^*_y +f_y + 2+h)+ m'_x}} = \p*{2^{2f_y+3 +m'_y + m^*_x + m^*_y+1 + h}} \]
    
    (for certain even $h \geq 0$), then it also overlaps with all other slices attainable from it by shifting
    by a multiple of the $2^{2f_y+3 +m'_y}$ period. Therefore the slice for $f_y$ overlaps with slices
    
    \[\p*{0} \text{\;\;\;and\;\;\;}\p*{2^{2f_y+3 +m'_y+ m^*_x + m^*_y+1 +h}} \]
    
    for all even $h \geq  -m^*_x - m^*_y -1$. This means that the slice for $f_y$ overlaps with slices $\p*{0}$ and
    $\p*{2^{2f_x+m'_x}}$ for any $f_x$ such that
    
    \[2f_x + m'_x \geq 2f_y+3 +m'_y \Leftrightarrow f_x \geq \ceil*{\frac{2f_y+3 +m'_y -
    m'_x}{2}}.\]
    
    Let us count the number of common elements between the slice for $f_y$ and the union of the just-mentioned slices. Using
    Lemma \ref{multisum} we know the size of that union is $2^{r-\ceil*{\frac{2f_y+3 +m'_y - m'_x}{2}}-k}$, whereas the number of
    occurrences of each distinct element in the slice for $f_y$ is $f_y+2 +m'_y  -k+d$, which gives:
    \[ 2^{r-\ceil*{\frac{2f_y+3 +m'_y - m'_x}{2}}-k + f_y+2 +m'_y  -k+d} = 2^{v-\ceil*{\frac{2f_y+3 +m'_y - m'_x}{2}} + f_y+2 +m^*_y+d}
    = 2^{v+d-\ceil*{\frac{ -m^*_y - m^*_x-1}{2}}},
    \]
    that produces the desired divisibility.
    
    \item \label{itm:3a2}$2f_x + m'_x < 2f_y + m'_y$:\\
    Now the number of common elements between the slices is the number of elements in the slice $f_y$ multiplied by the number of
    occurrences of each distinct element in the slice $f_x$, that is:
    \[ D = 2^{r +d - f_y - k -1 }2^{min(f_x+2, r-f_x-1)+min(m'_x, max(0, r-2f_x-3))-k}. \]
    Now we need to consider the following two subcases:
    \begin{enumerate}[label*=\arabic*., ref=\arabic{enumi}.\alph{enumii}.\arabic{enumiii}.\arabic*]
      \item \label{itm:3a21} $r-2f_x-3-m'_x \geq 0$:\\
      \[D = 2^{r+d - f_y - k -1 + f_x+2 +m'_x  -k+d}  = 2^{v + d + m^*_x -f_y+f_x+1}. \]
      This case is analogous to the case we already considered---just $d$ shows up in a different place.
      If $f_y \leq m^*_x + f_x +1$, then the common part of those two slices already has the desired divisibility. Therefore let us assume
      now that there is a pair of slices that overlap, for which $f_y \geq  m^*_x +f_x + 2$. The slice for $f_x$ has a ``period'' of
      $2^{2f_x+3 +m'_x}$. If that slice overlaps with the slice
      
      \[ \p*{q'_y2^{2f_y + m'_y}  +t'_y} = \p*{q'_y2^{2(m^*_x +f_x + 2+h)+ m'_y}+t'_y} = \{q'_y2^{2f_x+3 +m'_x + m^*_y + m^*_x+1 +
      h}+t'_y\}\]
      
      (for certain even $h \geq 0$), then it also overlaps with all other slices attainable from it by shifting
      by a multiple of the $2^{2f_x+3 +m'_x}$ period. Therefore the slice for $f_x$ overlaps with slices
      
      \[\p*{t'_y} \text{\;\;\;and\;\;\;}\p*{q'_y2^{2f_x+3 +m'_x+ m^*_y + m^*_x+1 +h}+t'_y} \]
      
      for all even $h \geq  -m^*_x - m^*_y -1$. This means that the slice for $f_x$ overlaps with slices $\p*{t'_y}$ and
      $\p*{q'_y2^{2f_y+m'_y}+t'_y}$, for any $f_y$ such that
      
      \[2f_y + m'_y \geq 2f_x+3 +m'_x \Leftrightarrow f_y \geq \ceil*{\frac{2f_x+3 +m'_x -
      m'_y}{2}}.\]
      
      Let us count the number of common elements between the slice for $f_x$ and the union of the just-mentioned slices. Basing on
      Lemma \ref{multisum} we know the size of that union is $2^{r-\ceil*{\frac{2f_x+3 +m'_x - m'_y}{2}}-k+d}$, whereas the number of
      occurrences of each distinct element in the slice for $f_x$ is $f_x+2 +m'_x  -k$ which gives:
      \[ 2^{r-\ceil*{\frac{2f_x+3 +m'_x - m'_y}{2}}-k +d + f_x+2 +m'_x  -k} = 2^{v-\ceil*{\frac{2f_x+3 +m'_x - m'_y}{2}} + f_x+2 +m^*_x+d}
      = 2^{v+d-\ceil*{\frac{ -m^*_x - m^*_y-1}{2}}},
      \]
      which also produces the desired divisibility.
      
      \item $ 0 > r-2f_x-3-m'_x$ \\
      \[D = 2^{r - f_y - k -1 + r - f_s - k -1 + d} = 2^{2v -f_y - f_x -2 + d}. \]
      In this case $f_x = \ceil*{\frac{r-m'_x}{2}}-1 =
      \ceil*{\frac{v-m^*_x}{2}}-1$,  $f_y \leq \ceil*{\frac{r-3-m'_y-d}{2}}-1 = \ceil*{\frac{v-m^*_y-3 -d}{2}}-1$. Let us
      substitute them into the equation (we take worst case for $f_y$), obtaining:
      \[ D = 2^{2v -\ceil*{\frac{v-m^*_x}{2}} -\ceil*{\frac{v-m^*_y-3-d}{2}} + d } = 2^{v + d + h}  \]
      for certain non-negative $h$.
    \end{enumerate}
  \end{enumerate}
  
  \item \label{itm:3b} Now let us assume there is a common element between certain slices for $f_x$  and  $
  \ceil*{\frac{r-d-m'_y}{2}}-1   \geq   f_y \geq  \ceil*{\frac{r-d-3-m'_y}{2}}$. Let us once again
  consider two cases:
  \begin{enumerate}[label*=\arabic*.]
    \item $2f_x + 3 + m'_x \geq r-d$:\\
    In this case the number of common elements between the two slices is the number of elements in the slice for $f_x$ multiplied by the
    number of occurrences of each distinct element in the slice $f_y$, that is:
    \[ 2^{r - f_x - k -1 } 2^{r - f_y - k -1} = 2^{2v - f_x -f_y -2}. \]
    There are one or two possible values of $f_y$, depending whether $r-d-m'_y$ is even or odd.
    \begin{enumerate}[label*=\arabic*., ref=\arabic{enumi}.\alph{enumii}.\arabic{enumiii}.\arabic*]
      \item\label{itm:3b11} First let us take $f_y = \ceil*{\frac{r-d-m'_y}{2}} - 1$, in which case above expression equals
      \[ 2^{2v - f_x - \ceil*{\frac{r-d-m'_y}{2}} - 1} .\]
      If $f_x \leq v -d - \ceil*{\frac{r-d-m'_y}{2}} - 1$ we automatically get the required divisibility.
      Therefore let us assume,
      now, that there is a pair of slices that overlap, for which $f_x \geq  v -d - \ceil*{\frac{r-d-m'_y}{2}}$. The slice for $f_y$ has a
      ``period'' of $2^{r-d}$. If that slice overlaps with slice
      
      \[ \p*{2^{2f_x + m'_x}} = \p*{2^{2\left(v -d - \ceil*{\frac{r-d-m'_y}{2}}\right)+ m'_x + g}} = \{2^{r-d + m^*_y + m^*_x +
      s}\}\]
      
      (for certain  $g\geq 0$, $s \geq -1$), then it also overlaps with all other slices attainable from it by shifting
      by a multiple of the $2^{r-d}$ period.
      
      If $m^*_y=m^*_x=0$ and $s=-1$ (\ie $r-d-m'_y$ is odd) it means that the only overlapping slice
      is $\p*{2^{r-d+2}i + 2^{r-d-1}}$, as
      $\p*{2^{r-d}h+  q'_y2^{r-d-1} + t'_y} $ doesn't intersect with any other slice for any $h$. In this case $2f_x+m'_x = r-d-1
      \Leftrightarrow f_x = \ceil*{\frac{r-d-m'_x-1}{2}}$, but the ceiling means nothing as $r-d-m'_x-1$ is even, since $r-d-m'_y$ is odd and
      $m'_y = m'_x$ due to $m^*_y=m^*_x=0$. The number of common elements between those two slices is therefore:
      
      \[
      2^{r - \ceil*{\frac{r-d-m'_x-1}{2}}-k -1 + r - \ceil*{\frac{r-d-m'_y}{2}} - k }
      = 2^{2v - \ceil*{\frac{v-d-1}{2}} - \ceil*{\frac{v-d}{2}} - 1 } = 2^{v+d-1}
      \]
      
      Yet in this case we claim that also the slice $\p*{2^{r-d}h + t'_y}$ overlaps with all the slices for $f_x+1$ and higher.
      First, the
      slice for $f_y$ is $\p*{2^{r-d}h+  q'_y2^{r-d-1} + t'_y}$ in this case, and the slice for $f_x$ is $\{2^{r-d+2}i  +
      2^{r-d-1}\}$, and they overlap - which means that  $2^{r-d} \mid t'_y$ ($q'_y$ is guaranteed to be odd).
      Any slice for any higher $f'_x = f_x+c+1$, $c\geq0$ is $\p*{ 2^{r-d+4+2c}i + 2^{r-d+1+2c}}$,
      and therefore it has to overlap with the slice $\p*{2^{r-d}h + t'_y}$. Let us count the size of this overlap:
      
      \[2^{r-\left(\ceil*{\frac{r-d-m'_x-1}{2}} + 1\right)-k}2^{\floor*{\frac{r+min(d+m'_y, r)}{2}}-k}  = 2^{v+d-1}  \]
      Together with the $2^{v+d-1}$ we obtained earlier, this gives the required divisibility.
      
      Now let us suppose that $m^*_y \geq 1$ or $m^*_x \geq 1$ or $s\geq0$. Then the slice for $f_y$ overlaps with some slice $\{2^{r-d
      + z_0}\}$ for certain $z_0 \geq 0$, which means that it also overlaps with slice $\p*{0}$, and we are in the case \ref{3e}.
      
      \item Let us consider now the second possible value of $f_y$, that is $f_y = \ceil*{\frac{r-d-3-m'_y}{2}}$, and in this case
      $r-d-m'_y$ is odd (otherwise this value of $f_y$ would equal the value we already considered). Now the size of the common part of the two slices that
      overlap is:
      \[ 2^{2v - f_x -f_y -2} = 2^{2v - f_x - \ceil*{\frac{r-d-3-m'_y}{2}} -2}.\]
      If $f_x \leq v-d - \ceil*{\frac{r-d-3-m'_y}{2}} -2$ this gives the required divisibility. Let us assume now it does not, and we have
      $f_x \geq v-d - \ceil*{\frac{r-d-3-m'_y}{2}} -1$. In this case the slice for $f_y$ overlaps with slice:
      \[
      \p*{2^{2f_x + m'_x} } = \p*{2^{2\left(v -d - \ceil*{\frac{r-d-3-m'_y}{2}} -1\right)+ m'_x + g }} = \p*{2^{2v-2d -r+d+2k+1 + m^*_y +
      m^*_x + g}}
      \]
      \[
      = \p*{2^{r-d+1+ m^*_y + m^*_x + g}}
      \]
      for certain non-negative even $g$. Now, also the slice for $f_y$ overlaps with all the slices above for all possible $g$,
      therefore the common part is:
      \[2^{r-\left(v-d - \ceil*{\frac{r-d-3-m'_y}{2}} -1\right)-k}2^{r-\ceil*{\frac{r-d-3-m'_y}{2}}-k-1}  = 2^{v+d},  \]
      which gives the required divisibility.
    \end{enumerate}
    
    \item $2f_x + 3 + m'_x < r-d$:\\
    In this case the number of common elements between the two slices is the number of elements in the slice for $f_y$ multiplied by the
    number of occurrences of each distinct element in the slice $f_x$, that is:
    \[ D = 2^{r +d - f_y - k -1 }2^{min(f_x+2, r-f_x-1)+min(m'_x, max(0, r-2f_x-3))-k} \]
    Now we need to consider the following two cases:
    \begin{enumerate}[label*=\arabic*., ref=\arabic{enumi}.\alph{enumii}.\arabic{enumiii}.\arabic*]
      \item \label{itm:3b21}$r-2f_x-3-m'_x \geq 0$:\\
      \[ D= 2^{r+d - f_y - k -1 + f_x+2 +m'_x  -k+d}  = 2^{v + d + m^*_x -f_y+f_x+1} .\]
      This case behaves exactly as the already considered case \ref{itm:3a21}.
      
      \item \label{itm:3b22} $ 0 > r-2f_x-3-m'_x$ \\
      \[ D = 2^{r +d - f_y - k -1 + r - f_x - k -1 } = 2^{2v -f_y - f_x -2 + d}. \]
      In this case $f_x = \ceil*{\frac{r-m'_x}{2}}-1 =
      \ceil*{\frac{v-m^*_x}{2}}-1$,  $f_y \leq \ceil*{\frac{r-m'_y-d}{2}}-1 = \ceil*{\frac{v-m^*_y-d}{2}}-1$. Let us
      substitute them into the equation (we take worst case for $f_y$)
      \[ D= 2^{2v -\ceil*{\frac{v-m^*_x}{2}} -\ceil*{\frac{v-m^*_y-d}{2}} + d }.  \]
      If either  $m^*_x, m^*_y$ or $d$ is greater than $0$ or $v$ is even, then this gives the required divisibility. Otherwise, we obtain 
      \[D =2^{2v -2\ceil*{\frac{v}{2}}}, \] 
      and $r-k$ is odd (since $v$ is odd). Therefore
      \[2^{2f_x+m'_x} = 2^{2f_y+m'_y} = 2^{2\ceil*{\frac{r-k}{2}}-2 +k} = 2^{r-1}. \]
      This means that the two overlapping slices we are looking at are both $\p*{2^{r-1}}$. It also means that $t'_y =0$, and that
      also slices containing $\p*{0}$ overlap.
      The number of common elements for the $\p*{0}$ slices is
      \[ 2^{2\left(\floor*{\frac{r+k}{2}}-k\right)} = 2^{2\floor*{\frac{v}{2}}}, \]
      and let us note that
      \[ 2^{2v -2\ceil*{\frac{v}{2}}} +  2^{2\floor*{\frac{v}{2}}} = 2^{2\floor*{\frac{v}{2}} + 1} = 2^{v}. \]
      This gives the desired divisibility.
    \end{enumerate}
  \end{enumerate}
  
  \item Now let us consider the cases where the slices that overlap are for certain $f_x$ and for $\p*{2^{r-d}h + t'_y}$. Once again, we
  need to look at two subcases:
  \begin{enumerate}[label*=\arabic*., ref=\arabic{enumi}.\alph{enumii}.\arabic*]
    \item \label{itm:3c1} $2f_x + 3 + m'_x \geq r-d$:\\
    The number of common elements is the number of elements in slice for $f_x$ multiplied by the number of occurrences of each element in the slice
    $\p*{2^{r-d}h + t'_y}$.
    \[
    2^{r-f_x-k-1}2^{\floor*{\frac{r+min(d+m'_y, r)}{2}}-k}  = 2^{v - f_x -1 + \floor*{\frac{r+min(d+m'_y, r) -2k}{2}}}
    = 2^{v - f_x -1 + \floor*{\frac{v+min(d+m^*_y, v) }{2}}}.
    \]
    If $f_x \leq \floor*{\frac{v+min(d+m^*_y, v)}{2}} -d -1$, then we obtain the required divisibility. Let us take now that $f_x
    \geq \floor*{\frac{v+min(d+m^*_y, v)}{2}} -d$. The slice $\p*{2^{r-d}h + t'_y}$ has a common element with the slice
    \[ \p*{2^{2f_x + m'_x}} = \p*{2^{2\left(\floor*{\frac{v+min(d+m^*_y, v)}{2}} -d\right)+ m'_x + g}}
    = \p*{2^{r-d+min(m^*_y, v-d) + m^*_x + s}}\]
    (for certain  $g\geq 0$, $s \geq -1$), and so it likewise overlaps with all other slices attainable from it by shifting
    by a multiple of the $2^{r-d}$ period. If $m^*_x=0$, $s=-1$ (\ie $v+min(d+m^*_y, v)$ is odd) and $min(m^*_y, v-d) =0$, it means that
    the only overlapping slice is
    \[\p*{2^{r-d+2}i + 2^{r-d-1}},\]
    since
    \[\p*{2^{r-d}h + t'_y} \]
    doesn't intersect with any other slice for any $h$. In this case $2f_x+m'_x = r-d-1
    \Leftrightarrow f_x = \ceil*{\frac{r-d-m'_x-1}{2}}$. 
    The number of common elements between those two slices is:
    \[
    2^{r - \ceil*{\frac{r-d-m'_x-1}{2}}-k -1 + \floor*{\frac{r+min(d+m'_y, r)}{2}}-k}
    = 2^{v - \ceil*{\frac{v-d-1}{2}} + \floor*{\frac{r+min(d+m'_y, r)-2k}{2}} - 1 }
    \]
    \[
    = 2^{v - \ceil*{\frac{v-d-1}{2}} + \floor*{\frac{v+d}{2}} - 1 }= 2^{v+d-1}.
    \]
    Because $v+min(d+m^*_y, v)$ is odd, it means that $min(d+m^*_y, v)= d+m^*_y$, $min(m^*_y, v-d) = m^*_y $, and therefore $m^*_y = 0$.
    Also, because $v+d+m^*_y$ is odd, we get that $v-d+m^*_y = r-d+m^*_y$ is odd, which means that $m'_y$ has different divisibility by
    $2$ than $r-d$. Additionally $v-d-m^*_y = r-d-m'_y$ is odd too. Therefore, if we take $f'_y =
    \ceil*{\frac{r-d-m'_y}{2}} - 1$, the slice for it will be
    \[\p*{2^{r-d}h + q'_y2^{r-d-1} + t'_y} \]
    (we know that $d+m'_y \leq r$ as $min(d+m^*_y, v)= d+m^*_y$).
    
    First, the slice for (our original) $f_y$ is $\p*{2^{r-d}h + t'_y}$ in this case, and the slice for $f_x$ is $\{2^{r-d+2}i  +
    2^{r-d-1}\}$, and they overlap---which means $2^{r-d} \nmid t'_y$  and $2^{r-d-1} \mid t'_y$. Slice for any higher $f'_x = f_x+c+1$,
    $c\geq0$ equals $\p*{ 2^{r-d+4+2c}i + 2^{r-d+1+2c}}$, and therefore it has to overlap with slice $\p*{2^{r-d}h + q'_y2^{r-d-1} + t'_y}$.
    Let us count the size of this overlap:
    
    \[2^{r-\left(\ceil*{\frac{r-d-m'_x-1}{2}} + 1\right)-k}2^{r - \ceil*{\frac{r-d-m'_y}{2}}-k}  = 2^{v+d-1}  \]
    Together with the $2^{v+d-1}$ we obtained earlier, this gives the required divisibility. Let us note that this case is different from case
    \ref{itm:3b11} with $s=-1$ and $m^*_x=m^*_y=0$, as here the slice $\p*{2^{r-d+2}i + 2^{r-d-1}}$ overlaps with $\p*{2^{r-d}h + t'_y}$, whereas
    there the slice $\p*{2^{r-d+2}i + 2^{r-d-1}}$ overlapped with $\p*{2^{r-d}h + q'_y2^{r-d-1} + t'_y}$.
    
    Now let us suppose that $min(m^*_y, v-d) \geq 1$ or $m^*_x \geq 1$ or $s\geq0$. Then the slice $\p*{2^{r-d}h + t'_y}$ overlaps with some slice
    $\p*{2^{r-d + z_0}}$ for certain $z_0 \geq 0$, which means that it also overlaps with slice $\p*{0}$, and we are in the case \ref{3f}.
    \item $2f_x + 3 + m'_x < r-d:$ \\
    The number of common elements is the number of elements in slice $\p*{2^{r-d}h + t'_y}$ multiplied by number of occurrences of each
    element in the slice for $f_x$:
    \[
    D = 2^{\floor*{\frac{r+min(d+m'_y, r)}{2}} - k+d} 2^{min(f_x+2, r-f_x-1)+min(m'_x, max(0, r-2f_x-3))-k}.
    \]
    Let us consider the following two cases:
    \begin{enumerate}[label*=\arabic*., ref=\arabic{enumi}.\alph{enumii}.\arabic{enumiii}.\arabic*]
      \item \label{itm:3c21} $r-2f_x-3-m'_x \geq 0$:\\
      \[
      D= 2^{\floor*{\frac{r+min(d+m'_y, r)}{2}} - k+d + f_x + 2 + m'_x  -k}
      = 2^{\floor*{\frac{v+min(d+m^*_y, v)}{2}} +d + f_x + 2 + m^*_x }.
      \]
      If $min(d+m^*_y, v) = v$ we automatically have the required divisibility. Let us assume now the other case, when the above becomes
      \[
     	D = 2^{\floor*{\frac{v+d+m^*_y}{2}} +d + f_x + 2 + m^*_x }.
      \]
      If
      \[
      \floor*{\frac{v+d+m^*_y}{2}}+ d+ f_x+2 +m^*_x \geq v+d \Leftrightarrow f_x \geq v -\floor*{{\frac{v+d+m^*_y}{2}}} - 2 -
      m^*_x
      \]
      \[
      \Leftrightarrow f_x \geq \ceil*{{\frac{v-d-m^*_y}{2}}} - 2 - m^*_x \Leftrightarrow f_x \geq \ceil*{{\frac{r-d-m'_y}{2}}} - 2 - m^*_x,
      \]
      then we automatically obtain the required divisibility. Let us assume now that
      \[f_x \leq \ceil*{{\frac{r-d-m'_y}{2}}} - 3 - m^*_x .\]
      In this case, there is a certain $f_y$ with which slice the slice for our $f_x$ overlaps. Because the slice for $f_x$ overlaps with
      $\p*{2^{r-d}h+ t'_y}$, and has a period of $2^{2f_x+m'_x+3}$, then it also contains any element of the form
      
      \[\p*{2^{2f_x+m'_x+3}i + t'_y} = \p*{2^{2\ceil*{\frac{r-d-m'_y}{2}}- 3 - 2m^*_x-h+m'_x}i + t'_y}\]\[ =
      \p*{2^{2\ceil*{\frac{r-d-m'_y}{2}}- 3 -m^*_x-h+k}i + t'_y}\]

      for certain $h \geq -1$ and any $i$. The maximum possible $f_y$ equals $\ceil*{{\frac{r-d-m'_y}{2}}} - 1$, and its slice contains
      \[
      \p*{ 2^{2\ceil*{\frac{r-d-m'_y}{2}} -2 + m'_y} + t'_y  } =\left\{ 2^{2\ceil*{\frac{r-d-m'_y}{2}} -2 + m^*_y + k} + t'_y
      \right\}.
      \] 
      Therefore the slices for our $f_x$ and for the maximum possible $f_y$ overlap, which means we are in the already-considered
      case \ref{itm:3b21}---we just approached it here from the ``other end''.
      
      \item \label{itm:3c22} $ 0 > r-2f_x-3-m'_x$ \\
      \[ D = 2^{\floor*{\frac{r+min(d+m'_y, r)}{2}} - k+d + r - f_x - k -1}
      = 2^{v + \floor*{\frac{v+min(d+m^*_y, v)}{2}} +d - f_x -1 }.
      \]
      In this case $f_x = \ceil*{\frac{r-m'_x}{2}}-1 = \ceil*{\frac{v-m^*_x}{2}}-1$, let us substitute:
      \[ D = 2^{v +d + \floor*{\frac{v+min(d+m^*_y, v)}{2}}  - \ceil*{\frac{v-m^*_x}{2}}   } \]
      This gives the desired divisibility unless $m^*_y = m^*_x = d = 0$ and $v$ is odd, in which case we have
      \[ D = 2^{v -1}.\]
      Let us take $f_y$ to be maximal, \ie $f_y = \ceil*{\frac{r-m'_y}{2}}-1$. When $v$ is odd and $m^*_y = m^*_x = 0$, then also $r-k$ is odd, and
      additionally
      
      \[2^{2f_y+m'_y} = 2^{2f_x+m'_x} = 2^{2\ceil*{\frac{r-k}{2}}-2 +k} = 2^{r-1}.\]
      
      Because $\p*{2^{r}h+ t'_y}$  overlaps with the slice for $f_x$, it means that $t'_y = 2^{r-1}$. Therefore also the slice
      $\p*{0}$ overlaps with the slice for the maximal $f_y$, \ie $\p*{2^{r}h + 2^{2f_y+m'_y}+ t'_y}$. Owing to symmetry (which is here due to
      $d=0$), the size of that overlap also equals $2^{v -1}$, which together with the earlier overlap gives the desired $2^{v}$.
    \end{enumerate}
  \end{enumerate}
  \item Let us consider now the case where the slices that overlap are $\p*{0}$ and a certain slice for $f_y$ where $f_y \leq
  \ceil*{\frac{r-d-3-m'_y}{2}}-1$. This case is very similar to \ref{itm:3c21}. The number of common elements is the number of elements
  in the slice $\p*{0}$ multiplied by the number of occurrences of each element in slice for $f_y$, that is
  \[
  2^{\floor*{\frac{r+m'_x}{2}}-k} 2^{min(f_y+2, r-f_y-1)+min(m'_y, max(0, r-2f_y-3))-k+d}
  = 2^{\floor*{\frac{r+m'_x}{2}}-k + f_y + 2 +m^*_y +d},\]
  since $f_y \leq \ceil*{\frac{r-d-3-m'_y}{2}}-1$.
  
  If
  \[
  f_y \geq \ceil*{{\frac{r-m'_x}{2}}} - 2 - m^*_y,
  \]
  then we automatically obtain the required divisibility. Let us assume now that
  \[f_y \leq \ceil*{{\frac{r-m'_x}{2}}} - 3 - m^*_y .\]
  In this case, there is a certain $f_x$ with which slice the slice for our $f_y$ overlaps. Because the slice for $f_y$ overlaps with
  $\p*{0}$, and has the period of $2^{2f_y+m'_y+3}$, then it also contains any element of the form
  
  \[\p*{2^{2f_y+m'_y+3}i} = \p*{2^{2\ceil*{\frac{r-m'_x}{2}}- 3 - 2m^*_y-h+m'_y}i } =
  \p*{2^{2\ceil*{\frac{r-m'_x}{2}}- 3 -h -m^*_y+k}i }\]

  for certain $h \geq -1$ and any $i$. The maximum possible $f_x$ equals $\ceil*{{\frac{r-m'_x}{2}}} - 1$, and its slice contains
  \[\p*{ 2^{2\ceil*{\frac{r-m'_x}{2}} -2 + m'_x} } =\p*{ 2^{2\ceil*{\frac{r-m'_x}{2}} -2 + m^*_x + k} }.   \]
  Therefore the slice for our $f_y$ and for the maximal $f_x$ overlap, which means we are in the already-considered case
  \ref{itm:3a}---we just approached it here from the ``other end''.
  
  \item \label{3e} Let us assume now that there is a common element between the slice $\p*{0}$  and a certain slice for $
  \ceil*{\frac{r-d-m'_y}{2}}-1 \geq f_y \geq \ceil*{\frac{r-d-3-m'_y}{2}}$.
  
  The slice for $f_y$ is
  \[
  \p*{2^{r-d}h + q'_y2^{2f_y + m'_y} + t'_y} = \p*{2^{r-d}h + q'_y2^{2\left(\ceil*{\frac{r-d-m'_y}{2}}-g\right) + m'_y} + t'_y  }
  \]
  \[
  = \p*{2^{r-d}h + q'_y2^{r-d-s} + t'_y},
  \]
  where $g=1$ or $g=2$, and $s=1$ or $s=3$ (when $r-d-m'_y$ odd) or $s=2$ (when $r-d-m'_y$ even).
  Because $\p*{0}$ overlaps with $\p*{2^{r-d}h + q'_y2^{r-d-s} + t'_y}$ it means that $2^{r-d} | q'_y2^{r-d-s} + t'_y $, and that the second
  slice is just $\p*{2^{r-d}h}$.
  
  Any slice
  \[ \p*{2^{2f_x+3+m'_x}i + 2^{2f_x+m'_x}  }  \]
  overlaps with the slice
  \[\p*{2^{r-d}h}\]
  whenever
  \[ 2f_x+m'_x \geq r-d \Leftrightarrow f_x \geq \ceil*{\frac{r-d-m'_x}{2}} .\]
  
  Let us count number of common elements of those overlaps:
  \[2^{r-\ceil*{\frac{r-d-m'_x}{2}} -k}2^{r - \left(\ceil*{\frac{r-d-m'_y}{2}}-g\right)-k-1}
  =2^{2v-\ceil*{\frac{v-d-m^*_x}{2}} - \ceil*{\frac{v-d-m^*_y}{2}}-1+g}. \]
  It gives the desired divisibility unless $g=1$, $m^*_y = m^*_x =0$ and $v-d$ is odd.
  Yet in that case let us notice that the slice for $f'_x = \ceil*{\frac{r-d-m'_x}{2}} - 1 = \ceil*{\frac{v-d}{2}} - 1$ and
  the slice $\p*{ 2^{r-d}h + t'_y}$ are respectively:
  \[ \p*{ 2^{2f'_x+3+m'_x}i + 2^{r-d-1}}   \text{\;\;\;and\;\;\;}   \p*{ 2^{r-d}h + 2^{r-d-1}}   \]
  ($q'_y$ is always odd, in this case $s=1$ and therefore $t'_y =2^{r-d-1} $). This means that those two slices overlap, which indicates
  this is the same case as \ref{itm:3c1} for $m^*_x=m^*_y=0$ and $v + min(d + m^*_y,v)$ being odd.
  
  \item \label{3f} Finally let us consider the case where the slice $\p*{0}$ overlaps with  the slice $\p*{2^{r-d}h + t'_y}$.\\
  This case is very similar to the previous one, yet different. Because those two slices overlap it means that $2^{r-d} | t'_y$, and in fact
  the second slice is just $\p*{2^{r-d}h}$.
  
  Any slice
  \[ \p*{2^{2f_x+3+m'_x}i + 2^{2f_x+m'_x}  }  \]
  overlaps with the slice
  \[\p*{2^{r-d}h}\]
  whenever
  \[ 2f_x+m'_x \geq r-d \Leftrightarrow f_x \geq \ceil*{\frac{r-d-m'_x}{2}} .\]
  Let us count number of common elements of those overlaps:
  \[2^{r-\ceil*{\frac{r-d-m'_x}{2}} -k}2^{\floor*{\frac{r+min(d+m'_y, r)}{2}}-k}  =2^{\floor*{\frac{r+min(d+m'_x, r)}{2}} -k
  + \floor*{\frac{r+min(d+m'_y, r)}{2}}-k} \]
  \[
  = 2^{\floor*{\frac{v+min(d+m^*_x, v)}{2}} + \floor*{\frac{v+min(d+m^*_y, v)}{2}}}
  \]
  If $min(d+m^*_x, v) = v$, it is easy to check this gives the required divisibility. Let us assume the opposite
  case, obtaining
  \[
  = 2^{\floor*{\frac{v+d+m^*_x}{2}} + \floor*{\frac{v+d+m^*_y}{2}}}.
  \]
  This gives the required divisibility, unless $m^*_y = m^*_x=0$ and $v+d$ is odd.
  Yet in that case let us notice that the slices for $f'_x = \ceil*{\frac{r-d-m'_x}{2}} - 1 = \ceil*{\frac{v-d}{2}} - 1$ and
  $f'_y = \ceil*{\frac{r-d-m'_y}{2}} - 1 = \ceil*{\frac{v-d}{2}}-1$ are respectively:
  \[ \p*{ 2^{2f'_x+3+m'_x}i + 2^{r-d-1}}   \text{\;\;\;and\;\;\;}   \p*{ 2^{r-d}h + 2^{r-d-1}}   \]
  ($q'_y$ is always odd), which means those two slices overlap. And this means that this is the same case as \ref{itm:3b11} for
  $m^*_x=m^*_y=0$ and $r-d-m'_y$ odd, and it was already considered.
\end{enumerate}

To sum up, we showed that whenever there is any overlap, then either it is divisible by $2^{v+d}$, or there are some additional overlaps
which sum together to that divisibility. We also never counted the same common elements of images $P$ and $Q$ more than once, which can be
seen via comparison of cases that refer to each other.
\end{enumerate}
\end{proof}

\begin{repCol} {slices}
\emph{\textbf{(restated)}}

Let $P(x) = a_x x^2 + b_x x +c_x$ and $Q(y, h) = a_y y^2 + b_y y + c_y +  2^{r-d}h$ where $d \leq r$. Then for any $ v, q \in [d, r]$, 
when we work over $\ZZ_{2^r}$ it holds that:
\[
2^{min(v,q)+d} \;|\; \#\left\{(x, y, h) : x \in \bigcup\limits_{j=0}^{2^v-1} \{l_x+
2^{r-v}j \}, y \in \bigcup\limits_{j=0}^{2^q-1} \p*{l_y+ 2^{r-q}j} , h \in \bigcup\limits_{j=0}^{2^d-1} \p*{j}, P(x)  = Q(y, h) \right\}.
\]
for any $l_x, l_y$.
\end{repCol}
\begin{proof}
Let let us take $M = max(v,q)$ and $m = min(v,q)$. Then we can split the larger of the two domains into $2^{M-m}$ slices each of some
form 
\[
\bigcup\limits_{j=0}^{2^m-1} \p*{l +  2^{r-m}j}
\]
for appropriately shifted $l$'s. Next we can use Lemma \ref{lin} $2^{M-m}$ times, for each of those smaller slices of the larger domain
separately. Then we notice that each time we obtain that the size of the intersection is divisible by $2^{m+d}$, and the divisibility of a
sum is not smaller than the smallest of the divisibilities of its components. Finally, to obtain any $c_x$ and $c_y$ we wish, we can just choose
$c = c_x-c_y$ in Lemma \ref{lin}, and then add $c_y$ to both sides of the equation.
\end{proof}

When proving the base case of our main theorem we have come against a specific multiset that a polynomial we will have may
potentially intersect with. The following lemma contributed to the divisibility of the size of such an  intersection.

\begin{lemma} \label{os}
Let us work over  $\ZZ_{2^r}$. Let $P(x) = a_x x^2 + b_x x +c$ with $x$ being constrained to domain $ x \in \bigcup\limits_{j=0}^{2^v-1}
\p*{l_x+  2^{r-v}j}$ for certain $l_x$ and $v \leq r$. Let $S$ be the following multiset:
\[
S = \left(\bigcup_{i=0}^{2^e-1} \bigcup_{f_s=0}^{v-1} f_s \bigcup_{s=0}^{2^{v-f_s-1}-1}  \p*{2^{r-e}i + 2^{r-v+f_s}(2s+1) } \right)
\cup \left(\bigcup_{i=0}^{2^e-1} (v+1) \p*{2^{r-e}i } \right),
\]
where $e \leq v$. The number of elements of the intersection (understood as a ``multiplicative'' intersection as described above) of the
multiset $S$ and the image of $P$ is divisible by \[2^{e + min(v, \ceil*{\frac{r}{2}})} .\]

\end{lemma}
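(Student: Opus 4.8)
The plan is to evaluate the multiplicative intersection $\#(S\cap\mathrm{im}\,P)$ by summing out the ``shift'' index $i$ first and then performing a summation-by-parts on the order weights, which isolates the factor $2^e$ cleanly and reduces everything to counting how often $P$ is deeply divisible by $2$. First I would rewrite $S$ in the compact form
\[
S = \bigcup_{i=0}^{2^e-1}\bigcup_{s=0}^{2^v-1} O(s)\,\p*{2^{r-e}i + 2^{r-v}s},
\]
where $O(s)=o(s)$ for $s\neq 0$ and $O(0)=v+1$; this is just a reindexing of the order-graded slices in the statement. For a fixed $x$ in the domain slice $D$ of size $2^v$ and a fixed $i$, the equation $P(x)=2^{r-e}i+2^{r-v}s$ has a unique solution $s\in[0,2^v)$ exactly when $2^{r-v}\mid P(x)$ (using $e\le v$), and then $s\equiv P(x)/2^{r-v}-2^{v-e}i \pmod{2^v}$. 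Summing $O$ over $i$ runs over the coset $c_x+2^{v-e}\ZZ_{2^v}$, where $c_x=P(x)/2^{r-v}\bmod 2^v$; a direct computation shows this coset-sum equals $2^e\,o(c_x\bmod 2^{v-e})$ when $c_x\not\equiv 0\pmod{2^{v-e}}$ and $2^e(v-e+1)$ otherwise. In particular the whole quantity is $2^e$ times an integer.

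Writing $A_m:=\#\p*{x\in D : 2^m\mid P(x)}$, a summation-by-parts on $o(c_x\bmod 2^{v-e})$ (valid for $e<v$; the case $e=v$ is immediate) then yields the clean reduction
\[
\#(S\cap\mathrm{im}\,P) = 2^e\left(\sum_{j=1}^{v-e-1}A_{r-v+j} + 2A_{r-e}\right).
\]
So it remains to prove that $\sum_{j=1}^{v-e-1}A_{r-v+j}+2A_{r-e}$ is divisible by $2^{\min(v,\ceil*{r/2})}$. Here each $A_m$ is itself a multiplicative intersection of $\mathrm{im}\,P$ with the ideal $2^m\ZZ_{2^r}$: taking $Q(y,h)=2^{m}h$ with $h\in[0,2^{r-m})$ and a trivial $y$ in Corollary \ref{slices} gives $2^{\min(v,\,r-m)}\mid A_m$, hence $2^{v-j}\mid A_{r-v+j}$ and $2^{e}\mid A_{r-e}$. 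For the small-$j$ terms this already exceeds $2^{\min(v,\ceil*{r/2})}$; the genuinely short contributions are the ``top'' ones, with $m$ near $r-e$, where Corollary \ref{slices} only supplies $2^{r-m}$ and the cap at $\ceil*{r/2}$ is not yet visible.

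To recover the missing divisibility I would feed in the explicit image of $P$ from Lemma \ref{types} / Corollary \ref{csqr}. If $\mathrm{im}\,P$ is of type (a) or (b) it is a single coset of an ideal of uniform multiplicity, so every $A_m$ is either $0$ or $\#D$ and the claim is immediate. The decisive case is (c), the square-dominated image, whose slices are weighted by $\min(f+2,r-f-1)$ with a top slice of size $2^{\floor*{(r+m)/2}}$ coming from Lemma \ref{sqrr}; counting via Lemmas \ref{1sum} and \ref{multisum} how much of each such slice lands in $2^m\ZZ_{2^r}$ evaluates the top $A_m$ exactly, and one checks the combination $\sum_j A_{r-v+j}+2A_{r-e}$ telescopes to a multiple of $2^{\min(v,\ceil*{r/2})}$, the $\ceil*{r/2}$ appearing precisely because the preimages of deeply $2$-divisible values cluster in the square-root pattern of Lemma \ref{sqrr}. \textbf{The hard part} will be exactly this last bookkeeping: matching the order-graded square-slices of $\mathrm{im}\,P$ against the thresholds $m$ and verifying the telescoping with the $\ceil*{r/2}$ cap, which is the same intersection-multiplicity difficulty the authors flag for the main proof.
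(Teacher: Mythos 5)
Your reduction is correct and is genuinely different in organization from the paper's proof: by collapsing the $i$-sum over the coset $c_x+2^{v-e}\ZZ_{2^v}$ and then summing the order-weights by parts, you turn the multiplicative intersection into the clean identity $\#(S\cap\mathrm{im}\,P)=2^e\bigl(\sum_{j=1}^{v-e-1}A_{r-v+j}+2A_{r-e}\bigr)$ with $A_m=\#\p*{x\in D:2^m\mid P(x)}$ (I verified the coset-sum evaluation and the $e=v$ caveat you flag). The paper instead keeps $S$ as a union of order-graded slices and matches each slice of $S$ against each slice of $\mathrm{im}\,P$ pairwise; your formulation is cleaner and the appeal to Corollary \ref{slices} correctly disposes of every term with $m\leq r-\min(v,\ceil*{\frac{r}{2}})$.

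However, there is a genuine gap at exactly the decisive step. The claim that ``one checks the combination $\sum_j A_{r-v+j}+2A_{r-e}$ telescopes to a multiple of $2^{\min(v,\ceil*{r/2})}$'' in the type-(c) case is asserted, not proven, and it is not a routine verification: the individual top terms can be deficient by an unboundedly large factor, so the whole burden of the lemma sits in this cancellation. For instance, with $P(x)=x^2-1$ over $\ZZ_{2^r}$, $v=r$, $e=0$, one has $2A_r=8$ against a target of $2^{\ceil*{r/2}}$, and only the geometric accumulation $\sum_{m\geq 4}2^{r-m+2}+8=2^{r-1}$ rescues the sum. Carrying this out in general requires computing each top $A_m$ exactly from Corollary \ref{csqr}(c), which depends on where $-c'$ (equivalently $t'$) sits relative to the square-residue slices $\p*{2^{2f+3+m'}i+q'2^{2f+m'}+t'}$, on the order of $t'$, and on the parities of $r$, $m'_x$, and $v$; the paper's proof of this lemma is an extended case analysis (its cases 3.a through 3.d) in which certain configurations (e.g.\ $e=h_x=0$ with $r$ odd) produce individual contributions of size only $2^{v-1}$ that must be explicitly paired with a second deficient overlap to reach $2^{v}$. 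Your $A_m$ bookkeeping would have to reproduce an equivalent analysis — possibly better organized, but not avoidable — and until that is written out the proof is incomplete at the only point where the $\ceil*{\frac{r}{2}}$ cap actually enters.
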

\begin{proof}
The multiset $S$ can be also written as
\[
S = \left(\bigcup_{f_s=0}^{v-e-1} f_s2^{e} \bigcup_{s=0}^{2^{v-f_s-1}-1}  \p*{ 2^{r-v+f_s}(2s+1) } \right)
\cup  \left(\bigcup_{i=0}^{2^e-1} \bigcup_{f_s=v-e}^{v-1} f_s2^{v-f_s-1}  \p*{2^{r-e}i } \right)
\cup \left( \bigcup_{i=0}^{2^e-1} (v+1) \p*{2^{r-e}i } \right) 
\]
\[
=\left(\bigcup_{f_s=0}^{v-e-1} f_s2^{e} \bigcup_{s=0}^{2^{v-f_s-1}-1}  \p*{ 2^{r-v+f_s}(2s+1) } \right)
\cup \left( 2^{e}(v-e+1) \bigcup_{i=0}^{2^e-1}\p*{2^{r-e}i } \right),
\]
where we used the formula for the sum of an arithmetic-geometric series on $\bigcup_{f_s=v-e}^{v-1} f_s2^{v-f_s-1}$. From now on we will
refer to set $S$ as presented in the last of the written above forms.

$P$ may belong to one of the categories a), b) or c) as per Corollary \ref{csqr}. Let us go through cases on those categories:

\begin{enumerate}[label*=\arabic*.]
  \item $P$ is in category a).\\
  The image of $P$ is 
  	\[ 2^{m'_x-k} \bigcup_{i=0}^{2^{r-m'_x}-1} \p*{ 2^{m'_x} i + c'} \]
  	If it has no common elements with $S$ we are automatically done.  When the contrary is true, let us first assume that a common element
  	is in a slice 
  	\[\p*{ 2^{r-v+f_s}(2s+1) }  \]
  	for certain $f_s$. Let us consider two cases
  	\begin{enumerate}[label*=\alph*.]
  	  \item $m'_x > r-v+f_s$ \\
  	  In this case only the elements belonging to the slice for $f_s$ may be in the intersection. The size of the common part equals the
  	  number of all elements in the $P$'s image multiplied by the number of occurrences of each distinct element in the slice for $f_s$, giving
  	  \[2^{v}f_s2^{e} = f_s2^{v+e},\]
  	  which has the required divisibility.
  	  \item $m'_x \leq r-v+f_s$ \\ 
  	  	In this case the of $P$ intersects with all elements of all slices for any $f'_s$ such that $m'_x \leq r-v+f'_s$, and also of 
  	  	the slice $\p*{2^{r-e}i }$. The size of that overlap is the number of all those elements multiplied by the
  	  	number of occurrences of each distinct element from the image of $P$. The number of elements in the slices for all $f_s \geq m^*_x$,
  	  	and the last slice $\p*{2^{r-e}i}$, is 
  	  	\[
  	  	2^e((m^*_x+1)2^{v-m^*_x} - (v-e+1) 2^{e} + (v-e+1) 2^{e}) = (m^*_x+1) 2^{e+v-m^*_x} 
  	  	\] 
  	  	(via use of the formula for an arithmetic-geometric series sum).  Therefore the size of the intersection is
  	  	\[
  	  	(m^*_x+1) 2^{e+v-m^*_x} 2^{m'_x-k} = (m^*_x+1) 2^{v+e},
  	  	\]
  	  	which has the required divisibility.  	  	
  	\end{enumerate}
    Finally we need to consider the scenario in which the common elements are in the slice $\p*{2^{r-e}i}$, and there are no common elements
    in any other slice of $S$.
    This means that 
    \[
    m'_x = r-e+h
    \]
  	for certain $h \geq 0$. In this case the size of the intersection equals to number of all elements of the image of $P$ multiplied by the
  	number of occurrences of each distinct element of the slice $\p*{2^{r-e}i}$, giving
  	\[ 
  	2^v (v-e+1)2^{e} = (v-e+1)2^{v+e},
  	\]
    which has the required divisibility.
  	
 \item $P$ is in category b).\\
  	This case can be reduced to the previous one by taking $m'_x$ to be bigger by $1$, but not bigger than $r$.

 \item $P$ is in category c).\\
Because $P$ is of type c) we know that $m'_x \geq 2k + h_x$ for certain $h_x \geq 0$ (compare with $a'$ from Corollary \ref{csqr}), and
also $m'_x < r$.
Therefore
\[2k +h_x < r \Leftrightarrow 2(r-v) + h_x < r \Leftrightarrow v > \ceil*{\frac{r+h_x}{2}},\]
which means that it will be sufficient to show the divisibility by $2^{e + \ceil*{\frac{r}{2}}}$.
In this case the image of $P$ is
{\small
\[
\left(\bigcup_{f_x=0}^{\ceil*{\frac{r-m'_x}{2}}-1} 2^{min(f_x+2, r-f_x-1)+min(m'_x, max(0, r-2f_x-3))-k} \bigcup_{i=0}^{max(0,
2^{r-2f_x-3-m'_x}-1)} \p*{2^{2f_x+3+m'_x}i + q'2^{2f_x+m'_x} + t'} \right)\]
\[ \cup\; 2^{\floor*{\frac{r+m'_x}{2}}-k}\p*{t'},
\]
}whereas $S$ is, once again,
\[\left(\bigcup_{f_s=0}^{v-e-1} f_s2^{e} \bigcup_{s=0}^{2^{v-f_s-1}-1}  \p*{ 2^{r-v+f_s}(2s+1) } \right)
\cup \left( 2^{e}(v-e+1) \bigcup_{i=0}^{2^e-1}\p*{2^{r-e}i } \right).\]

If there is no overlap between $S$ and the image of $P$, then we automatically obtain the desired result. Let us assume now that there is
an overlap. Our strategy will be to go through all possible cases of such overlaps and then show that either the overlap has required
divisibility on its own, or that there have to be some additional overlaps that together with the initial one have the desired
divisibility. We will also never use any overlap more than once to complement an overlap not having sufficient divisibility on its
own. Let us proceed to the cases:
\begin{enumerate}[label*=\alph*., ref=\arabic{enumi}.\alph*]
  \item \label{itm:t0} There is an overlap between the slices $\p*{t'}$ and $\p*{2^{r-e}i}$.\\
  In this case the slice $\p*{2^{r-e}i}$ overlaps with all slices for $f_x$ such that
  $2f_x+m'_x \geq r-e \Leftrightarrow f_x \geq \ceil*{\frac{r-e-m'_x}{2}}$. The size of that overlap is the number of elements in
  slices for all such $f_x$ and $\p*{t'}$, multiplied by the number of occurrences of each distinct element in the slice $\p*{2^{r-e}i}$.
  Using Lemma \ref{multisum} this gives
  \[2^{r-\ceil*{\frac{r-e-m'_x}{2}}-k} 2^{e}(v-e+1) =
  (v-e+1) 2^{e+\floor*{\frac{r+e+m'_x}{2}}-k  } = (v-e+1) 2^{e+\floor*{\frac{r+e+h_x}{2}}}.\]
  
  The above has the required divisibility unless $h_x=e=0$, $r$ is odd, and $v$ is even. Yet, it is not a sole overlap when that
  happens.
  The pairs of slices for any $f_x$ such that $2f_x+m'_x <o(t')$ and $f_s$ such that $r-v+f_s = 2f_x+m'_x$ also overlap,
  since $2^{r-v+f_s+1} | t'$ due to $f_s \leq v-e-1$ and $2^{r-e}|t'$. Because we require
  
  \[  2f_x+m'_x <o(t') \Leftrightarrow 2f_x< o(t')-m'_x\]
  
  and
  \[  f_s = 2f_x+m^*_x < v-e \Leftrightarrow 2f_x < v-e, \]
  let us denote $min(o(t')-m'_x, v-e-m^*_x)$ as $M$ so that $\ceil*{\frac{M}{2}}-1$ will be the limit of our sum.
  Let us count now the size of those pairwise
  overlaps. We will compute it for general $h_x, e$
  etc.\ first, since that computation will be also useful for us later:
  
  \[\sum_{f_x=0}^{\ceil*{\frac{M}{2}}-1}  2^{r - f_x -k -1} f_{s} 2^{e}
  = 2^{e} \sum_{f_x=0}^{\ceil*{\frac{M }{2}}-1}  (2f_x+m^*_x) 2^{v - f_x -1}\]
  \begin{align*}
= 2^{e}
\left(\frac{\left(2\ceil*{\frac{M }{2}}-2+m^*_x\right)2^{v - \ceil*{\frac{M}{2}}
-1}}{-\frac{1}{2}} - \frac{m^*_x 2^{v-1}}{-\frac{1}{2}}
- \frac{2\left(2^{v - \ceil*{\frac{M }{2}}
-1} - 2^{v-2} \right)}{\frac{1}{4}}  \right),
\end{align*}
(where we used the formula for sum of the arithmetic-geometric series)
\begin{align*}
= 2^{e+1}
\left( m^*_x2^{v-1} + 2^{v}
- \left(2\ceil*{\frac{M}{2}}-2+m^*_x\right)2^{v - \ceil*{\frac{M}{2}}-1}
- 2^{v -\ceil*{\frac{M}{2}}+1} \right)
\end{align*}
\begin{align*}
= 2^{e+1}
\left( \left(m^*_x + 2 \right)2^{v-1}
- \left(min(o(t')-k, v-e)+g -2\right)2^{v - \ceil*{\frac{M}{2}}-1}
- 2^{v - \ceil*{\frac{M}{2}}+1}
\right)
\end{align*}
\[ = 2^{e}
\left( \left(m^*_x + 2 \right)2^{v}
- \left(min(o(t')-k, v-e)+g+2\right)2^{v - \ceil*{\frac{M}{2}}} \right) \]
for certain $g=0$ or $1$  depending on the parity of $M$. The first element of this difference already has the required divisibility, so let
us focus on the second one:
\[ (min(o(t')-k, v-e)+g+2) 2^{e+v - \ceil*{\frac{min(o(t')-k, v-e-m^*_x)}{2}}} \]\[
=  (min(o(t')-k, v-e)+g+2) 2^{e - \ceil*{\frac{-2v + min(o(t')-k, 2v-r-e-h_x)}{2}}} \]
\[  =  (min(o(t')-k, v-e)+g+2) 2^{e + \floor*{\frac{ max(r + v -o(t'), r+e+h_x)}{2}}}.
\]
Let us use now our assumptions (\ie $h_x=e=0$, $r$ is odd, and $v$ is even, $t'=0$ due to $e=0$), which continue the calculation as:
\[  =  (v+g+2) 2^{\floor*{\frac{r}{2}}} =  (v+3) 2^{\floor*{\frac{r}{2}}},
\]
since $g=1$ owing to $M = min(o(t')-m'_x, v-e-m^*_x) = min(2v-r, 2v-r)$ being odd. Let us add it now to the earlier-found overlap (for the
$\p*{2^{r-e}i}$ slice):
\[   (v+3) 2^{\floor*{\frac{r}{2}}} + (v+1) 2^{\floor*{\frac{r}{2}}} = (v+2) 2^{\floor*{\frac{r}{2}}+1}.
\]
This gives the desired divisibility.

\item \label{itm:ts} There is an overlap between the slice $\p*{t'}$,\ and a slice for certain $f_s \leq v-e-1$.\\
The slice for $f_s$ is
\[\p*{ 2^{r-v+f_s+1}s+ 2^{r-v+f_s} },\]
and for it to overlap with $\p*{t'}$ it is necessary that
\[t' = 2^{r-v+f_s+1}j + 2^{r-v+f_s} \]
for certain $j$. Let us also notice that in this case the slice for $f_s$ overlaps with any slice  $f'_x$ such that $2f'_x+m'_x
\geq r-v+f_s+1$, as any such slice is:
\[\p*{2^{2f_x+3+m'_x}i + q'2^{2f_x+m'_x} + t' }
= \p*{2^{r-v+f_s+1+d +3}i + q'2^{r-v+f_s+1+d} + 2^{r-v+f_s+1}j + 2^{r-v+f_s} }.  \]
\[=\p*{2^{r-v+f_s+1}(2^{d +3}i + q'2^{d} + j)    + 2^{r-v+f_s}}\]
for certain $d \geq 0$. The size of that whole overlap is the number of elements of all slices for
$f'_x \geq \ceil*{\frac{r-v+f_s+1-m'_x }{2}}$ multiplied
by the number of occurrences of each distinct element in the slice $f_s$, that is:
\[2^{r-\ceil*{\frac{r-v+f_s+1-m'_x }{2}}-k}f_s2^e = f_s2^{e + v -\ceil*{\frac{f_s+1-m^*_x}{2}}}
\]\[= f_s2^{e -\ceil*{\frac{-2v+f_s+1-r+v-h_x}{2}}}
= f_s2^{e +\floor*{\frac{r+v-f_s-1+h_x}{2}}}   \]
This gives the required divisibility unless $f_s=v-1$, $h_x =0$ and $r$ is odd. In that unfortunate case, we get
that $t' = 2^{r-1}$ and $e=0$, and there is also an additional overlap ``group''.

Let us take the maximal $f'_x$, \ie $2f'_x + m'_x = r-1$ ($r$ is odd, $h_x =0  \Rightarrow m'_x$ is even),
and look at its slice
\[ \p*{2^{2f'_x+3+m'_x}i + q'2^{2f'_x+m'_x} + t' }
= \p*{ q'2^{r-1} + 2^{r-1} }  = \p*{ 0 } \]
Therefore it overlaps with the slice $\p*{2^{r-e}i} = \p*{ 0}$ (since $e=0$). We have that $f'_x = \frac{2v-r-1}{2}$
and the size of that overlap is
\[ 2^{r - \frac{2v-r-1}{2} -k -1} 2^{e}(v-e+1) =(v+1) 2^{\frac{r-1}{2} }  .  \]
Let us add it to the earlier-found overlap:
\[ (v+1) 2^{\frac{r-1}{2} } + f_s2^{e +\floor*{\frac{r+v-f_s-1+h_x}{2}}}
=  (v+1) 2^{\frac{r-1}{2} } + (v-1)2^{\floor*{\frac{r}{2}}} = v 2^{\ceil*{\frac{r}{2}}}.\]
This gives the required divisibility.
\item \label{itm:x0} There is an overlap between a slice for certain $f_x \leq \ceil*{\frac{r-m'_x}{2}}-1$ and the slice $\p*{ 2^{r-e}i}$.\\
The slice for $f_x$ is
\[\p*{ 2^{2f_x+3+m'_x}i+ q'2^{2f_x+m'_x} + t'} \]
and for it to overlap with the slice $\p*{2^{r-e}i}$ it is necessary that
\[t' = 2^{min(2f_x+3+m'_x, r-e)}j - q'2^{2f_x+m'_x} \]
for certain $j$. Let us consider three sub-cases:
\begin{enumerate}[label*=\arabic*.]
  \item $r-e \leq 2f_x+m'_x$ \\
  In this case we have
  \[t' = 2^{r-e}j'. \]
  for certain $j'$.
  It is easy to notice that the slice $\p*{2^{r-e}i}$ overlaps with slice for any $f'_x$ such that $2f'_x + m'_x \geq r-e$ and with
  the slice $\p*{t'}$. Let us count the size of this overlap as a whole, which equals to number of all elements in this overlap from
  the image of P multiplied by number of occurrences of any distinct element in the slice $\p*{2^{r-e}i}$.
  \[2^{r-\ceil*{\frac{r-e-m'_x}{2}}-k} 2^{e}(v-e+1) =(v-e+1) 2^{e + \floor*{\frac{r+e+h_x}{2}}} \]
  which gives the required divisibility unless $e=h_x=0$, $v$ is even, and $r$ is odd. Yet, because in such a case we have that
  $\p*{2^{r-e}i}$ intersects with $\p*{t'}$, it is a situation that we already considered in the case \ref{itm:t0}.
  \item $ 2f_x+3+m'_x> r-e > 2f_x+m'_x$ \\
  Now we have
  \[t' = 2^{r-e}j - q'2^{2f_x+m'_x}  \text{\;\;\;\;\;\;\;\;\;\; and \;\;\;\;\;\;\;\;\;\;\;\;\;}  f_x =
  \ceil*{\frac{r-e-m'_x}{2}}-1.\]
  
  Let us count the size of the overlap between the slice for $f_x$ and  $\p*{2^{r-e}i}$:
  \[2^{r-f_x-k-1}2^{e}(v-e+1) = (v-e+1)2^{e + r - \ceil*{\frac{r-e-m'_x}{2}}+1-k-1 }
  = (v-e+1)2^{e + \floor*{\frac{r+e+h_x}{2}}}.
  \]
  This gives the required divisibility unless $e=h_x=0$, $v$ is even, and $r$ is odd, which we assume now. Therefore, we get $t' =
  2^{r-1}$.
  Let us notice that the slice for $f'_s$
  such that $r-v+f'_s = 2f_x+m'_x =r-1$ is:
  \[\p*{2^{r-1} }  \]
  which overlaps with the slice $\p*{t'} = \p*{ 2^{r-1}}$. This means that this is the case \ref{itm:ts}, which we alraedy considered
  (we just arrived at it from another end).
  \item $ r-e \geq 2f_x+3+m'_x$ \\
  Now we have
  \[t' = 2^{2f_x+3+m'_x}j - q'2^{2f_x+m'_x}.\]

  In this case the slice for $f_x$ is
  \[ \p*{2^{2f_x+3+m'_x}i + q'2^{2f_x+m'_x} + t'}   =\p*{2^{2f_x+3+m'_x}i} \]
  and it is easy to notice that it overlaps with all slices for $f_s$ such that $r-v+f_s \geq 2f_x+3+m'_x$ and with the slice
  $\p*{2^{r-e}i}$. The size of this
  overlap is the number of elements in the slices for those $f_s$ multiplied by the number of occurrences of each distinct element
  in the slice for $f_x$.  The number of elements in the slices for all $f_s \geq 2f_x+3+m^*_x$, and the last slice
  $\p*{2^{r-e}i}$, is \[2^e((2f_x+4+m^*_x)2^{v-2f_x-3-m^*_x} - (v-e+1) 2^{e} + (v-e+1) 2^{e}) = (2f_x+4+m^*_x)2^{e+v-2f_x-3-m^*_x}
  \] (via use of formula for arithmetic-geometric sequence sum), while the number of repetitions of any element in the slice for
  $f_x$ is \[ 2^{min(f_x+2, r-f_x-1)+min(m'_x, max(0, r-2f_x-3))-k} = 2^{f_x+2+m'_x-k},  \]
  since $r \geq 2f_x+3+m'_x$. The size of the overlap is:
  \[(2f_x+4+m^*_x)2^{e+v-2f_x-3-m^*_x}2^{f_x+2+m'_x-k} = (2f_x+4+m^*_x) 2^{e+v-f_x-1} \]
  Because $f_x = \floor*{\frac{r-e-m'_x-3}{2}}- d$ for certain non-negative $d$, we have
  \[= (v-e+1-2d) 2^{e+v-\floor*{\frac{r-e-m'_x-3}{2}}-1+d}
  = (v-e+1-2d) 2^{e+\ceil*{\frac{2v -r+e+m'_x+1}{2}}+d}
  \]
  \[ =   (v-e+1-2d) 2^{e+\ceil*{\frac{r+e+h_x+1}{2}}+d},\]
  which gives the required divisibility.
\end{enumerate}
\item \label{itm:xs} There is an overlap between a slice for certain $f_x \leq \ceil*{\frac{r-m'_x}{2}}-1$, and a slice for certain
$f_s \leq v-e-1$.\\
Let us consider following sub-cases:
\begin{enumerate}[label*=\arabic*.]
  \item $2f_x+m'_x \geq  r-v+f_s+1$ \\
  In this case $2^{2f_x+m'_x} = 2^{r-v+f_s+1+d}$ for certain $d \geq 0$. The slice for $f_x$ is
  \[\p*{2^{2f_x+3+m'_x}i + q'2^{2f_x+m'_x} + t' } = \p*{2^{r-v+f_s+1+d +3}i + q'2^{r-v+f_s+1+d} + t' }.  \]
  For it to overlap with the slice
  \[\p*{ 2^{r-v+f_s+1}s+ 2^{r-v+f_s} },\]
  it is necessary that
  \[t' = 2^{r-v+f_s+1}j + 2^{r-v+f_s} \]
  for certain $j$. Let us also notice that in this case the slice for $f_s$ overlaps with any slice  $f'_x$ such that $2f'_x+m'_x
  \geq r-v+f_s+1$ (if $f_x$ was the smallest of them, then we would have $d=0$ or $d=1$). That also includes an overlap with the slice
  $\p*{t'}$.
  The size of that whole overlap is the number of elements of all slices for
  $f'_x \geq \ceil*{\frac{r-v+f_s+1-m'_x }{2}}$ multiplied
  by the number of occurrences of each distinct element in the slice $f_s$, that is:
  \[2^{r-\ceil*{\frac{r-v+f_s+1-m'_x }{2}}-k}f_s2^e = f_s2^{e + v -\ceil*{\frac{f_s+1-m^*_x}{2}}}\]\[
  = f_s2^{e + \floor*{\frac{2v-f_s-1+m^*_x}{2}}} = f_s2^{e + \floor*{\frac{r+v-f_s-1+h_x}{2}}}.
  \]
  The above gives the desired divisibility unless $f_s=v-1$, $h_x=0$, $v$ is even, and $r$ is odd, which we now assume. This gives us also
  $e=0$.
  Because in this case the slice for $f_s$ also overlaps with the slice $\p*{t'} = \p*{2^{r-1}}$, it means we are in the case \ref{itm:ts},
  which we already considered.
  \item \label{itm:xs1}$2f_x+m'_x = r-v+f_s$ \\
  The slice for $f_x$ is
  \[\p*{2^{2f_x+3+m'_x}i + q'2^{2f_x+m'_x} + t'}  = \p*{2^{r-v+f_s+3}i + q'2^{r-v+f_s} + t'}  \]
  and the slice for $f_s$ is
  \[\p*{ 2^{r-v+f_s+1}s+ 2^{r-v+f_s} }. \]
  For those two slices to overlap it is necessary that
  \[t' = 2^{r-v+f_s+1}j \]
  for certain $j$ such that $o(j) \leq v-f_s-1 $ . For any $f'_x$ such that $2f'_x + m'_x < o(t')$, and $f'_s$ such that $r-v+f'_s =
  2f'_x + m'_x$ and $f'_s \leq v-e-1 \Leftrightarrow 2f'_x + m'_x < r-e$, the slices for $f'_x$ and $f'_s$ overlap. Let
  us denote $min(o(t')-m'_x, r-e-m'_x) =min(o(j)+f_s+1-m^*_x, v-e-m^*_x)$ by $M$, and let us count the size of those overlaps:
  \[\sum_{f'_x=0}^{\ceil*{\frac{M}{2}}-1}  2^{r - f'_x -k -1} f'_{s,f'_x} 2^{e}
  = 2^{e} \sum_{f'_x=0}^{\ceil*{\frac{M }{2}}-1}  (2f'_x+m^*_x) 2^{v - f'_x -1}\]
  \begin{align*}
= 2^{e}
\left(\frac{\left(2\ceil*{\frac{M }{2}}-2+m^*_x\right)2^{v - \ceil*{\frac{M}{2}}
-1}}{-\frac{1}{2}} - \frac{m^*_x 2^{v-1}}{-\frac{1}{2}}
- \frac{2\left(2^{v - \ceil*{\frac{M }{2}}
-1} - 2^{v-2} \right)}{\frac{1}{4}}  \right)
\end{align*}
(where we used the formula for sum of the arithmetic-geometric series)
\begin{align*}
= 2^{e+1}
\left( m^*_x2^{v-1} + 2^{v}
- \left(2\ceil*{\frac{M}{2}}-2+m^*_x\right)2^{v - \ceil*{\frac{M}{2}}-1}
- 2^{v -\ceil*{\frac{M}{2}}+1} \right)
\end{align*}
\begin{align*}
= 2^{e+1}
\left( \left(m^*_x + 2 \right)2^{v-1}
- \left(min(o(j)+f_s+1, v-e)+g -2\right)2^{v - \ceil*{\frac{M}{2}}-1}
- 2^{v - \ceil*{\frac{M}{2}}+1}
\right)
\end{align*}
\[ = 2^{e}
\left( \left(m^*_x + 2 \right)2^{v}
- \left(min(o(j)+f_s+1, v-e+g+2\right)2^{v - \ceil*{\frac{M}{2}}} \right) \]
for certain $g=0$ or $1$. The first element of this difference already has the required divisibility, so let
us focus on the second one: 
\[ (min(o(j)+f_s+1, v-e)+g+2) 2^{e+v - \ceil*{\frac{min(o(j)+f_s+1-m^*_x, v-e-m^*_x)}{2}}} \]\[
=  (min(o(j)+f_s+1, v-e)+g+2) 2^{e - \ceil*{\frac{-2v + min(o(j)+f_s+1-r+v-h_x, 2v-r-e-h_x)}{2}}} \]
\[  =  (min(o(j)+f_s+1, v-e)+g+2) 2^{e + \floor*{\frac{ max(r + v -o(j)-f_s-1+h_x, r+e+h_x)}{2}}}
\]
The above has the required divisibility unless $o(j)+f_s+1 = v$ (\ie $o(t') = r$), $e=h_x=0$, $v$ is even, and $r$ is odd, which we
now assume.
Yet due to that we get $t' = 0$, and therefore the slices $\p*{t'}$ and $\p*{2^{r-e}i}$ overlap. This means we land in the already-considered case \ref{itm:t0}.
\item $ 2f_x+3+m'_x > r-v+f_s > 2f_x+m'_x$\\
In this case we have
\[f_x = \ceil*{\frac{r-v+f_s-m'_x}{2}}-1. \]
Let us count size of the overlap between those two slices:
\[2^{r-f_x-k-1}f_s2^{e} =  f_s 2^{e+r-\ceil*{\frac{r-v+f_s-m'_x}{2}}+1-k-1} \]
\[=  f_s 2^{e+\floor*{\frac{r+v-f_s+h_x}{2}}}  \]
which gives the required divisibility, as $f_s \leq v-1$.
\item  $r-v+f_s \geq  2f_x+3+m'_x$ \\
For the slice
\[\p*{2^{2f_x+3+m'_x}i + q'2^{2f_x+m'_x} + t'}  \]
to overlap with the slice
\[\p*{ 2^{r-v+f_s+1}s+ 2^{r-v+f_s} } = \p*{ 2^{2f_x+4+m'_x+d}s+ 2^{2f_x+3+m'_x+d} } \]
(for certain $d\geq0$) it is necessary that
\[t' = 2^{2f_x+3+m'_x}j - q'2^{2f_x+m'_x}\]
in which case the slice for $f_x$ is just
\[\p*{2^{2f_x+3+m'_x}i} \]
and it is easy to notice that it overlaps with all slices for $f'_s$ such that $r-v+f'_s \geq 2f_x+3+m'_x$ and also with the slice
$\p*{2^{r-e}i}$. Fortunately, we already considered such a scenario in the case \ref{itm:x0}.
\end{enumerate}
\end{enumerate}
\end{enumerate}
\end{proof}

\begin{repCol} {osc} \emph{\textbf{(restated)}}

Let us work over a ring $\ZZ_{2^r}$. Let $P(x) = a_x x^2 + b_x x +c$ with $x$ being constrained to the domain $ x \in
\bigcup\limits_{j=0}^{2^q-1} \p*{l_x+  2^{r-q}j}$ for certain $l_x$ and $v \leq r$. Let $S$ be the following multiset:
\[
S = \left(\bigcup_{i=0}^{2^e-1} \bigcup_{f_s=0}^{v-1} f_s \bigcup_{s=0}^{2^{v-f_s-1}-1}  \p*{2^{r-e}i + 2^{r-v+f_s}(2s+1) } \right)
\cup \left(\bigcup_{i=0}^{2^e-1} (v+1) \p*{2^{r-e}i } \right),
\]
where $e \leq min(q, v)$. The number of elements of the intersection (understood as a ``multiplicative'' intersection) of the multiset $S$
and the image of $P$ is divisible by \[2^{e + min(q, v, \ceil*{\frac{r}{2}})} .
\]
\end{repCol}

\begin{proof}
   The proof is analogous to that of Corollary \ref{slices}, but using Lemma \ref{os} as the base.
\end{proof}

\section{Future work}
\pagestyle{plain}

As exemplified in Theorems \ref{generalb} and \ref{main}, the solution spaces have a specific structure, built 
around cosets of ideals. This describes a certain kind of symmetry of the solutions around the unit circle, which, as mentioned, is significant for properties of the $Z$-function and may have applications in computational complexity. We are certain that a lot of this structure is still left to be discovered,
especially extending Theorem \ref{main} to polynomials of any degree and rings over general composite numbers is desirable. Theorem
\ref{generalb} shows that for $\ZZ_m$ where $m = {p_1^{r_1}p_2^{r_2}\ldots p_{k}^{r_{k}}}$, there are symmetries in multiple
``dimensions'' (which result from the use of Chinese remaindering), one per
each $p_i$. We speculate about possible symmetries beyond those given by the decomposition into local rings.

A different angle on the problem, mentioned earlier in the introduction, is provided by restricting the arguments to subsets of the domain,
which elements are pairwise incongruent modulo a set prime ideal. This allows restriction of variables to, for example, $\p*{0,1}^n$.
This is studied very recently by Clark, Forrow and Schmitt \cite{cla14}, with the focus on lower-bounding the size of the gap between $0$,
and the second smallest solution number. Thanks to values we present in Table
\ref{tab:d2 front gap} we can experimentally see that results from \cite{cla14} are not optimal for composite $m$.
For example, for a single polynomial over $\ZZ_{2^q}$ having degree up to $2$ and with variables not restricted to any subset of the ring, they say that the
first gap is at least $2^{q(n-2)}$, where $n$ is the number of variables of the polynomial. For $q=4$ and $n=2$ it gives $1$ as minimum size of the gap,
whereas the gap is already $8$. For $n=3$, their bound is $16$, with the gap being $64$. Therefore, there is a room for improvement and future research on
establishing tighter bounds for this gap. Yet, encouraged by our experiments, we believe that even greater results may be obtained by focusing on the size of
the last gap instead (\ie the gap between the two largest possible numbers of solutions). If one would look at rings of size $r=2^q$ in Table \ref{tab:d2 back
gap}, the guess could be, that the size of this gap is $2^{(q-1)n}$ when $n\leq2$, and $2^{q(n-2)+2(q-1)}$ otherwise. Similar results look
plausible for $r=3^q$, yet for higher primes a large gap already shows for $n=1$, due to the degrees of the polynomials in the mentioned
table being just up to $2$. For rings that are not prime powers, the sizes of this gap seem (for small values of $n$) quite unintuitive,
which suggests that the formula governing that size is complicated. For larger values of $n$, and for all rings, the gap sizes seem to build on
previous values, via multiplication by the ring size. Arguably, the general formula for the last gap size may be simplest when the degree of
the polynomials is completely unbounded.

In Chapter \ref{ch:exp} we presented many concrete metrics of solution spaces for small rings and small polynomials (due to computational
limitations). From them we were able to notice certain properties, that may also be true in general cases. Some of them may be relatively
easy to prove, or are just quite interesting. We list three most intriguing ones, as following hypotheses:

\begin{enumerate}
  \item Polynomials of $n$ variables, of degree up to $2$, over finite fields of prime size $p$, can only have one of
  $2\floor*{\frac{n}{2}}+3$ numbers of solutions if $p=2$, and one of $4(\floor*{\frac{n}{2}}+1)$ when $p \geq 3$.
  \item Numbers of possible numbers of solutions of polynomials over rings $\ZZ_m$, of $n$ variables and degree up to $d$, can be bounded by
  a polynomial in $r, n$ and $d$. This is especially plausible when $m$ is prime or a prime power and $d$ is small.
  \item If $m = p_1^{r_1}, p_2^{r_2}, \ldots, p_k^{r_k}$, then the number of possible solution numbers of polynomials of $n$ variables, of
  degree up to $d$ over $\ZZ_m$, is a function of the numbers of possible solution numbers of polynomials of the same number of variables and
  degree, over each of $\ZZ_{p_i^{r_i}}$, $i < k$. For an example, see  $r = 2, 3, 6$ and $n=2, 3$ in Table \ref{tab:d2 slots used}.
\end{enumerate}

There are some more, potentially general, properties that we described in Chapter \ref{ch:exp}, and certainly many more than we did not
notice, or that require more experimental results to become noticeable. Yet, it certainly seems that this area is very rich in interesting,
open mathematical problems, which additionally, due to increasing understanding of polynomials behaviour over rings, have a potential to be
useful to complexity theory.

Despite the pathology of zero-divisors, we believe that the solution sets of polynomials modulo
composites should have a natural, attractive, and unifying theory.  Such work would seem relevant to
the prospects for progress in complexity lower bounds. We hope that this research promotes interest and strategies in expanding this
theory.

\ifdefined\phantomsection
\phantomsection  
\else
\fi
\addcontentsline{toc}{chapter}{Acknowledgments}
\section*{Acknowledgments}
We would like to thank Pete L.\ Clark for bringing  \cite{Dkat09} to our attention, which led us to \cite{mar75}, and for further comments on this draft.  We also thank him and Hui June Zhu and Todd Cochrane for consultation on the state of mathematical knowledge in the domain of this paper.


\bibliographystyle{plain} 

\ifdefined\phantomsection
\phantomsection  
\else
\fi
\addcontentsline{toc}{chapter}{References}
\bibliography{SuRe2014}

\end{document}